\documentclass[12pt]{article}
\usepackage[latin9]{inputenc}
\usepackage{amsmath}
\usepackage{amsthm}
\usepackage{amssymb}

\makeatletter

\newcommand*\LyXZeroWidthSpace{\hspace{0pt}}
\providecommand{\tabularnewline}{\\}

\theoremstyle{plain}
\newtheorem{thm}{\protect\theoremname}
\theoremstyle{definition}
\newtheorem{defn}[thm]{\protect\definitionname}
\theoremstyle{plain}
\newtheorem{fact}[thm]{\protect\factname}
\theoremstyle{theorem}
\newtheorem{claim}[thm]{\protect\claimname}
\theoremstyle{plain}
\newtheorem{lem}[thm]{\protect\lemmaname}
\theoremstyle{plain}
\newtheorem*{lem*}{\protect\lemmaname}
\theoremstyle{plain}
\newtheorem{prop}[thm]{\protect\propositionname}
\theoremstyle{theorem}
\newtheorem{remark}[thm]{\protect\remarkname}


\usepackage{fullpage}
\usepackage{amsthm}

\usepackage{bibspacing}

\usepackage{graphicx} 
\usepackage{tikz}
\usetikzlibrary{trees,calc}
\usepackage{caption}
\usetikzlibrary{fit}
\tikzset{%
  highlight/.style={fill=red!30,fill,thick,inner sep=0pt}
}

\hbadness20000 \vbadness20000

\makeatother

\providecommand{\claimname}{Claim}
\providecommand{\definitionname}{Definition}
\providecommand{\factname}{Fact}
\providecommand{\lemmaname}{Lemma}
\providecommand{\propositionname}{Proposition}
\providecommand{\remarkname}{Remark}
\providecommand{\theoremname}{Theorem}

\begin{document}
\global\long\def\poly{\mathrm{{poly}}}%
\global\long\def\polylog{\mathrm{{polylog}}}%

\global\long\def\zo{\mathrm{\{0,1\}}}%

\global\long\def\mo{\mathrm{{-1,1}}}%

\global\long\def\e{\mathrm{\epsilon}}%

\global\long\def\d{\mathrm{\delta}}%

\global\long\def\a{\alpha}%

\global\long\def\b{\beta}%

\global\long\def\eps{\mathrm{\epsilon}}%

\global\long\def\E{\mathop{{}\mathbb{E}}}%

\global\long\def\F{\mathrm{\mathbb{F}}}%

\global\long\def\Z{\mathrm{\mathbb{Z}}}%

\global\long\def\R{\mathrm{\mathbb{R}}}%

\global\long\def\P{\mathrm{\mathbb{P}}}%

\global\long\def\sign{\mathrm{{Sign}}}%

\global\long\def\maj{\mathrm{{Maj}}}%

\title{Resilient functions: Optimized, simplified, and generalized}
\author{Peter Ivanov$^{*}$ \and Emanuele Viola\thanks{Partially supported by NSF grant CCF-2114116.}}
\maketitle
\begin{abstract}
An $n$-bit boolean function is resilient to coalitions of size $q$
if any fixed set of $q$ bits is unlikely to influence the function
when the other $n-q$ bits are chosen uniformly. We give explicit
constructions of depth-3 circuits that are resilient to coalitions
of size $cn/\log^{2}n$ with bias $n^{-c}$. Previous explicit constructions
with the same resilience had constant bias. Our construction is simpler
and we generalize it to biased product distributions. 

Our proof builds on previous work; the main differences are the use
of a tail bound for expander walks in combination with a refined analysis
based on Janson's inequality.
\end{abstract}

\section{Introduction}

A resilient function, informally speaking, is a function for which
a malicious adversary that controls a small coalition of the input
bits can not change the output with high probability. Resilient functions
have a wide range of applications. They were initially introduced
for \emph{coin flipping protocols} \cite{Ben-OrLi85,AL93,RZ98}. They
have also been used to construct \emph{randomness extractors} \cite{KampZ07,GVW-AC0ext,ChattopadhyayZ16,Meka17,ChengL16,HIV22}
and more recently, to show \emph{correlation bounds} against low-degree
$\mathbb{F}_{2}$ polynomials \cite{DBLP:conf/stoc/ChattopadhyayHH20}. 

Our main contribution is an improved and simplified construction that
generalizes to product distributions. Before we present our results
we introduce some definitions. 
\begin{defn}
\label{def:influence} Fix a function $f:\zo^{n}\to\zo$, a distribution
$D$ over $\zo^{n}$, and a coalition $Q\subseteq[n]$. Define $I_{Q,D}(f)$
to be the probability that $f$ is not fixed after the bits indexed
by $\overline{Q}$ are sampled according to $D$. When $D$ is the
uniform distribution we write $I_{Q}(f)$. 

We say $f$ is\emph{ $\rho$-tradeoff resilient} \emph{under} $D$
if for any $Q\subseteq[n],I_{Q,D}(f)\leq|Q|\rho$. We simply say $f$
is\emph{ $\rho$-tradeoff resilient }when $D$ is the uniform distribution.
\end{defn}

In this paper, every occurrence of $``c"$ denotes a possibly different
positive real number. Replacing $``c"$ with $O(1)$ everywhere is
consistent with one common interpretation of the big-Oh notation. 

First we provide non-explicit tradeoff resilient circuits.
\begin{thm}
\label{thm:resilient_balanced-main} For infinitely many $n$ there
exist monotone depth 3, size $n^{c}$ circuits $C:\zo^{n}\to\zo$
which are $\left(\frac{c\log^{2}n}{n}\right)$-tradeoff resilient
with bias $n^{-1+o(1)}.$
\end{thm}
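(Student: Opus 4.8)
The plan is to analyze a random depth-3 monotone circuit of the Ajtai--Linial type and to push all the relevant probability estimates through with enough precision to pin down both the bias and the resilience. I would take $C$ to be an \textsf{AND} of $m$ sub-circuits $A_{1},\dots,A_{m}$, each $A_{i}$ an \textsf{OR} of $s$ clauses, and each clause an \textsf{AND} of $b$ variables drawn uniformly and independently from $[n]$. With $b=\Theta(\log n)$ chosen so that $2^{-b}=\Theta(1/n)$, with $s=\Theta(n\log n)$, and with $m=\Theta(n/\log^{3}n)$, this circuit is monotone, has depth $3$, and has size $msb=\poly(n)$. The one structural choice that matters, relative to ``just Ajtai--Linial,'' is that the top fan-in $m$ is \emph{large}: this makes it unlikely that any single $A_{i}$ is pivotal for $C$, which is exactly what is needed to control the influence of \emph{small} coalitions and hence get the strong (tradeoff) form of resilience down to $|Q|=\Theta(\log n)$.

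\emph{Bias.} Write $\bar p=\Pr_{x}[A_{i}(x)=1]$ (the same for every $i$ in distribution), so that $\Pr_{x}[C(x)=1]\approx\bar p^{m}$. As the number $s$ of clauses per block is incremented, each of the $m$ new clauses is pivotal for $C$ only with probability $O(2^{-b}(1-\bar p))=O((1-\bar p)/n)$, so $\Pr_{x}[C(x)=1]$ changes by $O(m(1-\bar p)/n)=O(1/n)$ per step, while it moves continuously from near $0$ (small $s$) up through near $1$ (large $s$); choosing $s$ at the crossing and paying a trivial union bound over the $\polylog n$ relevant values of $s$ to keep resilience gives $|\Pr_{x}[C(x)=1]-\tfrac12|\le n^{-1+o(1)}$. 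Alternatively one can compute $\Pr_{x}[C(x)=1]$ directly: Janson's inequality, applied inside a block to the increasing events ``clause $j$ is all-ones,'' shows $\Pr_{x}[A_{i}(x)=0]$ equals $e^{-s2^{-b}}$ up to a factor $1\pm n^{-1+o(1)}$ --- its dependency term is $O(s^{2}b^{2}2^{-2b}/n)=\Theta(\log^{4}n/n)$ because two random $b$-clauses overlap with probability $O(b^{2}/n)$ --- and the top-level product over the heavily overlapping blocks is then controlled by a bounded-dependence estimate in which each conditioning $\{A_{i}=1\}$ shifts the others' probabilities by only $n^{-2+o(1)}$.

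\emph{Resilience.} Fix a coalition $Q$. Since $C$ is monotone, ``$C$ is not fixed by $x_{\overline Q}$'' is the event that no $A_{i}$ is already dead under $x_{\overline Q}$ (with $Q$ set to $0$) yet some $A_{i}$ becomes $1$ once $Q$ is set to $1$. I would estimate this by first estimating, for each block, the probabilities of being dead / already-one / undetermined --- each of the form $e^{-s\cdot p}$ with $p$ governed by how much of $Q$ a width-$b$ clause meets, made sharp by Janson --- and then combining over blocks. The main term comes out to $(\text{const})\cdot|Q|\log^{2}n/n$, which sits below $|Q|\rho$ with $\rho=c\log^{2}n/n$ for a suitable constant $c$; one also checks that the most dangerous adversarial coalitions (for instance $Q$ covering one clause of each of $m/k$ blocks) satisfy $I_{Q}(C)\le|Q|\rho$. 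To upgrade this expectation bound to a statement holding simultaneously for all $\le n^{q}=2^{o(n)}$ coalitions, I would bound the moments $\E_{W}[I_{Q}(C)^{k}]$ over the random wiring $W$ for $k=\Theta(|Q|\log n)$: writing $I_{Q}(C)^{k}$ as a probability over $k$ independent restrictions $x^{1},\dots,x^{k}$ and using correlation bounds (again of Janson type) to show the $k$ ``undetermined'' events are nearly independent over $W$, one obtains $\E_{W}[I_{Q}(C)^{k}]\le\bigl(O(\E_{W}[I_{Q}(C)])\bigr)^{k}$, and a $k$-th-moment Markov bound then beats $\binom{n}{q}$.

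\emph{Main obstacle.} The difficulty, and the reason prior work obtained only constant bias, is precision: all of the estimates above --- the within-block Janson estimate, the top-level product, and the moment/correlation bounds needed for the union bound --- must be carried out with $1+o(n^{-1})$ relative accuracy rather than only up to constants, so one has to verify that the heavy overlap among clauses and among blocks really does contribute only at the $n^{-1+o(1)}$ level throughout. Making the top-level aggregation of $\Theta(n/\polylog n)$ strongly dependent sub-circuits sharp --- precisely where Janson's inequality is \emph{not} available --- while retaining the slack needed to place $\Pr_{x}[C(x)=1]$ within $n^{-1+o(1)}$ of $\tfrac12$, is where the bulk of the work lies; for the explicit version one must in addition feed an expander-walk tail bound through every one of these estimates.
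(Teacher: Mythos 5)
Your proposal follows the general Ajtai--Linial template of an AND of ORs of random small ANDs, but both the construction and the analysis differ from the paper's, and two of your critical steps are not carried out.

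On the construction: the paper does not draw $s$ overlapping random $b$-clauses per block. It identifies $[n]$ with a $v\times w$ grid ($w\approx\log n$, $v\approx n/\log n$) and builds each sub-circuit $C_{G(i)}$ from a partition of $[n]$ into $v$ disjoint size-$w$ sets, so that every $C_{G(i)}$ is a read-once OR$_v$-AND$_w$ formula on all $n$ bits. The only randomness is a generator $G:[u]\to[v]^w$ choosing $u=\poly(n)$ such partitions, and from a random $G$ the paper extracts two deterministic properties (Lemma~\ref{lem:exists-shifted-sampler-1}): a sampler property and a $\lfloor w/2\rfloor$-design property, which then imply resilience and small bias with no further randomness over the wiring.

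On resilience: because each $C_{G(i)}$ is read-once, the events ``every clause of block $i$ disjoint from $Q$ is $0$'' and ``some $Q$-intersecting clause of block $i$ is undetermined'' are independent, so $\Pr[C_{G(i)}\text{ undetermined}]$ factors cleanly, and summing over $i$ reduces to bounding $\sum_j\E_{i}\bigl[2^{|S(G(i),j)\cap Q|}\mathbf{1}_{|S(G(i),j)\cap Q|\neq 0}\bigr]$, which the sampler property handles directly (Lemma~\ref{lem:resilience-1-1}). This is a one-shot argument with no union bound over the $\binom{n}{q}$ coalitions. Your alternative --- bounding $\E_W[I_Q(C)^k]$ over the random wiring for $k=\Theta(|Q|\log n)$ and then a Markov bound plus union bound over $Q$ --- is in spirit the Russell--Zuckerman / Wellens route, which the paper notes falls short precisely for small coalitions (RZ only prove the tradeoff for $|Q|\ge n/\log^c n$; Wellens gets $c\log n/\sqrt n$ at $|Q|=1$). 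The claim that the $k$ events ``$C$ not fixed by $x^j_{\overline{Q}}$'' are nearly independent over the shared wiring $W$ is the hard step, and ``Janson-type correlation bounds'' do not immediately give it: the higher $W$-moments of $I_Q(C)$ are dominated by rare wiring configurations (e.g.\ a clause falling mostly inside $Q$) that are fully correlated across all $k$ samples $x^1,\dots,x^k$.

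On bias: the read-once structure means no Janson is needed within a block ($\Pr[C_{G(i)}=0]=(1-2^{-w})^v$ exactly), so your within-block $\Theta(\log^4 n/n)$ correction disappears. The real issue, which you flag, is the product over the $u$ heavily dependent blocks, and your ``bounded-dependence estimate / $n^{-2+o(1)}$ conditioning shift'' is a target rather than an argument --- it is unclear how to iterate it over $\Theta(u)$ conditionings without the errors compounding. The paper applies Bonferroni's inequality at the top level and, for each order-$k$ term, the refined Janson bound of Proposition~\ref{lem:janson}, namely $\Pr[C=0]\le\prod_i\Pr[C_i=0]\cdot\bigl(1+\sum_{\ell}2^\ell\Delta^\ell/\ell!\bigr)$, in place of the usual $\prod_i\Pr[C_i=0]\cdot e^{2\Delta}$. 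This refinement is the crux: the linear term $\E_T[\Delta(C_{G(T)})]\le n^{-1+o(1)}$ is controlled via the sampler property (Lemma~\ref{lem:post-janson-1}), and the higher powers via the design property (Lemma~\ref{cor:max-power}); using the standard $e^{2\Delta}$ form with a $w/2$-design gives bias only $\approx 2^{-w/2}=n^{-1/2+o(1)}$. Your sketch does not contain this separation, and it is exactly the step that closes the gap from constant (or $n^{-1/2}$) bias to $n^{-1+o(1)}$.
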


This result is considered folklore, but we are not aware of any proofs
in the literature. Related works \cite{Ben-OrLi85,AL93,RZ98,ChattopadhyayZ16,Meka17,Wellens20}
either do not prove a full tradeoff, are not as balanced, or have
worse resilience. 

The construction in Theorem \ref{thm:resilient_balanced-main} is
due to Ajtai and Linial \cite{AL93}. In their seminal work, they
proved their construction is resilient to coalitions of size $cn/\log^{2}n$
which is close to the theoretical best $cn/\log n$ implied by the
KKL Theorem \cite{KahnKL88}. However, Ajtai and Linial did not prove
any tradeoff resilience. Russell and Zuckerman \cite{RZ98} then showed
the Ajtai-Linial construction has tradeoff resilience for coalitions
of size $n/\log^{c}n$, but not for smaller coalitions. There is a
more recent exposition \cite{Wellens20} on the topic; however, it
proves a tradeoff that is weaker. In particular, for coalitions of
constant size the resulting influence will be $c\log n/\sqrt{n}$
instead of $c\log^{2}n/n$. 

Next we state our main result, which is an explicit tradeoff resilient
function that generalizes to product distributions, improves the bias
to essentially match the non-explicit construction, and has a simplified
proof. 
\begin{defn}
$B_{\sigma}$ denotes the distribution over $\zo^{n}$ where each
bit is independently set to 1 with probability $\sigma$. 
\end{defn}

\begin{thm}
\label{thm:resilient_explicit} Fix integers $w\leq v\leq n$ where
$v$ is prime and $\sigma\in(0,1/2]$ s.t. $n=vw$ and $\sigma^{-w}=\frac{Cv}{\log v}$
for a fixed constant $C$. Then there are explicit monotone depth
3, size $n^{c}$ circuits $C:\zo^{n}\to\zo$ which are $\left(\frac{c\sigma^{-2}}{\log(\sigma^{-1})}\cdot\frac{\log^{2}n}{n}\right)$-tradeoff
resilient under $B_{\sigma}$ with bias $\sigma^{-2}/n^{1-o(1)}.$ 
\end{thm}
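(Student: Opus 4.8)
The plan is to give an explicit, arithmetized version of the Ajtai--Linial construction and to analyze it by combining Janson's inequality with a Chernoff-type tail bound for expander walks. Concretely, the circuit is $C=\bigwedge_{k=1}^{t}D_{k}$, where each $D_{k}=\bigvee_{i=1}^{v}A_{k,i}$ is a DNF whose $v$ terms $A_{k,i}$ are conjunctions of $w$ of the $n$ variables, chosen so that for each fixed $k$ the supports of $A_{k,1},\dots,A_{k,v}$ form a \emph{partition} of $[n]$ into $v$ blocks of size $w$; using that $v$ is prime, these blocks are the ``affine lines of slope $k$'' in the grid $[w]\times\F_{v}$ (equivalently, vertex sets of length-$w$ walks on an explicit expander), which guarantees that two blocks coming from different $D_{k}$'s share at most one variable. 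Because the terms of $D_{k}$ partition $[n]$, under $B_{\sigma}$ we have $\Pr[D_{k}=0]=(1-\sigma^{w})^{v}$ \emph{exactly}; the hypothesis $\sigma^{-w}=Cv/\log v$ makes this $\approx v^{-1/C}$, and we pick the number of DNFs to be $t=\Theta(v^{1/C})=\poly(n)$ so that $t\cdot(1-\sigma^{w})^{v}\approx\ln 2$. The resulting $C$ is monotone, depth $3$, and of size $tvw=n^{O(1)}$; the regime where $C$ is a constant (and not too large, say $C\le 1$) is what ultimately makes both the bias and the small-coalition resilience bounds go through.

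\emph{Bias.} Write $\Pr[C=1]=1-\Pr[\bigcup_{k}\{D_{k}=0\}]$. The events $\{D_{k}=0\}$ have probabilities summing to $\approx\ln 2$, and since any two distinct $D_{k},D_{k'}$ have terms that pairwise overlap in at most one variable, Janson's inequality applied to the $2v$ relevant term-satisfaction events pins down each pairwise probability $\Pr[D_{k}=0\wedge D_{k'}=0]$ to within a factor $1\pm o(1)$ of $\Pr[D_{k}=0]\Pr[D_{k'}=0]$, the relative error being $O(\Delta_{k,k'})$ with $\Delta_{k,k'}=\sum_{i,j:\ \mathrm{supp}(A_{k,i})\cap\mathrm{supp}(A_{k',j})\neq\emptyset}\Pr[A_{k,i}\wedge A_{k',j}]\lesssim vw\sigma^{2w-1}$. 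Summing the resulting covariances over the $\binom{t}{2}$ pairs and doing a Poisson/inclusion--exclusion estimate gives $\Pr[C=1]=\tfrac12\pm O(\sigma^{-1}w\log^{2}v/v)=\tfrac12\pm\sigma^{-2}/n^{1-o(1)}$; if an exact bias of $0$ is wanted one nudges $\Pr[C=1]$ in steps of relative size $v^{-1-1/C}$ by letting a few $D_{k}$'s have $v\pm1$ terms. The same Janson estimate, specialized to a singleton $Q$, bounds the influence of a single coordinate by $\sigma^{-2}/n^{1-o(1)}$.

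\emph{Resilience.} Fix $Q$ with $|Q|=q$; by monotonicity $I_{Q,B_{\sigma}}(C)=\Pr[C^{Q\to1}=1]-\Pr[C^{Q\to0}=1]$. Telescoping over $k$ and using again that the indicators $\mathbf 1[D_{k}=1]$ are nearly independent and conditionally bounded away from $0$ (this is where $C\le1$ is used, to keep the accumulated factor $e^{O(\sum_k I_Q(D_k))}$ bounded), one reduces to $I_{Q,B_{\sigma}}(C)\le O(1)\cdot\sum_{k}I_{Q,B_{\sigma}}(D_{k})$ up to lower-order cross terms. For a single DNF, $I_{Q}(D_{k})=\Pr[D_{k}^{Q\to0}=0]-\Pr[D_{k}^{Q\to1}=0]\le e^{-\mu_{k}^{-}}(\mu_{k}^{+}-\mu_{k}^{-})$ by Janson, where $\mu_{k}^{\pm}$ are the conditional expected numbers of satisfied terms after setting $Q$ to $1$ resp.\ $0$; the ``advantage'' $\mu_{k}^{+}-\mu_{k}^{-}$ is at most $\sum_{a\ge1}\#\{i:|\mathrm{supp}(A_{k,i})\cap Q|=a\}\cdot\sigma^{w-a}$, and the number of terms meeting $Q$ in $\ge a$ coordinates is bounded, via the tail bound for expander walks, by $v$ times the Chernoff tail of $\mathrm{Bin}(w,q/n)$ at $a$. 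This yields $\mu_{k}^{+}-\mu_{k}^{-}\lesssim v\sigma^{w}((1+q/(n\sigma))^{w}-1)\approx qw\sigma^{-1}\log v/n$ for all but $O(\polylog n)$ of the $D_{k}$ (the exceptional ones being those whose slope can be ``aligned'' with $Q$), and summing over $k$ gives the $q$-linear bound $q\rho$. The exceptional DNFs, in which $Q$ can be made to contain whole terms and thereby force the DNF with probability $\approx v^{-1/C}$, contribute a $q$-independent amount $\approx v^{-1/C}\le v^{-1}\le w\rho$ whenever $q\ge w$, while for $q<w$ no term of any $D_{k}$ is contained in $Q$ and there are no exceptional DNFs.

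The main obstacle is precisely the interaction between these exceptional, ``aligned'' coalitions and the accumulation of errors over the $t=\poly(n)$ DNFs. The expander-walk tail bound is what makes alignment both rare and quantitative for an \emph{arbitrary} adversarial $Q$ (rather than a random one), and the two-sided Janson estimates are what keep the per-DNF advantage, the cross-DNF covariances, and the telescoping correction small enough to reach the $\sigma^{-2}/n^{1-o(1)}$-level bias and the full $q\rho$-tradeoff simultaneously; balancing this against the constraint $\sigma^{-w}=\Theta(v/\log v)$ — which tunes each $D_{k}$ so that its ``bad event'' has probability just below $1/v$ — and against the requirement that $C$ be a constant is the delicate part of the argument.
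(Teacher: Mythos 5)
Your high‐level plan (an Ajtai–Linial-style AND of partition-TRIBES, Janson's inequality for the bias, a sampler-type moment/tail bound for the resilience) matches the paper, but the concrete construction you describe does not actually have the property your analysis then relies on, and you miss the technical refinement that drives the improved bias.

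The most serious issue is the parenthetical ``affine lines of slope $k$ \dots (equivalently, vertex sets of length-$w$ walks on an explicit expander).'' These are not equivalent, and the difference is exactly the crux. The affine-line family (a degree-$1$ Reed--Solomon code with $u=v$ slopes) is a very strong \emph{design}: any two blocks of different slopes share at most one variable. But it is \emph{not} a sampler in the sense of Definition~\ref{def:sampler}. Take the coalition $Q$ to be itself a line of slope $c$; the induced test $F$ then satisfies $F(G(c))=w$ and $F(G(i))\le 1$ otherwise, so $\E_{i\in[u]}[\sigma^{-F(G(i))}\mathbf{1}_{F\neq0}]\ge\sigma^{-w}/v\approx1/\log v$, which is exponentially larger than the required $c\beta\sigma^{-1}\mu\approx\sigma^{-2}w/v$. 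In the same example, the ``Chernoff tail of $\mathrm{Bin}(w,q/n)$'' you invoke for the number of terms of $D_k$ meeting $Q$ in $\ge a$ coordinates fails maximally: for $k=c$ the terms meet $Q$ in either $0$ or $w$ coordinates. Conversely, expander walks \emph{are} a good sampler (Theorem~\ref{thm:sampler-walk}), but a terrible design, since two walks can differ in a single step. The paper therefore \emph{concatenates} a length-$(w-O(\log\log n))$ expander walk with a \emph{short} constant-degree Reed--Solomon suffix of length $\Theta(\log\log n/\log\sigma^{-1})$; the walk gives the $(\sigma^{-1},\sigma^{-1})$-sampler (after a Cauchy--Schwarz step, which is the source of the extra $\sigma^{-1}$), and the suffix gives a $\Theta(\log\log n/\log\sigma^{-1})$-design. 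Your proposal collapses these two ingredients into one and thereby loses the sampler property, and the whole resilience estimate built on ``Chernoff-type tails for terms meeting $Q$'' is unsupported. (Your affine-line variant also forces $u\le v$ and hence $C\ge 1$; the remark that $C\le 1$ is needed is inconsistent with that.)

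The second gap is that you do not isolate the refinement of Janson's inequality that makes $n^{-1+o(1)}$ bias possible. The paper (Proposition~\ref{lem:janson}) replaces the usual multiplicative error $e^{2\Delta}$ by $1+\sum_{\ell\ge1}\tfrac{2^\ell}{\ell!}\Delta^\ell$, and then splits the error: the linear term $\E_T[\Delta(C_{G(T)})]$ is bounded to $n^{-1+o(1)}$ by the \emph{sampler} property (Lemma~\ref{lem:post-janson-1}), while all higher powers are kept $O(1)$ by the \emph{design} property (Lemma~\ref{cor:max-power}). With the standard $e^{2\Delta}$ bound plus the short design, one only gets constant bias (as in \cite{Meka17}); with a long enough design to win with $e^{2\Delta}$, the sampler property is lost. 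Your ``two-sided Janson estimates with relative error $1\pm o(1)$ per pair'' do not give this quantitative split. Finally, your telescoping $I_Q(C)\lesssim\sum_k I_Q(D_k)$ via near-independence of the $D_k$, with another Janson invocation per $D_k$, is an unnecessary detour: the paper's argument is a plain union bound $I_{Q,B_\sigma}(C_G)\le\sum_i\P[E_i]\P[F_i]$, with $\P[F_i]$ controlled by exactly the moment-generating-function expression the sampler property bounds.
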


Let us provide some background on existing explicit resilient functions.
It is folklore that Majority is $c/\sqrt{n}$-tradeoff resilient.
Building on this, Ben-Or and Linial showed that Recursive-Majority-3
is $n^{-0.63..}$-tradeoff resilient \cite{Ben-OrLi85}.

More recently, Chattopadhyay and Zuckerman gave an\emph{ }explicit
$n^{-0.99}$-tradeoff resilient function with bias $n^{-c}$ \cite{ChattopadhyayZ16}.
Moreover, they derandomized the original Ajtai-Linial construction
so their function is computable by a small circuit. This was a key
part of their two-source extractor breakthrough. 

Meka \cite{Meka17} then improved the derandomization and achieved
$c\log^{2}n/n$-tradeoff resilience, but with constant bias. \cite{IMV23}
gave a generic way to compose tradeoff resilient circuits and this
allowed for nearly optimal size circuits with tradeoff resilience
$\log^{c}n/n$. And using this composition result, one can xor $c\log n$
copies of Meka's construction to achieve circuits with tradeoff resilience
$c\log^{3}/n$ and bias $n^{-c}$.

We remark all of the results stated above are over the uniform distribution.
A natural question is to consider resilience over non-uniform distributions
like product distributions. Such distributions arise naturally; for
instance, in a voting scenario one might consider the votes being
cast independently but not uniformly. 

The recent work of \cite{FHHHZ19} investigates this question by studying
how resilient \emph{any function }can be under arbitrary product distributions.
They show that in this setting, the KKL theorem essentially still
holds. Specifically, they prove a function can not be resilient to
coalitions of size $>cn\log\log n/\log n$ under any product distribution.

We complement their result by extending the Ajtai-Linial construction
to product distributions of the form $B_{\sigma}$, though with some
loss in resilience depending on $\sigma$. Its not clear whether this
dependence is necessary. Besides complementing \cite{FHHHZ19}, our
work might be useful for further improved resilient constructions;
see Section \ref{subsec:future-directions} for additional discussion.

Next we provide a table with known resilient circuit constructions
under $B_{1/2}$.

\LyXZeroWidthSpace{}

\begin{tabular}{|c|c|c|c|c|c|c|c|}
\hline 
Explicit: & Resilience: & Tradeoff: & Bias: & Monotone: & Depth: & Size: & Citation\tabularnewline
\hline 
Yes & $cn^{-1/2}$ & Yes & $0$ & Yes & $c\ensuremath{\log n}$ & $n^{c}$ & Folklore\tabularnewline
\hline 
Yes & $n^{-0.63...}$ & Yes & $0$ & Yes & $c\ensuremath{\log n}$ & $cn$ & \cite{Ben-OrLi85}\tabularnewline
\hline 
No & $c\log^{2}n/n$ & No & $n^{-c}$ & No & 3 & $cn^{2}$ & \cite{AL93}\tabularnewline
\hline 
No & $c\log^{2}n/n$ & Partial & $n^{-c}$ & No & 3 & $cn^{2}$ & \cite{RZ98}\tabularnewline
\hline 
No & $c\log^{2}n/n$ & Partial & $n^{-c}$ & No & 3 & $cn^{2}$ & \cite{Wellens20}\tabularnewline
\hline 
No & $c\log^{2}n/n$ & Yes & $n^{-c}$ & Yes & 3 & $n^{c}$ & Theorem \ref{thm:resilient_balanced-main}\tabularnewline
\hline 
Yes & $n^{-0.99}$ & Yes & $n^{-c}$ & Yes & 4 & $n^{c}$ & \cite{ChattopadhyayZ16}\tabularnewline
\hline 
Yes & $c\log^{2}n/n$ & Yes & $c$ & Yes & 3 & $n^{c}$ & \cite{Meka17}\tabularnewline
\hline 
Yes & $\log^{c}n/n$ & Yes & $n^{-c}$ & No & $c$ & $n^{1.01}$ & \cite{IMV23}\tabularnewline
\hline 
Yes & $c\log^{3}n/n$ & Yes & $n^{-c}$ & No & $5$ & $n^{c}$ & \cite{IMV23}\tabularnewline
\hline 
Yes & $c\log^{2}n/n$ & Yes & $n^{-c}$ & Yes & 3 & $n^{c}$ & Theorem \ref{thm:resilient_explicit}\tabularnewline
\hline 
\end{tabular}

\LyXZeroWidthSpace{}

Our next contribution is an explicit circuit that matches the KKL
Theorem up to constant factors and is exactly balanced under $B_{1/2}$.
As far as we know, only nearly balanced circuits were known prior
to this work.
\begin{thm}
\label{thm:match-kkl}For infinitely many $n$, there is an explicit
circuit $C:\zo^{n}\to\zo$ such that $I_{Q}(C)\leq c\log n/n$ for
any $Q\subseteq[n]:|Q|=1$ and $\E[C]=1/2$.
\end{thm}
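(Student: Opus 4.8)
I would start from Tribes, which is explicit and, matching KKL, has every single-bit influence of order $\log n/n$, and then correct its (tiny) bias to exactly $1/2$. Fix a large integer $w$, let $s$ be the least integer with $(1-2^{-w})^{s}<1/2$, and put $n=sw$; let $T\colon\zo^{n}\to\zo$ be the Tribes \textsc{dnf} with $s$ terms, each an \textsc{and} of a distinct block of $w$ variables. Then $\E[T]=1-(1-2^{-w})^{s}=\tfrac12+\delta$ with $0<\delta<2^{-w-1}$: the lower bound is the defining property of $s$ (never met with equality, since $(2^{w}-1)^{s}=2^{sw-1}$ fails by parity), and the upper bound follows from $(1-2^{-w})^{s-1}\ge\tfrac12$. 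A one-line computation gives $I_{i}(T)=2^{-(w-1)}(1-2^{-w})^{s-1}\le 2^{1-w}$ for every coordinate $i$, and since $n=sw=\Theta(2^{w}w)$ with $w=\Theta(\log n)$, we have $2^{w}=\Theta(n/\log n)$, i.e.\ $2^{-w}=\Theta(\log n/n)$. Finally set $K:=\delta\,2^{n}=2^{n-1}-(2^{w}-1)^{s}$, an integer with $0<K<2^{n-w-1}$.

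For the correction, let $g\colon\zo^{n}\to\zo$ be the indicator that the $w$ variables of the first term are all $1$ \emph{and} the integer encoded by the remaining $n-w$ coordinates is $<K$; thus $g=A\wedge B$ with $A$ depending only on the first block and $B$ (a comparator) only on the other $s-1$ blocks. Its accepting set $S:=g^{-1}(1)$ has exactly $K$ points and is contained in $\{x:\text{the first term of }T\text{ is satisfied}\}\subseteq T^{-1}(1)$. I then take $C:=T\wedge\lnot g$. Since $S\subseteq T^{-1}(1)$ we get $C^{-1}(1)=T^{-1}(1)\setminus S$, so $|C^{-1}(1)|=\bigl(2^{n}-(2^{w}-1)^{s}\bigr)-K=2^{n-1}$ and $\E[C]=\tfrac12$ exactly.

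For the influences, a flip of $x_{i}$ changes $C=T\wedge\lnot g$ only if it changes $T$ or changes $g$, so $I_{i}(C)\le I_{i}(T)+I_{i}(g)\le 2^{1-w}+I_{i}(g)$. If $i$ is in the first block, $g$ is $x_{i}$-pivotal only when the other $w-1$ variables of that block are $1$ and $B=1$, an event of probability $2^{-(w-1)}(K/2^{n-w})\le 2^{1-w}$; if $i$ lies outside the first block, then $g=0$ unless $A=1$, an event of probability $2^{-w}$ independent of the rest, so $I_{i}(g)\le 2^{-w}$ no matter how sensitive the comparator $B$ is. Hence $I_{i}(C)\le 4\cdot 2^{-w}=O(\log n/n)$ for every $i$, which is the claim for singleton $Q$; $C$ is explicit of polynomial size, and letting $w\to\infty$ gives infinitely many $n$. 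The crux — and the only point needing care — is making the correction simultaneously \emph{exact} and \emph{low-influence}: exactness rules out every read-once monotone structure (each has $|T^{-1}(1)|=2^{n}-\prod_{j}(2^{k_{j}}-1)$, never $2^{n-1}$), forcing a ``comparator'' perturbation whose accepting set has a prescribed, in fact odd, size; low influence forces that perturbation to be gated by a rare event; hiding the comparator inside one term's subcube, $g=A\wedge B$ with $\Pr[A=1]=2^{-w}$, does both at once, since it places $S$ inside $T^{-1}(1)$ (keeping the bias count clean) and caps the influence of every coordinate on $g$ by $2^{-w}$.
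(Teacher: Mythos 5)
Your proof is correct, and it takes a genuinely different route from the paper. The paper invokes Lemma~\ref{prop:al-trick} (the Ajtai--Linial trick): it takes \emph{two} Tribes circuits on $n$ bits each, a near-balanced $C_{1}$ and a heavily biased $C_{2}$, and forms $C'(x,y)=(C_{1}(x)\wedge C_{2}(y))\vee(D(x)\wedge\neg C_{2}(y))$ on $2n$ bits, where $D$ is a DNF with an exactly computed dyadic bias; the resulting influence bound carries the extra additive term $\P[C_{2}=0]$, which forces a careful choice of $C_{2}$. You instead stay on $n$ bits with a \emph{single} Tribes $T$ and correct it in place by excising exactly $K=\delta\cdot2^{n}$ points from $T^{-1}(1)$, taking $C:=T\wedge\neg g$ with $g=A\wedge B$ a comparator gated inside one term's subcube. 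Your key observation is that gating the comparator by the rare event $A$ (probability $2^{-w}$) neutralizes the comparator's intrinsically high-influence coordinate (its MSB has influence $\Theta(1)$), while for bits inside the gating block the usual AND structure already gives a $2^{-(w-1)}$ factor; and that placing $g^{-1}(1)$ inside the first term's subcube makes the counting clean since $g^{-1}(1)\subseteq T^{-1}(1)$. Each approach buys something: yours is more direct and yields a shallower (depth-3 vs.\ depth-4) circuit without doubling the variable count; the paper's Lemma~\ref{prop:al-trick} is modular and applies black-box to balance any near-balanced resilient circuit, at the cost of the extra $\P[C_{2}=0]$ term and two extra layers. One small arithmetic note: your case analysis actually gives $I_{i}(C)\le 2^{1-w}+2^{-w}=3\cdot2^{-w}$ rather than $4\cdot2^{-w}$, but either way it is $O(\log n/n)$.
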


To prove this we use a result by Ajtai and Linial which gives a generic
way to turn a nearly balanced circuit into an exactly balanced one,
without hurting the resilience too much. However, it comes at a cost
to the depth and monotonicity of the original circuit. 

\subsection{Proof of Theorem \ref{thm:resilient_explicit}}

The final construction is an Ajtai-Linial style AND of TRIBES circuit,
which will be defined through a generator $G$. The bulk of the proof
is showing that the resulting circuit will be 1) resilient and 2)
nearly balanced whenever $G$ is a sampler and a design. 1) follows
from the sampler property, while both properties are needed to prove
2). Our starting point is an expander walk which is a sampler by known
tail bounds. However, an expander walk does not form a good design
since two different walks can differ in just one node. Thus we pad
the expander walk with a short Reed-Solomon code which has good design
properties. This construction achieves a polynomial size domain which
results in a polynomial size circuit.

\paragraph*{Constructing a circuit through $G$}

First we set up notation. We identify $n$ with a $v\times w$ matrix
and for any $y\in[v]^{w}$ we associate a subset of $[n]$ of size
$w$ with one element per column in the natural way: the $k$th element
where $k\in[w]$ is in column $k$ and row $y[k]$. We let $S(y)$
denote this subset. 

From $S(y)$ we obtain other disjoint sets by increasing the row indices
by $j\bmod v$. We let $S(y,j)$ denote this set. Note that for any
$y\in[v]^{w}$ the sets $S(y,0),\dots,S(y,v-1)$ form a partition
of $[n]$ into $v$ sets of size $w$ each. 

We now define $C_{G}$. 
\begin{defn}
For any $G:[u]\to[v]^{w}$ and $i\in[u]$ we define 
\[
C_{G(i)}(x):=\vee_{j\in[v]}\wedge_{k\in S(G(i),j)}x_{k}
\]
 and we define 
\[
C_{G}(x):=\wedge_{i\in[u]}C_{G(i)}(x).
\]
\end{defn}

Note $C_{G}$ is an AND$_{u}$-OR$_{v}$-AND$_{w}$ circuit and $C_{G(i)}$
is a read-once OR$_{v}$-AND$_{w}$ circuit, where AND$_{w}$ denotes
a layer of AND gates of fan-in $w$, etc. For intuition see the following
illustration. 

\begin{figure}[!htb]
    \caption*{$G(i)$ defines a read-once OR$_v$-AND$_w$ subcircuit:}
    \hspace{-2em}
\begin{minipage}{0.4\textwidth}
    \caption*{\textcolor{red!60}{$S(G(i),0)$},\textcolor{green!60}{$S(G(i),1)$}, \textcolor{blue!60}{$S(G(i),v-1)$} }
    \caption*{\quad $w$}
    \label{fig:enter-label}
     \[
    v \quad \left[\begin{array}{*5{c}}
   \tikz[]{\node[fill=red!30]{}}&  &  &  \tikz[]{\node[fill=blue!30]{}} &\tikz[]{\node[fill=green!30]{}}  \\
        \tikz[]{\node[fill=green!30]{}}&   &   &  \tikz[]{\node[fill=red!30]{}} &  \\
      &   &  &  \tikz[]{\node[fill=green!30]{}}   &  \\
     &   & \tikz[]{\node[fill=blue!30]{}} &  &  \\
    &   &    \tikz[]{\node[fill=red!30]{}}&  &  \\
     &   \tikz[]{\node[fill=blue!30]{}} & \tikz[]{\node[fill=green!30]{}}  &  &  \\
    &    \tikz[]{\node[fill=red!30]{}} &  &  &  \\
     & \tikz[]{\node[fill=green!30]{}}   &  &  & \tikz[]{\node[fill=blue!30]{}} \\
     \tikz[]{\node[fill=blue!30]{}}  &   &  &    &   \tikz[]{\node[fill=red!30]{}} \\
  \end{array}\right]
    \]
\end{minipage}
\begin{minipage}{0.45\textwidth}
    \caption*{    \hspace{6em}$C_{G(i)}$}
\[   
    \begin{tikzpicture}[level 1/.style={sibling distance=25mm},level 2/.style={sibling distance=4mm}]
\node [circle,draw] (z){$OR$}
  child {node [circle,draw] (a) {AND}
    child {node [rectangle, fill=red!30] (b) {}
    }
      child {node [rectangle, fill=red!30] (cc) {}
    }
      child {node [rectangle, fill=red!30] (cc) {}
    }
      child {node [rectangle, fill=red!30] (dd) {}
    }
      child {node [rectangle, fill=red!30] (ddd) {}
    }
  }
    child {node [circle,draw] (bb) {AND}
    child {node [rectangle, fill=green!30] (ee) {}
    }
      child {node [rectangle, fill=green!30] (gg) {}
    }
      child {node [rectangle, fill=green!30] (hh) {}
    }
    child {node [rectangle, fill=green!30] (ii) {}
    }
      child {node [rectangle, fill=green!30] (jj) {}
    }
} 
    child {node [circle] (ff) {$\dots$} edge from parent[draw=none]
  }    
  child {node [circle,draw] (kk) {AND}
    child {node [rectangle, fill=blue!30] (ll) {}
    }
  child {node [rectangle, fill=blue!30] (mm) {}
  }
child {node [rectangle, fill=blue!30] (nn) {}
  }
child {node [rectangle, fill=blue!30] (oo) {}
  }
    child {node [rectangle, fill=blue!30] (pp) {}
  }
};
\path (cc) -- (cc) node [rectangle, fill=red!30] {};

\path (hh) -- (hh) node [rectangle, fill=green!30] {};

\path (nn) -- (nn) node [rectangle, fill=blue!30] {};

  child [grow=down] {
    edge from parent[draw=none]
  };

\draw[black] ($(a)!0.6!(b) - (0.2, 0)$) to[bend left=-30]  ($(a)!0.6!(ddd)+ (0.2, 0)$);
\node[] at ($(a)!0.6!(b) - (0.4, -0.1)$  )  (a2)    {$w$};

\draw[black] ($(z)!0.3!(a) - (0.2, -0.1)$) to[bend left=-30]  ($(z)!0.3!(kk) + (0.2, 0.1)$);
\node[] at ($(z)!0.3!(a) - (0.4, -0.2)$  )  (a2)    {$v$};
\end{tikzpicture}
\] 
\end{minipage}
\end{figure}

We next define some relevant quantities.
\begin{defn}
For integers $1\leq w\leq v\leq u$ and $0<\sigma<1$ we define 
\[
p:=(1-\sigma^{w})^{v},\quad bias(u,v,w,\sigma):=(1-p)^{u}.
\]
\end{defn}

Since the $C_{G(i)}(x)$ are read-once, we have $\P[C_{G(i)}(B_{\sigma})=0]=p$.
And if we supposed that $C_{G}$ was read-once (on $uvw$ bits), then
we would have $\P[C_{G}(B_{\sigma})=1]=bias(u,v,w,\sigma)$. Jumping
ahead, we will show that when $G$ is a sampler, $C_{G}$ on $vw$
bits behaves similarly. So by setting the parameters appropriately
and the following fact, $C_{G}$ will be nearly balanced. 
\begin{fact}
\label{fact:bias_independent} Fix $1\leq w\leq v\leq u$ and $0<\sigma<1$
so that $\sigma^{-w}\ln(u/\ln2)\leq v\leq\sigma^{-w}\ln(u/\ln2)+1$.
Then $|bias(u,v,w,\sigma)-1/2|\leq c\sigma^{w}$.
\end{fact}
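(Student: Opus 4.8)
The plan is to bound $bias(u,v,w,\sigma)=(1-p)^u$, where $p:=(1-\sigma^w)^v$, by reducing everything to the single estimate $up\approx\ln 2$ and then using $(1-p)^u=e^{-up(1+o(1))}$ to read off that this is close to $1/2$. First a reduction: if $\sigma^w$ exceeds some fixed absolute constant then $c\sigma^w\ge 1/2$ for a suitable $c$, and the claim is vacuous since $bias(u,v,w,\sigma)\in[0,1]$; so I may assume $\sigma^w$ is small. The hypotheses $\sigma^{-w}\ln(u/\ln 2)\le v\le u$ then also force $u$, hence $L:=\ln(u/\ln 2)$, to be large, so in particular $1/u\le\sigma^w/L\le\sigma^w$; I use this freely.

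\emph{Estimating $up$.} For $\sigma^w$ small one has $-\sigma^w-c\sigma^{2w}\le\ln(1-\sigma^w)\le-\sigma^w$. The calibration $\sigma^{-w}L\le v\le\sigma^{-w}L+1$ says $\sigma^w v\in[L,\,L+\sigma^w]$, so multiplying the previous inequalities by $v$ gives $\ln p=v\ln(1-\sigma^w)\in[-L-c\sigma^wL,\,-L]$. Adding $\ln u=L+\ln\ln 2$ and exponentiating,
\[
up=\ln 2\cdot e^{-\delta}\quad\text{with}\quad 0\le\delta\le c\,\sigma^w L ,
\]
so in particular $up\le\ln 2$, hence $p\le\ln 2/u$ is tiny.

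\emph{Passing to the bias.} Since $0\le p\le\ln 2/u$ we have $\ln(1-p)\in[-p-cp^2,\,-p]$, and $up^2\le p\ln 2\le c/u$, so $u\ln(1-p)=-up+O(1/u)$ and therefore $bias(u,v,w,\sigma)=e^{-up}\bigl(1+O(1/u)\bigr)$. Writing $up=\ln 2-\delta'$ with $\delta'\in[0,\,c\sigma^w L]$ and using $e^{-\ln 2}=1/2$ together with $|e^{\delta'}-1|\le c\delta'$ for $\delta'$ small,
\[
bias(u,v,w,\sigma)=\tfrac12\,e^{\delta'}\bigl(1+O(1/u)\bigr)=\tfrac12+O(\sigma^wL)+O(1/u)=\tfrac12+O\!\left(\sigma^w\ln\tfrac{u}{\ln 2}\right) .
\]

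\emph{The main obstacle.} The only delicate point is the estimate for $up$. The slack $\pm\sigma^w$ in the calibration of $\sigma^w v$ is harmless, but the \emph{second-order} term $c\sigma^{2w}v\approx c\sigma^w L$ in $\ln p$ does not cancel and enters at exactly the scale of the target error; so the computation naturally delivers $|bias-1/2|=O(\sigma^w\ln(u/\ln 2))$, and to collapse this to the stated $O(\sigma^w)$ one must control $\ln(u/\ln 2)$ — e.g.\ via the relation between $u$ and $v,w,\sigma$ implicit in Theorem~\ref{thm:resilient_explicit}, under which $u$ is polynomially bounded and this factor is only logarithmic — or, equivalently, shift the calibration window for $v$ downward by $\approx\tfrac12\ln(u/\ln 2)$ to kill that term, which is legitimate since $bias$ is increasing in $v$ and equals $1/2$ exactly at $v^{\ast}=\ln(1-2^{-1/u})/\ln(1-\sigma^w)$. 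Everything else is routine bookkeeping with the Taylor expansions of $\ln(1-\sigma^w)$ and $\ln(1-p)$.
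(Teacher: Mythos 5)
Your proof follows the same route as the paper's: Taylor-expand $\ln(1-\sigma^w)$ and $\ln(1-p)$ around the calibration $up\approx\ln 2$ and read off the bias from $(1-p)^u$. Your calculations are right, and the subtlety you isolate is genuine: the second-order contribution $\tfrac12\sigma^{2w}v\approx\tfrac12\sigma^w\ln(u/\ln 2)$ in $-\ln p$ does not cancel, so the method naturally delivers $|bias(u,v,w,\sigma)-1/2|\le c\,\sigma^w\ln(u/\ln 2)$ rather than $c\sigma^w$. The paper's own proof glosses over exactly this point: in its displayed chain
\[
(1-\sigma^{w})^{v}\geq e^{-\sigma^{w}v(1+c\sigma^{w})}\geq e^{-(\ln(u/\ln2)+\sigma^{w})(1+c\sigma^{w})}\geq(\ln2/u)(1-c\sigma^{w}),
\]
the last inequality requires absorbing a term of size $c\sigma^{w}\ln(u/\ln2)$ from the expanded exponent into $c\sigma^{w}$, which holds only when $\ln(u/\ln 2)=O(1)$. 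As stated, Fact~\ref{fact:bias_independent} should have $c\sigma^w\ln u$ on the right-hand side, or carry an extra hypothesis bounding $\ln u$. This discrepancy is harmless downstream: in the proofs of Theorems~\ref{thm:resilient_balanced-main} and~\ref{thm:resilient_explicit} one has $u=\poly(n)$ and $\sigma^w=n^{-1+o(1)}$, so the extra $\log u=O(\log n)$ factor is swallowed by the $n^{o(1)}$ in the stated biases. Both of your suggested repairs (invoking $u=\poly(v)$ from the application, or shifting the calibration window for $v$ to the exact balancing point and using monotonicity of the bias in $v$) are sound.
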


For the sake of flow, we defer the proof of Fact \ref{fact:bias_independent}
and some of the other technical claims below to the appendix.

\paragraph*{The sampler property}
\begin{defn}
\label{def:sampler} $G:[u]\to[v]^{w}$ is a $(\a,\b)$\emph{-sampler}
if for any $f_{1},\dots,f_{w}:[v]\to\{0,1\}$ s.t. $\mu:=\E_{y\in[v]^{w}}[F(y)]\leq\frac{1}{\a\b\ln u}$
where $F:=\sum_{k\in[w]}f_{k}$, we have
\[
\E_{i\in[u]}[\a^{F(G(i))}]\leq e^{c\b\a\mu}.
\]
When $\b=c$ we say $G$ is an $\a$-sampler, and when $\a=2,\b=c$
we say $G$ is a sampler.
\end{defn}

The $1/\ln u$ factor is present in the definition for technical reasons.
We will also need a second version of the definition given next. 
\begin{claim}
\label{claim:equivalent-definition} The statement in Definition \ref{def:sampler}
is equivalent to the following:
\[
\E_{i\in[u]}[\a^{F(G(i))}\mathbf{1}_{F(G(i))\neq0}]\leq c\b\alpha\mu.
\]
\end{claim}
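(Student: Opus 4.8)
The plan is to show the two inequalities are equivalent by isolating the contribution of the $F(G(i))=0$ terms in the expectation $\E_{i\in[u]}[\a^{F(G(i))}]$. Write
\[
\E_{i\in[u]}[\a^{F(G(i))}] = \P_{i}[F(G(i))=0] + \E_{i\in[u]}[\a^{F(G(i))}\mathbf{1}_{F(G(i))\neq 0}],
\]
since $\a^{0}=1$. Because $\P_{i}[F(G(i))=0]\leq 1$ and $1 = 1 + c\b\a\mu - c\b\a\mu$, the inequality $\E_{i}[\a^{F(G(i))}]\leq e^{c\b\a\mu}$ should be transferred to the truncated expectation using the elementary bound $e^{c\b\a\mu}\leq 1 + c'\b\a\mu$ valid when $c\b\a\mu$ is bounded by an absolute constant, and conversely $1 + c\b\a\mu \leq e^{c\b\a\mu}$ always. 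So the heart of the matter is that both sides differ from the truncated quantity by exactly the additive term $\P_{i}[F(G(i))=0]\in[0,1]$, and $e^{x}$ and $1+x$ agree up to constant factors in the exponent for $x$ in a bounded range.

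Concretely, I would argue both directions. For the forward direction, assume $\E_{i}[\a^{F(G(i))}]\leq e^{c\b\a\mu}$. Subtract $\P_{i}[F(G(i))=0]$ from both sides; the left side becomes the truncated expectation, and the right side is $e^{c\b\a\mu}-\P_{i}[F(G(i))=0]$. Now I need to check this is at most $c'\b\a\mu$. The subtlety is that $\P_{i}[F(G(i))=0]$ could be strictly less than $1$, so I cannot just say $e^{c\b\a\mu}-1\leq c'\b\a\mu$ and be done. The right move is to also observe that the truncated expectation is obviously nonnegative, and to use the hypothesis $\mu\leq \frac{1}{\a\b\ln u}$ — wait, that bounds $\a\b\mu$ by $1/\ln u$, which for $u$ large is small, so $c\b\a\mu \leq c/\ln u$ is in a bounded range. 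Hence $e^{c\b\a\mu}\leq 1 + c'\b\a\mu$. But then $e^{c\b\a\mu} - \P_i[F(G(i))=0] \le 1 + c'\b\a\mu - \P_i[F(G(i))=0] \le c'\b\a\mu + (1 - \P_i[F(G(i))=0])$, and the last parenthesized term is $\P_i[F(G(i))\neq 0]$, which is \emph{not} obviously $O(\b\a\mu)$. So I need a different accounting: note $\a^{F(G(i))}\mathbf 1_{F(G(i))\neq 0} \ge \mathbf 1_{F(G(i))\neq 0}$ since $\a \ge 1$ and $F\ge 1$ on that event, so in fact $\P_i[F(G(i))\neq 0] \le \E_i[\a^{F(G(i))}\mathbf 1_{F(G(i))\neq 0}]$; combining with the hypothesis rearranged, $\E_i[\a^{F(G(i))}\mathbf 1_{F(G(i))\neq 0}] = \E_i[\a^{F(G(i))}] - \P_i[F(G(i))=0] \le e^{c\b\a\mu} - 1 + \P_i[F(G(i))\neq 0]$; this is circular, so the clean path is simply $\E_i[\a^{F(G(i))}\mathbf 1_{F(G(i))\neq 0}] = \E_i[\a^{F(G(i))}] - \P_i[F(G(i))=0] \le e^{c\b\a\mu} - \P_i[F(G(i))=0]$, and since $\P_i[F(G(i))=0] = 1 - \P_i[F(G(i))\neq 0] \ge 1 - \mu w$ — hmm, this needs the union bound $\P_i[F(G(i))\neq 0]\le \E_i[F(G(i))]$, but $\E_i[F(G(i))]$ is the sampler-side average, not $\mu$.

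Let me step back and present the cleanest version: the forward direction follows because $\E_i[\a^{F(G(i))}\mathbf 1_{F(G(i))\neq 0}] = \E_i[\a^{F(G(i))}] - \P_i[F(G(i))=0] \le e^{c\b\a\mu} - (1 - \P_i[F(G(i))\neq 0])$, and I bound $\P_i[F(G(i))\neq 0]$ using Definition~\ref{def:sampler} applied to the same $f_k$'s: indeed from the unnormalized-to-truncated identity run in reverse, or more simply, $\P_i[F(G(i))\neq 0] \le \E_i[\a^{F(G(i))}]/\a \le e^{c\b\a\mu}/\a$, wait $\a\ge 1$ only, not $\a \ge 2$ in general. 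I will instead just take the union-bound route via the sampler hypothesis with $\a=2$ which always holds (the claim is about a fixed $\a$ but the sampler is assumed for that $\a$), giving $\P_i[F(G(i))\neq 0]\le 2^{F\ge 1}\cdots$. Rather than chase this in the sketch, the bottom line is: expand the expectation, the $F=0$ term contributes exactly $1 - \P_i[F\neq0]$, use $e^x = 1 + O(x)$ for $x=O(1/\ln u)$, and control the leftover $\P_i[F\neq0]$ by noting it is itself $\le \E_i[\a^{F(G(i))}\mathbf 1_{F(G(i))\neq 0}] \le$ (reverse direction) — so actually the reverse direction is the easy one and should be proved first, then used. \textbf{The main obstacle} is exactly this bookkeeping: making sure the additive $1$ from the $F=0$ atom is absorbed correctly on both sides without circularity, which is handled by first establishing the (trivial) reverse direction $\E_i[\a^{F(G(i))}\mathbf 1_{F(G(i))\neq 0}]\le c\b\a\mu \implies \E_i[\a^{F(G(i))}]\le 1 + c\b\a\mu \le e^{c\b\a\mu}$, and then for the forward direction using $\P_i[F(G(i))\neq0]\le \E_i[\a^{F(G(i))}\mathbf 1_{F(G(i))\neq0}]$ together with the rearranged hypothesis and the bound $e^{c\b\a\mu}-1\le c'\b\a\mu$ (valid since $\b\a\mu\le 1/\ln u$) to solve for the truncated expectation. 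All the estimates $e^x\le 1+2x$ for $x\le 1$, etc., are routine.
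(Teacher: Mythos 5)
There is a genuine gap in your forward direction, and in fact you come close to the right idea before abandoning it. You correctly reduce the problem to showing that $\P_i[F(G(i))\neq 0]\leq c\b\a\mu$, since then the decomposition $\E_i[\a^{F(G(i))}\mathbf 1_{F(G(i))\neq 0}]=\E_i[\a^{F(G(i))}]-\P_i[F(G(i))=0]$ together with $e^{c\b\a\mu}\leq 1+2c\b\a\mu$ (valid since $\b\a\mu\leq 1/\ln u$) finishes. You even write down the Markov step $\P_i[F(G(i))\neq 0]\leq\E_i[F(G(i))]$, but then discard it with the remark that ``$\E_i[F(G(i))]$ is the sampler-side average, not $\mu$.'' That is exactly where you give up one step too early. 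The paper's proof bounds this sampler-side average from the sampler hypothesis itself via Jensen's inequality: since $x\mapsto\a^x$ is convex, $\a^{\E_i[F(G(i))]}\leq\E_i[\a^{F(G(i))}]\leq e^{c\b\a\mu}$, hence $\E_i[F(G(i))]\leq c\b\a\mu/\ln\a\leq c'\b\a\mu$. Combined with Markov this gives $\P_i[F(G(i))=0]\geq 1-c'\b\a\mu$, which plugs into your decomposition and closes the argument.

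Your proposed substitute — using $\P_i[F(G(i))\neq 0]\leq\E_i[\a^{F(G(i))}\mathbf 1_{F(G(i))\neq 0}]$ together with $T\leq (e^{c\b\a\mu}-1)+\P_i[F(G(i))\neq 0]$ and ``solving for the truncated expectation'' $T$ — does not work: it yields $T\leq c'\b\a\mu+T$, i.e.\ $0\leq c'\b\a\mu$, which is vacuous. The tightened version $T\geq\a\cdot\P_i[F(G(i))\neq 0]$ gives $T(1-\a^{-1})\leq c'\b\a\mu$, hence $T\leq c'\b\a^2\mu/(\a-1)$, whose prefactor is unbounded as $\a\to 1^+$, so it does not establish the claim with a universal constant $c$. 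The reverse direction you state is correct and matches the paper: $\E_i[\a^{F(G(i))}]=\P_i[F(G(i))=0]+T\leq 1+c\b\a\mu\leq e^{c\b\a\mu}$.
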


Next we show the identity function is a sampler.
\begin{fact}
\label{thm:sampler-intro-1-1}The identity function $I:[v]^{w}\to[v]^{w}$
is a sampler.
\end{fact}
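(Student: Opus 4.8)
The plan is to verify Definition~\ref{def:sampler} directly for $G=I$, i.e. to show that for any $f_1,\dots,f_w:[v]\to\{0,1\}$ with $\mu:=\E_{y\in[v]^w}[F(y)]\le \frac{1}{c\ln v}$ (here $\a=2$, $\b=c$, $u=v^w$, so $\ln u = w\ln v$ and the hypothesis is $\mu \le \frac{1}{c w \ln v}$), we have $\E_{y\in[v]^w}[2^{F(y)}]\le e^{c\mu}$. The key observation is that when $y$ is uniform on $[v]^w$, the coordinates $y[1],\dots,y[w]$ are independent, so $F(y)=\sum_{k\in[w]} f_k(y[k])$ is a sum of independent $\{0,1\}$ random variables. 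Writing $\mu_k:=\E_{y[k]}[f_k(y[k])]$, we have $\mu=\sum_k \mu_k$ and, by independence,
\[
\E_{y}[2^{F(y)}]=\prod_{k\in[w]}\E_{y[k]}[2^{f_k(y[k])}]=\prod_{k\in[w]}\bigl(1+\mu_k\bigr)\le \prod_{k\in[w]}e^{\mu_k}=e^{\mu}.
\]
This already gives the bound $e^{\mu}\le e^{c\mu}$, so no tail bound or expander machinery is needed; the identity map is a sampler essentially because it computes a genuine product distribution.

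First I would fix the parameter identifications: with $G=I:[v]^w\to[v]^w$ we have $u=v^w$, and the hypothesis $\mu\le \frac{1}{\a\b\ln u}$ becomes a (stronger than needed) bound on $\mu$, which we only use to guarantee $\mu$ is small enough that the $e^{c\b\a\mu}$ conclusion is meaningful; in fact the computation above shows the conclusion holds for \emph{all} $\mu$, with the constant in the exponent equal to $1$. Next I would record the independence of the coordinates of a uniform $y\in[v]^w$ and the factorization of $\E[2^{F(y)}]$ displayed above. Then I would apply $1+t\le e^{t}$ coordinatewise and collapse the product. Finally I would observe that $e^{\mu}\le e^{c\b\a\mu}$ for the constants $\a=2,\b=c$ (any $c\ge 1/2$), completing the verification. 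Alternatively, one can route through Claim~\ref{claim:equivalent-definition}: $\E_y[2^{F(y)}\mathbf 1_{F(y)\neq 0}]=\E_y[2^{F(y)}]-\P[F(y)=0]=\prod_k(1+\mu_k)-\prod_k(1-\mu_k)$, and expanding shows this is at most $2\sum_k\mu_k\cdot\prod_{k}(1+\mu_k)\le 2\mu e^{\mu}\le c\mu$ when $\mu$ is small, matching the required form $c\b\a\mu$.

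There is essentially no obstacle here — the statement is a warm-up establishing that the "trivial" generator already meets the sampler bar, which motivates why the eventual construction needs the sampler property only to argue that a \emph{small-domain} $G$ behaves like the identity. The only thing to be slightly careful about is the direction of the inequality and the fact that the conclusion $e^{c\b\a\mu}$ must \emph{dominate} what we compute, so we want the \emph{upper} bound $1+\mu_k\le e^{\mu_k}$ (not a lower bound), which is immediate. I would keep the writeup to the three-line display above plus the sentence identifying $u=v^w$, and note that this also shows any product of independent samplers, or any $G$ whose pushforward of the uniform distribution on $[u]$ equals the uniform distribution on $[v]^w$, is a sampler with the same constant.
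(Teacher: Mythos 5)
Your proof is correct and takes essentially the same approach as the paper: the paper derives the statement directly from Fact~\ref{fact:nonzero-to-zero}, whose proof is exactly your computation $\E[\a^X]=\prod_k(1+(\a-1)\mu_k)\le e^{(\a-1)\mu}$ via independence of the coordinates and $1+t\le e^t$ (stated for general $\a>1$ rather than specialized to $\a=2$).
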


This directly follows by the proceeding.
\begin{fact}
\label{fact:nonzero-to-zero} Let $X_{1},\dots,X_{w}$ be independent
$\{0,1\}$-valued r.v. s.t. $\mu:=\E[X]<1$ where $X:=\sum_{k\in[w]}X_{k}$.
Fix $\a>1$ s.t. $\a\mu\leq1$. Then $\E[\a^{X}]\leq e^{\a\mu},\E[\a^{X}\mathbf{1}_{X\neq0}]\leq2\a\mu$. 
\end{fact}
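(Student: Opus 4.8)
The plan is to exploit independence to factor the relevant expectation and then reduce everything to one-variable elementary inequalities. Write $p_{k}:=\Pr[X_{k}=1]$, so that $\mu=\sum_{k\in[w]}p_{k}$ and $\E[\a^{X_{k}}]=1+p_{k}(\a-1)$. For the first bound I would simply multiply out:
\[
\E[\a^{X}]=\prod_{k\in[w]}\E[\a^{X_{k}}]=\prod_{k\in[w]}\bigl(1+p_{k}(\a-1)\bigr)\le\prod_{k\in[w]}e^{p_{k}(\a-1)}=e^{(\a-1)\mu}\le e^{\a\mu},
\]
where the first inequality is $1+t\le e^{t}$ and the last uses $\mu\ge0$; note this half does not even need the hypothesis $\a\mu\le1$.

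For the second bound the key step is to replace $\a^{X}\mathbf{1}_{X\neq0}$ by a pointwise surrogate whose expectation is already under control. I claim that for every integer $X\ge0$ (and any $\a\ge1$),
\[
\a^{X}\mathbf{1}_{X\neq0}\le\a^{X}-1+X:
\]
if $X=0$ both sides vanish, and if $X\ge1$ it is just $1\le X$. Taking expectations and invoking the first bound,
\[
\E[\a^{X}\mathbf{1}_{X\neq0}]\le\E[\a^{X}]-1+\mu\le e^{(\a-1)\mu}-1+\mu,
\]
so it remains to show $e^{(\a-1)\mu}-1+\mu\le2\a\mu$. Put $B:=(\a-1)\mu$; since $\mu\ge0$ we have $0\le B\le\a\mu\le1$, and since $2\a\mu-\mu=2B+\mu\ge2B$ it suffices to check $e^{B}\le1+2B$ for $B\in[0,1]$. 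This holds because $B\mapsto1+2B-e^{B}$ is concave and nonnegative at both endpoints (it equals $0$ at $B=0$ and $3-e>0$ at $B=1$), hence nonnegative on all of $[0,1]$.

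I do not expect any genuine obstacle here: the whole argument is a few lines of calculus once independence is used. The only place needing a little care is the choice of surrogate in the second step — one wants the additive slack ``$-1+X$'' to be exactly enough that the error $e^{(\a-1)\mu}-1$ is absorbed by a constant factor of $2$ rather than something larger like $e$ — together with keeping track of where the hypothesis $\a\mu\le1$ enters, namely to force $B\le1$ so that the final scalar inequality $e^{B}\le1+2B$ is valid.
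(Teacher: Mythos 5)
Your proof is correct and takes essentially the same route as the paper: the first bound is identical (factor by independence and use $1+t\le e^{t}$), and your pointwise surrogate $\a^{X}\mathbf{1}_{X\neq0}\le\a^{X}-1+X$ is just the paper's combination of the identity $\E[\a^{X}\mathbf{1}_{X\neq0}]=\E[\a^{X}]-\P[X=0]$ with the Markov bound $\P[X=0]\ge1-\mu$, folded into one line. Both arguments then close with $e^{t}\le1+2t$ on $t=(\a-1)\mu\le1$, yielding the same chain $\E[\a^{X}\mathbf{1}_{X\neq0}]\le2(\a-1)\mu+\mu\le2\a\mu$.
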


However, the identity function would not result in an efficient construction
since the resulting circuit would have size $\geq u=v^{w}=n^{c\log n}$.
Thus we are interested in samplers with a polynomial size domain $u=v^{c}$.
The existence of such samplers follows by the probabilistic method
(see Lemma \ref{lem:exists-shifted-sampler-1}). For an explicit construction,
we can take $G$ to be a random walk over a $(v,d,\lambda)$ expander,
which is a regular graph with $v$ nodes, degree $d$, and spectral
expansion $\lambda$. Tail bounds on expander walks \cite{RR17} imply
that $G$ is indeed a good sampler. 
\begin{thm}[\cite{RR17}]
\label{thm:sampler-walk} Let $G:[vd^{w}]\to[v]^{w}$ output walks
of length $w$ on a $(v,d,\lambda)$ expander for $\lambda<1/3$.
Then $G$ is an $\a$-sampler for any $1<\a<1/2\lambda$. 
\end{thm}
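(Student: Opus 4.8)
The plan is to unpack Definition~\ref{def:sampler} into a statement about a single random walk, expand the relevant expectation over subsets of the walk's steps, compare it term by term against the i.i.d.\ uniform process, and control the difference with the expander-walk tail bound of \cite{RR17}.

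In detail: set $u=vd^{w}$ and let $(X_{1},\dots,X_{w})$ be a length-$w$ walk on the $(v,d,\lambda)$ expander started from a uniformly random vertex, so each $X_{k}$ is marginally uniform on $[v]$. By Definition~\ref{def:sampler} with $\b=c$ it suffices to show $\E\big[\a^{\sum_{k}f_{k}(X_{k})}\big]\le e^{c\a\mu}$ whenever $S_{k}:=f_{k}^{-1}(1)$ and $\mu:=\sum_{k}\mu_{k}\le\frac{1}{c\a\ln u}$, where $\mu_{k}:=|S_{k}|/v$. Expanding $\a^{\sum_{k}f_{k}(X_{k})}=\prod_{k}\big(1+(\a-1)f_{k}(X_{k})\big)$ and taking expectations,
\[
\E\big[\a^{\sum_{k}f_{k}(X_{k})}\big]=\sum_{T\subseteq[w]}(\a-1)^{|T|}\,q_{T},\qquad q_{T}:=\P[\,X_{k}\in S_{k}\ \forall k\in T\,].
\]
Running the same computation for the i.i.d.\ uniform process gives $\sum_{T}(\a-1)^{|T|}\prod_{k\in T}\mu_{k}=\prod_{k}\big(1+(\a-1)\mu_{k}\big)\le e^{(\a-1)\mu}$ (this is the content of Fact~\ref{fact:nonzero-to-zero} for the steps pretended independent), which is already below $e^{c\a\mu}$. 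So the task reduces to bounding the \emph{excess} $\sum_{T}(\a-1)^{|T|}\big(q_{T}-\prod_{k\in T}\mu_{k}\big)$ by $c\a\mu$.

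This is exactly what the expander-walk estimate supplies. For $T=\{k_{1}<\dots<k_{m}\}$ with gaps $\Delta_{j}:=k_{j}-k_{j-1}$, the tail bound of \cite{RR17} compares $q_{T}$ to $\prod_{k\in T}\mu_{k}$ with an error governed by the spectral gap: traversing $T$ from left to right one picks up the exact marginal $\mu_{k_{1}}$ at the first step and then, at each later step $k_{j}$, a factor at most $\mu_{k_{j}}$ plus a penalty that is geometrically small in $\Delta_{j}$, so every monomial other than $\prod_{k\in T}\mu_{k}$ carries at least one factor $\lambda^{\Delta}$. (Alternatively one can avoid quoting the precise form of \cite{RR17}: write the walk measure after $j$ steps as a scalar multiple of the uniform distribution plus a term orthogonal to $\mathbf{1}$, and note that $\a<\frac{1}{2\lambda}$ makes the recursion $\|(\text{error})\|\mapsto\lambda\a\|(\text{error})\|+(\text{small})$ contractive, so the error stays of order $\a\mu$.) Either way, re-summing the excess: each extra penalized gap of length $\Delta$ costs a factor at most $(\a-1)\lambda^{\Delta}$, and $\sum_{\Delta\ge1}(\a-1)\lambda^{\Delta}=(\a-1)\tfrac{\lambda}{1-\lambda}<\tfrac{1}{2(1-\lambda)}<\tfrac34$ by $\a<\tfrac{1}{2\lambda}$ and $\lambda<\tfrac13$, so the number of penalized gaps sums as a convergent geometric series; each penalized block still carries a $\mu_{k}$-smallness, and the hypotheses $\mu\le\frac{1}{c\a\ln u}$ and $w\le c\ln u$ keep the remaining combinatorial sums bounded. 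This yields excess $\le c\a\mu$, hence $\E\big[\a^{\sum_{k}f_{k}(X_{k})}\big]\le e^{(\a-1)\mu}+c\a\mu\le e^{c\a\mu}$.

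The main obstacle is this last re-summation: it must land at $1+O(\a\mu)$ rather than at merely some constant, which is precisely why the \emph{sharp} form of the \cite{RR17} bound, together with the range $\a<\tfrac{1}{2\lambda}$, is needed — the classical hitting-set bound $q_{T}\le\prod_{j}(\mu+\lambda)$ converges only for bounded $\a$ and does not give the clean $e^{c\a\mu}$. Keeping the first factor equal to the true density $\mu_{k_{1}}$ (so singleton terms dominate the excess) and exploiting the geometric decay in the gap lengths are what make the estimate go through for all admissible $\a$; as a sanity check, letting $\lambda\to0$ — so every gap penalty vanishes and all $\a$ are allowed — recovers Fact~\ref{fact:nonzero-to-zero}.
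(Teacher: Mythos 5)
The paper itself does not prove Theorem~\ref{thm:sampler-walk}; it cites \cite{RR17} and invokes the result as a black box, so there is no in-paper proof for you to match. What you have written is a sketch of a de novo derivation, and it has a real gap at its central step.

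Your setup is fine: expanding $\a^{\sum_k f_k(X_k)}=\prod_k\bigl(1+(\a-1)f_k(X_k)\bigr)$, identifying $\E[\a^F]=\sum_T(\a-1)^{|T|}q_T$, observing the i.i.d.\ row sums to $\prod_k\bigl(1+(\a-1)\mu_k\bigr)\le e^{(\a-1)\mu}$, and reducing to the excess $\sum_T(\a-1)^{|T|}\bigl(q_T-\prod_{k\in T}\mu_k\bigr)$. The gap is the assertion that ``at each later step $k_j$, a factor at most $\mu_{k_j}$ plus a penalty that is geometrically small in $\Delta_j$'' and that ``each penalized block still carries a $\mu_k$-smallness.'' In the iterated-conditioning form of the expander mixing lemma the per-step penalty is not $\lambda^{\Delta_j}$ but $\lambda^{\Delta_j}\sqrt{\mu_{k_j}/\mu_{k_{j-1}}}$, and in the $A^{\Delta}=J+E_{\Delta}$ block decomposition an $E$-block spanning $k_a<\dots<k_b$ contributes only $\sqrt{\mu_{k_a}\mu_{k_b}}\,\lambda^{k_b-k_a}$ by Cauchy--Schwarz: the interior projections $\Pi_{k_{a+1}},\dots,\Pi_{k_{b-1}}$ have operator norm $1$, so their $\mu$-factors vanish entirely. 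Once you reweight by $(\a-1)^{|T|}$ and sum over $T$, those lost interior $\mu$'s produce terms of order $\a^{\ell}\mu$ for every $\ell\ge2$ rather than $\a\mu$, and with $\a\lambda$ allowed up to $1/2$ the putative geometric resummation does not converge to $1+O(\a\mu)$. This is precisely why the classical term-by-term comparison fails in this regime, and why \cite{RR17} (and the earlier Gillman/Healy-type arguments) work instead at the level of the perturbed transfer operator $D^{1/2}AD^{1/2}$, $D=\mathrm{diag}(\a^{f_k(\cdot)})$, bounding its top eigenvalue rather than the individual $q_T$.

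Your parenthetical ``alternative'' --- decomposing the walk measure as $s_j\pi+e_j$ with $e_j\perp\mathbf{1}$, deriving a recursion $\|e_j\|\le\a\lambda\|e_{j-1}\|+s_{j-1}(\a-1)\sqrt{\mu_j/v}$, and using $\a\lambda<1/2$ to make it contractive --- is the right move and, if fleshed out, does close: the cross terms it produces carry $\sqrt{\mu_i\mu_j}\,(\a\lambda)^{j-i}$ and, after AM--GM and geometric summation, total at most $c(\a-1)^2\lambda\mu/(1-\a\lambda)\le c\a\mu$. But this is not the same argument as the $T$-by-$T$ resummation you present as your primary route, and your writeup treats them as interchangeable when the first, as stated, does not go through and the second is only gestured at. So the proposal is a plausible outline of what a proof would look like, with the crucial resummation step either incorrect (first route) or unfinished (second route).
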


We remark that tail bounds from earlier works \cite{Lez98,Wagner08},
appropriately extended, would also suffice for our main result.

\paragraph{The design property}
\begin{defn}
We say $G$ is a $d$-design if $|S(G(i),j)\cap S(G(i'),j')|\leq w-d$
for any $i,i'\in[u],j,j'\in[v]$ s.t. $(i,j)\neq(i',j')$ .
\end{defn}

In other words, any two sets differ in at least $d$ elements. Note
$|S(G(i),j)\cap S(G(i),j')|=0$ for any $j\neq j'$ by definition.

We will require the design properties of the Reed-Solomon code:
\begin{fact}
\label{fact:rs-design} Fix some prime $v$ and integers $\ell\leq w\leq v$.
The degree $\ell$ Reed-Solomon code $RS:[v]^{\ell}\to[v]^{w}$ is
$\ell$-wise independent and a $w-\ell$ design.
\end{fact}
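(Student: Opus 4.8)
The plan is to verify the two asserted properties of the Reed--Solomon encoding $RS:[v]^{\ell}\to[v]^{w}$ separately, after fixing the standard setup. Identify $[v]$ with the field $\F_v$ (legitimate since $v$ is prime), fix $w$ distinct evaluation points $a_1,\dots,a_w\in\F_v$ (one per column), and for a message $m=(m_0,\dots,m_{\ell-1})\in\F_v^{\ell}$ let $RS(m)$ be the vector whose $k$th coordinate is $P_m(a_k)$, where $P_m(X)=\sum_{t=0}^{\ell-1}m_t X^t$ has degree $<\ell$.

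First I would establish $\ell$-wise independence. Fix any $\ell$ distinct column indices $k_1,\dots,k_\ell$; the map sending $m$ to $(P_m(a_{k_1}),\dots,P_m(a_{k_\ell}))$ is linear over $\F_v$ and its matrix is the Vandermonde matrix on the distinct points $a_{k_1},\dots,a_{k_\ell}$, which is invertible. Hence this map is a bijection from $\F_v^{\ell}$ onto $\F_v^{\ell}$, so when $m$ is uniform the $\ell$ named coordinates are uniform and independent; that is exactly $\ell$-wise independence of the code.

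Next I would establish the $(w-\ell)$-design property. Recall from the definition that we must bound $|S(RS(m),j)\cap S(RS(m'),j')|$ by $w-(w-\ell)=\ell$ whenever $(m,j)\neq(m',j')$. Unwind the definition of $S(\cdot,\cdot)$: the $k$th element of $S(y,j)$ lies in column $k$ and row $y[k]+j \bmod v$, so $S(y,j)$ and $S(y',j')$ share their column-$k$ element iff $y[k]+j\equiv y'[k]+j'\pmod v$. Applying this with $y=RS(m)$, $y'=RS(m')$: the column-$k$ element is shared iff $P_m(a_k)-P_{m'}(a_k)\equiv j'-j$, i.e. iff $a_k$ is a root of the polynomial $Q(X):=P_m(X)-P_{m'}(X)-(j'-j)$. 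Now split into cases. If $m\neq m'$ then $P_m-P_{m'}$ is a nonzero polynomial of degree $<\ell$, so $Q$ is nonzero of degree $<\ell$ and has at most $\ell-1$ roots among the $a_k$, giving intersection size $\le \ell-1\le\ell$. If $m=m'$ but $j\neq j'$ then $Q$ is the nonzero constant $-(j'-j)$ (nonzero mod $v$ since $0\le j,j'<v$ forces $|j-j'|<v$), which has no roots, so the intersection is empty. In all cases the intersection has size at most $\ell$, which is what the $(w-\ell)$-design condition requires.

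The argument is essentially bookkeeping once the definitions are unwound; the one place that deserves care — and the step I expect a reader to want spelled out — is correctly translating the abstract intersection bound in the design definition into the statement ``the shift-corrected evaluation polynomial has few roots,'' in particular keeping track of the additive shift $j'-j$ and noting that it only helps (it can never increase the number of shared coordinates beyond the degree bound, and when $m=m'$ a nonzero shift kills all agreement). No estimation or probabilistic input is needed beyond the elementary fact that a nonzero degree-$<\ell$ polynomial over a field has at most $\ell-1$ roots.
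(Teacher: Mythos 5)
Your $\ell$-wise independence argument is fine, and your reduction of the design property to a root-counting statement about a shift-corrected difference polynomial is exactly the right move. However, there is a genuine error in the step ``If $m\neq m'$ then $P_m-P_{m'}$ is a nonzero polynomial of degree $<\ell$, so $Q$ is nonzero.'' Nonzeroness of $P_m-P_{m'}$ does \emph{not} imply nonzeroness of $Q=P_m-P_{m'}-(j'-j)$: if $m$ and $m'$ differ only in the constant coefficient $m_0$, then $P_m-P_{m'}$ \emph{is} a nonzero constant, and choosing $j'-j$ equal to that constant makes $Q\equiv 0$. In that case $S(RS(m),j)=S(RS(m'),j')$ entirely, the intersection has size $w$, and the $(w-\ell)$-design property fails. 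Concretely, under your parameterization $RS$ is not a $d$-design for any $d>0$, because the row shift by $j$ is precisely an increment of the constant coefficient, so the pair $(\text{message},\text{shift})$ has a built-in redundancy.

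The fix, and the convention the fact requires, is to drop the constant term: take $P_m(X)=m_1X+m_2X^2+\dots+m_\ell X^\ell$, so the message space is still $[v]^\ell$ but the polynomial has zero constant term and degree at most $\ell$. Then $P_m-P_{m'}$ can never equal a nonzero constant, so for any $(m,j)\neq(m',j')$ the polynomial $Q$ is nonzero (it has nonzero constant term $-(j'-j)$ if $j\neq j'$, and a nonzero higher-degree coefficient if $m\neq m'$), of degree at most $\ell$, hence at most $\ell$ roots, giving intersection at most $\ell=w-(w-\ell)$ as needed. The $\ell$-wise independence step must be adjusted accordingly: the relevant $\ell\times\ell$ matrix is $[a_{k_i}^t]_{i,t\in[\ell]}$, which equals $\mathrm{diag}(a_{k_1},\dots,a_{k_\ell})$ times a Vandermonde matrix, so one should choose the $w$ evaluation points to be distinct \emph{and nonzero}; this technically requires $w\le v-1$, but the boundary case $w=v$ never arises where the paper invokes this fact.
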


\paragraph*{The final construction}

The final generator is the concatenation of two different codes, as
is done in \cite{Meka17}. We replace the complicated extractor in
\cite{Meka17} with a standard expander walk, which grants the sampler
property by Theorem \ref{thm:sampler-walk}. 

The second code will be a constant-degree Reed-Solomon code of length
approximately $c\log\log n$. The final generator will then possess
the desired design property, albeit with a small loss in the sampler
property.
\begin{lem}
\label{thm:explicit-sampler-design} Fix integers $w\leq v$ where
$v$ is prime and $\sigma\in(0,1/2]$ s.t. $\sigma^{-w}=\frac{Cv}{\log v}$
for a fixed constant $C$. Then there is an explicit $G:[u]\to[v]^{w}$
that is a $(\sigma^{-1},\sigma^{-1})$-sampler and $4\log\log u/\log(\sigma^{-1})$-design,
where $u=\poly(v)$.
\end{lem}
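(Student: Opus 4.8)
The plan is to construct $G$ as a concatenation of two generators: an expander walk $G_1 : [u_1] \to [v]^w$ and a constant-degree Reed--Solomon code $G_2$, arranged so that the sampler property is inherited (with mild loss) from $G_1$ and the design property is guaranteed by $G_2$. More precisely, I would take a $(v,d,\lambda)$ expander with $d$ a constant and $\lambda < \sigma/2$ (so that $1 < \sigma^{-1} < 1/(2\lambda)$), and let $G_1$ output length-$w$ walks; by Theorem \ref{thm:sampler-walk} this $G_1$ is a $\sigma^{-1}$-sampler, i.e.\ a $(\sigma^{-1}, c)$-sampler. The issue, as the excerpt notes, is that two distinct walks can agree in all but one coordinate, so $G_1$ alone is a lousy design. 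To fix this I would pad: set $\ell$ to a constant and let $RS : [v]^\ell \to [v]^{w'}$ be the degree-$\ell$ Reed--Solomon code of Fact \ref{fact:rs-design} for a small length $w' \approx 4\log\log u / \log(\sigma^{-1})$; then define $G : [u_1] \times [v]^\ell \to [v]^{w + w'}$ by $G(i, s) := (G_1(i), RS(s))$, concatenating the walk with a short RS codeword. One should re-set $w \leftarrow w + w'$ and check the constraint $\sigma^{-w} = Cv/\log v$ still holds for the adjusted parameters (the added $w'$ coordinates contribute a $\sigma^{-w'} = \polylog$-type factor, absorbed into $C$ or a mild reparametrization); I would fold this bookkeeping into the statement's hypothesis. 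The domain size is $u := u_1 \cdot v^\ell = v d^w \cdot v^\ell$; since $d, \ell$ are constants and $w = O(\log v / \log(\sigma^{-1})) = O(\log v)$, we get $u = \poly(v)$ as required.

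Next I would verify the design property. Two inputs $(i,s) \neq (i',s')$ fall into two cases. If $s \neq s'$, then $RS(s)$ and $RS(s')$ agree in at most $\ell$ coordinates by Fact \ref{fact:rs-design} (a $w' - \ell$ design), so $G(i,s)$ and $G(i',s')$ disagree in at least $w' - \ell$ of the padding coordinates, hence the corresponding sets $S(G(i,s),j)$ and $S(G(i',s'),j')$ intersect in at most $w + \ell$ elements out of $w + w'$, i.e.\ differ in at least $w' - \ell \geq 4\log\log u / \log(\sigma^{-1})$ elements once $w'$ is chosen slightly larger than the target and $\ell$ is a small constant. If $s = s'$ but $i \neq i'$, the two walks $G_1(i), G_1(i')$ are distinct walks on the expander and hence differ in at least one node; but one node of difference is not enough for the design bound. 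This is the main obstacle, and I expect the fix to be exactly the one implicit in \cite{Meka17}: one does not need the walk part to contribute to the design at all when $s = s'$, because one additionally arranges that the map $i \mapsto G_1(i)$ together with the index is injective in the relevant sense, or — more robustly — one takes the walk generator to itself already be a weak design by a standard trick (e.g.\ including the walk's step index or using a slightly inflated alphabet), OR one simply notes that the shift parameter $j$ must also be accounted for: when $s = s'$ and $i = i'$ we need $j \neq j'$, and $S(G(i),j) \cap S(G(i),j') = \emptyset$ automatically. The genuinely dangerous case $s = s', i \neq i'$ I would handle by prepending the walk's starting vertex (or a hash of $i$) into the RS seed, so that $s$ encodes enough of $i$ to force $s \neq s'$ whenever $i \neq i'$ — this makes the second case vacuous at the cost of a constant increase in $\ell$.

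For the sampler property, I would show that concatenating a sampler with any fixed generator on disjoint coordinate blocks preserves the sampler property up to a constant, possibly with an $\alpha \to \alpha^2$-type or $\beta \to \sigma^{-1}$-type loss. Concretely, given $f_1, \dots, f_{w+w'} : [v] \to \{0,1\}$ with small mean $\mu$ for $F = \sum_k f_k$, split $F = F_{\mathrm{walk}} + F_{\mathrm{pad}}$ according to the two coordinate blocks. Then $\E_{(i,s)}[\alpha^{F(G(i,s))}] = \E_i[\alpha^{F_{\mathrm{walk}}(G_1(i))}] \cdot \E_s[\alpha^{F_{\mathrm{pad}}(RS(s))}]$ by independence of the two blocks of the construction. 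The first factor is controlled by the sampler property of $G_1$ (Theorem \ref{thm:sampler-walk}), and the second by the $\ell$-wise independence of $RS$ together with Fact \ref{fact:nonzero-to-zero}: since $w' = O(\log\log u)$ is tiny and $RS$ is $\ell$-wise independent with $\ell$ a constant, $\E_s[\alpha^{F_{\mathrm{pad}}(RS(s))}]$ is close to the product distribution value $e^{O(\alpha \mu_{\mathrm{pad}})}$ up to the error of $\ell$-wise independence, which is negligible for the relevant range of $\mu$. Multiplying the two bounds gives $\E_{(i,s)}[\alpha^{F(G(i,s))}] \leq e^{c\beta\alpha\mu}$ with $\alpha = \sigma^{-1}$ and $\beta = \sigma^{-1}$ — the extra $\sigma^{-1}$ factor in $\beta$ being the advertised ``small loss in the sampler property'' and accounting for slack needed to push through the short-code error terms and the adjustment of $\mu$'s threshold $1/(\alpha\beta\ln u)$. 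I would use Claim \ref{claim:equivalent-definition} to phrase whichever form is cleanest in the multiplicativity argument. The main technical nuisance is keeping all the parameter constraints ($v$ prime, $\sigma^{-w} = Cv/\log v$, $\lambda < \sigma/2$, $w' = 4\log\log u/\log(\sigma^{-1})$, $u = \poly(v)$) simultaneously satisfiable for infinitely many $n = vw$; this is a matter of choosing $C$ and the constant degrees appropriately and does not present a conceptual difficulty.
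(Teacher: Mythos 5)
Your high-level plan (expander walk padded with a short Reed--Solomon code, sampler from the walk, design from the RS part) matches the paper, but there is a real gap in how you propose to reconcile the two properties, and the $\sigma^{-1}$ loss in $\beta$ does not come from where you think it does.

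The crux is the domain of $G$. You propose a genuine product domain $[u_1]\times[v]^\ell$ so that $\E_{(i,s)}[\alpha^{F(G(i,s))}]$ factors exactly. As you yourself flag, this leaves the case $s=s'$, $i\neq i'$ undefended for the design property, and your proposed fix --- encoding (part of) $i$ into the RS seed $s$ --- makes the RS block a function of $i$, which destroys the very independence your sampler argument relies on. You cannot have both: either $s$ is independent of $i$ (product formula works, design fails) or $s$ is correlated with $i$ (design works, product formula fails). The paper resolves this by indexing \emph{both} the walk and the RS codeword by the \emph{same} seed $i\in[u]$: it chooses $c_1$ with $v^{c_1}\geq u$ and sets $G(i)$ to be the concatenation of $W(i)$ with $RS(i)$, where $RS:[v]^{c_1}\to[v]^{w_2}$ and $i$ is viewed as an element of $[v]^{c_1}$. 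Since RS is injective and a $(w_2-c_1)$-design, any two distinct seeds $i\neq i'$ already disagree in $\geq w_2-c_1$ of the RS coordinates, so the design property holds with no case analysis. The price is that $W(i)$ and $RS(i)$ are correlated, and the sampler bound is obtained via Cauchy--Schwarz: $\E_i[\sigma^{-F(G(i))}]\le \E_i[\sigma^{-2F_1(W(i))}]^{1/2}\E_i[\sigma^{-2F_2(RS(i))}]^{1/2}$. This is exactly where the extra $\sigma^{-1}$ in $\beta$ arises (the walk is set up as a $\sigma^{-2}$-sampler, i.e., $\lambda=\sigma^2/4$, and the squaring under Cauchy--Schwarz costs a factor $\sigma^{-1}$), not from ``short-code error terms'' as you speculate.

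Two smaller points. First, $d$ cannot in general be taken constant: you need $\lambda<\sigma/2$ (or $\lambda<\sigma^2/4$ in the paper's version), and by Alon--Boppana $\lambda\geq c/\sqrt{d}$, so $d$ must grow like $\sigma^{-O(1)}$ when $\sigma\to 0$; this is still fine for $u=\poly(v)$ because $\sigma^{-w}\leq v$, but the ``constant degree'' framing is off. Second, the paper avoids the reparametrization bookkeeping you defer by \emph{splitting} the $w$ coordinates as $w=w_1+w_2$ (walk of length $w_1=w-w_2$, RS of length $w_2$) rather than appending $w'$ extra coordinates; this keeps $\sigma^{-w}=Cv/\log v$ as a clean standing hypothesis with no readjustment.
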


\paragraph*{Proving resilience and small bias}

The remainder of the proof consists in showing that circuits obtained
from $G$ are 1) resilient and 2) nearly balanced. 1) was known for
$\sigma=1/2$, and it is straightforward to generalize to $\sigma\neq1/2$. 
\begin{lem}
\label{lem:resilience-1-1} Fix integers $w\leq v\leq u$ and $\sigma\in(0,1/2]$
so that $\sigma^{-w}=v/\ln(u/\ln2)$ and suppose $G:[u]\to[v]^{w}$
is a $(\sigma^{-1},\b)$-sampler. Then for any $Q\subseteq[n]$ s.t.
$|Q|\leq c\sigma^{-w+1}/\b$,
\[
I_{Q,B_{\sigma}}(C_{G})\leq|Q|\cdot c\b\sigma^{w-1}.
\]
\end{lem}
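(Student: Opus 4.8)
The plan is to bound $I_{Q,B_\sigma}(C_G)$ by analyzing when a coalition $Q$ can influence the circuit. Fix $Q \subseteq [n]$ with $|Q| \leq c\sigma^{-w+1}/\b$. The function $C_G = \wedge_{i \in [u]} C_{G(i)}$ is a monotone AND of the read-once subcircuits $C_{G(i)}$. Observe that $C_G$ is fixed (to $0$) as soon as any single $C_{G(i)}$ is fixed to $0$, and $C_G$ can only fail to be fixed if \emph{every} $C_{G(i)}$ is either already $1$ or not yet fixed. Moreover, $C_{G(i)}$ is a read-once $\mathrm{OR}_v$-$\mathrm{AND}_w$ circuit on the disjoint sets $S(G(i),0),\dots,S(G(i),v-1)$, so after sampling the bits outside $Q$, $C_{G(i)}$ is \emph{not fixed} only if (a) no $\mathrm{AND}_w$ gate is satisfied by the sampled bits — i.e.\ $C_{G(i)}$ restricted to the sampled bits is $0$ — and (b) at least one $\mathrm{AND}_w$ gate $S(G(i),j)$ has all of its unsampled (i.e.\ $Q$-)bits still "alive", meaning the already-sampled bits of that gate are all $1$. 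So the first step is: $I_{Q,B_\sigma}(C_G) \leq \P_{B_\sigma}[\exists i \in [u] : C_{G(i)} \text{ is not fixed}]$, and by a union bound over $i$ this is at most $\E_{i \in [u]}[\P_{B_\sigma}[C_{G(i)} \text{ not fixed}]]$.

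Next I would bound, for a fixed $i$, the probability that $C_{G(i)}$ is not fixed. For each $j \in [v]$, let $f_j \in \{0,1\}$ (depending on $i$, $j$, and the randomness) indicate whether $S(G(i),j) \cap Q \neq \emptyset$ and all bits of $S(G(i),j)\setminus Q$ sampled to $1$; for the gate to keep the circuit unfixed via coordinate $j$ we need this event. Since the $S(G(i),j)$ partition $[n]$, $Q$ is split among them; let $q_j := |S(G(i),j) \cap Q|$ so $\sum_j q_j = |Q|$. The probability that gate $j$'s sampled bits are all $1$ is $\sigma^{w - q_j}$. Conditioning first on the event that $C_{G(i)}$ evaluates to $0$ on the sampled bits only (probability $\leq p$-ish, certainly $\leq 1$), the expected number of "alive" gates touching $Q$ is $\sum_{j : q_j \geq 1} \sigma^{w-q_j}$. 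To turn "at least one alive gate" into a bound, I would use a union bound: $\P[C_{G(i)} \text{ not fixed}] \leq \sum_{j : q_j \geq 1} \sigma^{w - q_j} = \sigma^w \sum_{j : q_j \geq 1} \sigma^{-q_j}$. Now define $f_k : [v] \to \{0,1\}$ for $k \in [w]$ by recording, for column $k$, whether the row chosen by $G(i)$ (shifted by $j$) lands on a $Q$-coordinate; then $F := \sum_k f_k$ satisfies $F(G(i) \text{ shifted by } j) = q_j$, and crucially $\E_{y \in [v]^w}[F(y)] = |Q|/v =: \mu$ because each column independently hits $Q$ with the right frequency (here is where we use that the $S(y,j)$ over all $y,j$ cover things uniformly). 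Feeding this into the sampler property of $G$ with $\a = \sigma^{-1}$ — after checking the hypothesis $\mu \leq 1/(\a\b \ln u)$, which follows from $|Q| \leq c\sigma^{-w+1}/\b = cv/(\b \ln u)$ using $\sigma^{-w} = v/\ln(u/\ln 2)$ — Claim \ref{claim:equivalent-definition} gives $\E_{i}[\a^{F(G(i))} \mathbf{1}_{F(G(i)) \neq 0}] \leq c\b\a\mu$. Summing the shifted versions over $j \in [v]$ and averaging over $i$, the union-bound quantity becomes $\sigma^w \cdot \E_i[\sum_{j : q_j \geq 1} \sigma^{-q_j}] \leq \sigma^w \cdot c\b\sigma^{-1}\mu \cdot (\text{bookkeeping for the $v$ shifts and the } \mu = |Q|/v)$, which collapses to $|Q| \cdot c\b\sigma^{w-1}$.

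The main obstacle I expect is setting up the functions $f_1,\dots,f_w$ and the shift structure so that the sampler definition applies cleanly: the sampler is stated for $\E_i[\a^{F(G(i))}]$ with a \emph{single} tuple $G(i)$, but the circuit uses all $v$ cyclic shifts $S(G(i),j)$ of each $G(i)$, so one must either (i) observe that the shift $j \mapsto$ "add $j$ to all rows" turns $G$ into $v$ correlated copies and handle the sum $\sum_j$ by a change of variables that preserves $\E_y[F(y)] = \mu$, or (ii) fold the shift into the function $F$ itself. Getting the constant right in $\mu = |Q|/v$ versus $|Q|/n$ and tracking that the $v$ shifts contribute a factor that cancels against $1/v$ is the delicate bit; everything else is a union bound plus the read-once structure. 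I would also double-check that conditioning on "$C_{G(i)}$ is $0$ on sampled bits" can simply be dropped (upper-bounding that probability by $1$) without losing the tradeoff, since we only need an upper bound on the unfixed probability and the dominant term is the union bound over alive $Q$-touching gates.
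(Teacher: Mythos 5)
Your overall decomposition (bound the probability each read-once subcircuit $C_{G(i)}$ remains unfixed, union bound over $i$, and reduce the alive-gates count to the sampler property of $G$ via the functions $f_k$) is the same as the paper's, and your handling of the sampler application — including the $\sigma^{-1}$ factor from Claim \ref{claim:equivalent-definition}, the check $\mu = |Q|/v \le 1/(\sigma^{-1}\beta\ln u)$, and folding the shift $j$ into the definition of $f_k$ — all matches. But there is a genuine gap, and you flagged it yourself without resolving it: you cannot drop the event that $C_{G(i)}$ evaluates to $0$ on the already-sampled bits (call this $E_i$). In fact the bound $\P[E_i]\le c/u$ is the load-bearing step. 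Concretely: the union bound over $i$ yields $\sum_{i\in[u]}\P[C_{G(i)}\text{ not fixed}]$, not $\E_{i\in[u]}[\cdot]$ as you wrote, so there is an implicit factor of $u$ to pay. This factor is cancelled precisely by
\[
\P[E_i] \;\le\; (1-\sigma^w)^{v-q} \;\le\; (1-\sigma^w)^v\, e^{q\sigma^w} \;\le\; \frac{c}{u},
\]
which uses the hypothesis $\sigma^{-w}=v/\ln(u/\ln 2)$ (so that $(1-\sigma^w)^v \le e^{-v\sigma^w}=\ln 2/u$) together with $q\le c\sigma^{-w+1}/\beta$ (so that $e^{q\sigma^w}=O(1)$). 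If you replace $\P[E_i]$ by $1$, the final bound is $u\cdot|Q|\cdot c\beta\sigma^{w-1}$, which is off by a factor of $u$; your arithmetic only "collapsed" to the stated bound because the erroneous $\E_i$ in the union bound silently absorbed the missing $1/u$. So the paper's proof keeps both factors: it writes $\sum_i \P[E_i]\P[F_i] \le \frac{c}{u}\sigma^w \sum_i\sum_j \sigma^{-|S(G(i),j)\cap Q|}\mathbf{1}_{\neq 0}$ and only then converts the sum over $i$ into an expectation. You need to do the same; the parameter constraint $\sigma^{-w}=v/\ln(u/\ln 2)$ is there exactly to make $\P[E_i]$ small, and nothing else in your argument uses it for that purpose.
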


For 2), we follow an approach similar to that of \cite{ChattopadhyayZ16,Meka17},
which proved small bias by combining Janson's inequality with the
requirement that $G$ is a good design. However, we consider a \emph{slightly
tighter} \emph{version} of Janson's inequality; see Proposition \ref{lem:janson}
and the discussion there. This crucially allows us to apply the sampler
property of $G$ which in turn allows us to simplify the design requirements
on $G$ and improve the final bias to essentially match the nonexplicit
construction. 
\begin{lem}
\label{lem:sampler-to-bias-1} Suppose $G:[u]\to[v]^{w}$ is a $(\sigma^{-1},\b)$-sampler
and $d$-design. Then
\[
\left|\E[C_{G}(B_{\sigma})]-bias(u,v,w,\sigma)\right|\leq e^{\sigma^{d}\log^{4}u}\frac{\b\sigma^{-1}}{n^{1-o(1)}}.
\]
\end{lem}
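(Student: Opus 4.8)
The plan is to compare $\E[C_G(B_\sigma)]$ to the ``idealized'' read-once value $bias(u,v,w,\sigma) = (1-p)^u$ by analyzing the probability that $C_G$ is \emph{satisfied}, i.e.\ that all $u$ subcircuits $C_{G(i)}$ evaluate to $1$. Write $A_i$ for the event $C_{G(i)}(B_\sigma) = 1$; since each $C_{G(i)}$ is read-once, $\P[\overline{A_i}] = p$ and we want to estimate $\P[\wedge_i A_i]$. The natural move is to work with the complementary events $\overline{A_i}$ and apply Janson's inequality to bound $\P[\wedge_i A_i] = \P[\wedge_i \overline{\overline{A_i}}]$ from above and below. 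Janson gives $\prod_i (1-p) \le \P[\wedge_i A_i]$ trivially on one side (by the FKG/Harris inequality, since all events are monotone decreasing in the same variables), and on the other side the upper bound is $\prod_i(1-p) \cdot e^{\Delta}$ where $\Delta$ is the ``dependency'' sum $\sum_{i \neq i'} \P[\overline{A_i} \wedge \overline{A_{i'}}]$ restricted to pairs whose variable sets overlap — or, in the slightly tighter form advertised as Proposition \ref{lem:janson}, a sum of the conditional correlation terms. So the first step is: invoke that sharpened Janson bound to reduce everything to controlling $\Delta$.

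The second step is to bound $\Delta$ using the design and sampler properties together. Fix a pair $i \neq i'$. The event $\overline{A_i}$ says: for every shift $j \in [v]$, the AND over $S(G(i),j)$ is $0$, i.e.\ at least one of the $w$ coordinates in each block is $0$. The only coupling between $\overline{A_i}$ and $\overline{A_{i'}}$ comes through coordinates lying in some intersection $S(G(i),j)\cap S(G(i'),j')$, which by the $d$-design property has size at most $w-d$. Conditioning on the shared coordinates, the ``extra'' probability that $\overline{A_{i'}}$ holds given $\overline{A_i}$ is controlled by $\sigma$ raised to a power that grows with $d$: a block that shares $w-d$ coordinates with a block of $G(i)$ still needs its remaining $\ge d$ private coordinates to behave, so the correlation per pair is roughly $\sigma^{d} \cdot (\text{combinatorial factor})$. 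Summing over the $\le (uv)^2$ choices of $(i,j,i',j')$ and folding in $p \le 1$ produces the shape $\sigma^{d}\cdot \mathrm{poly}(u,v)$. Here is where I would route the count through the \emph{sampler} property rather than a crude union bound: define for each fixed $i$ the functions $f_k : [v]\to\{0,1\}$ that record, in terms of $G(i')$, whether block $k$ is ``heavily overlapping'' with $G(i)$, so that $\E_i[\sigma^{-F(G(i'))}]$-type quantities are bounded by $e^{c\beta\sigma^{-1}\mu}$ with $\mu$ itself of order $\sigma^{w}\cdot v / (\text{something})$ — this is exactly the mechanism the lemma statement hints at (``crucially allows us to apply the sampler property''), and it is what lets the design parameter $d$ enter only through the clean factor $e^{\sigma^{d}\log^4 u}$ instead of a worse dependence. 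The leftover $\mathrm{poly}$ factor, after the $u = \mathrm{poly}(v)$ substitution and noting $n = vw$ with $v \le n$, collapses to $1/n^{1-o(1)}$ times the stray $\beta\sigma^{-1}$ from the $(\sigma^{-1},\beta)$-sampler bound.

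Combining, $\Delta \le \sigma^{d}\log^4 u + \log\big(\text{small}\big)$ in the exponent, so $e^{\Delta} - 1 \le e^{\sigma^d \log^4 u}\cdot \beta\sigma^{-1} / n^{1-o(1)}$, and multiplying the one-sided Janson gap by $(1-p)^u = bias(u,v,w,\sigma)$ — which is $\Theta(1)$ by Fact \ref{fact:bias_independent} under the parameter regime — yields the claimed two-sided estimate on $|\E[C_G(B_\sigma)] - bias(u,v,w,\sigma)|$. I expect the main obstacle to be the middle step: setting up the auxiliary functions $f_1,\dots,f_w$ so that their sum genuinely has small enough mean $\mu \le 1/(\alpha\beta\ln u)$ to legally invoke the sampler, while still capturing \emph{all} the correlation between $\overline{A_i}$ and $\overline{A_{i'}}$ — in particular handling the shifts $j, j'$ correctly (a single pair $i,i'$ contributes $v^2$ shifted-block pairs, and one must check the design bound applies to each and that summing over $j'$ is absorbed into the sampler's $\mu$ rather than blowing it up). Getting the bookkeeping of these $\mathrm{poly}$ factors to land exactly at $n^{1-o(1)}$ rather than, say, $n^{1/2}$ is the delicate part, and is presumably where the ``refined analysis based on Janson's inequality'' in the abstract does its work.
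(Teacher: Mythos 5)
There is a genuine gap, and it is at the very first step. You propose to apply Janson's inequality (Proposition \ref{lem:janson}) with the $u$ subcircuits $\neg C_{G(i)}$ as the events, bounding $\P[\wedge_i A_i] = \P[C_G=1]$ directly by $\prod_i(1-p)\cdot(1+\dots)$. This cannot give an error of order $n^{-1+o(1)}$, for a quantitative reason: each $C_{G(i)}$ reads \emph{all} $n$ input bits (the sets $S(G(i),0),\dots,S(G(i),v-1)$ partition $[n]$), so every pair $i,i'$ satisfies the ``overlapping variables'' relation $i'\sim i$, and the dependency sum in Janson's inequality is over the full joint probabilities $\P[\overline{A_{i'}}\wedge\overline{A_i}]$, not covariances. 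By FKG each such term is at least $p^2$, and $p=\Theta(1/u)$, so $\Delta\geq\binom{u}{2}p^2=\Theta(1)$. This makes the multiplicative error $1+\sum_\ell\frac{2^\ell}{\ell!}\Delta^\ell$ a constant strictly greater than $1$, and since $(1-p)^u=\Theta(1)$, the additive error on $\E[C_G]$ is $\Theta(1)$, not $n^{-1+o(1)}$. (Separately, the $\neg C_{G(i)}$ are anti-monotone, so Proposition \ref{lem:janson} does not literally apply without a variable flip; under $B_\sigma$ with $\sigma\neq 1/2$ that changes the distribution, so some care is needed, though this issue is fixable.)

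The key idea you are missing is the Bonferroni reduction that precedes Janson in the paper's proof. The paper expands $\E[\neg C_G]=\P[\vee_i Z_i]$ via inclusion--exclusion truncated at an odd $K=c\log v/\log\log v$, which reduces the problem to estimating, for each $k<K$, the quantity $\E[S_k(Z_1,\dots,Z_u)]=\sum_{|T|=k}\P[C_{G(T)}=0]$. The crucial point is that Janson is then applied to $C_{G(T)}=\vee_{i\in T}\vee_{j\in[v]}A_{i,j}$, an OR of only $kv$ bottom-level ANDs $A_{i,j}$, and \emph{these} have the sparse dependency structure you intuited: $A_{i,j}$ and $A_{i,j'}$ are disjoint for $j\neq j'$, and $A_{i,j}$, $A_{i',j'}$ for $i\neq i'$ overlap in at most $w-d$ coordinates by the design property. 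This is where $\Delta(C_{G(T)})\leq\sigma^d\log^4 u$ (the deterministic bound via the design, Lemma \ref{cor:max-power}) and $\E_T\Delta(C_{G(T)})\leq\beta\sigma^{-1}/v^{1-o(1)}$ (the average bound via the sampler, Lemma \ref{lem:post-janson-1}) come in, and why the tighter $1+\sum_\ell\frac{2^\ell}{\ell!}\Delta^\ell$ form matters: it lets the paper split $\Delta^\ell\leq\Delta_{\max}^{\ell-1}\cdot\Delta$ and pull out a single small factor $\E_T\Delta(C_{G(T)})$ that lands at $n^{-1+o(1)}$. Your second paragraph has the right flavor for the sampler-vs-design bookkeeping, but it is applied at the wrong structural level: without first passing to small subsets $T$ via Bonferroni, there is no sparse dependency graph to which Janson's $\Delta$ can be usefully applied.
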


We are now ready to prove Theorem \ref{thm:resilient_explicit}.
\begin{proof}[Proof of Theorem \ref{thm:resilient_explicit}. ]

Take $G$ from Lemma \ref{thm:explicit-sampler-design}. By Lemma
\ref{lem:resilience-1-1},
\[
I_{Q,B_{\sigma}}(C_{G})\leq|Q|\cdot c\sigma^{-2}\sigma^{w}=|Q|\cdot\frac{c\sigma^{-2}}{\log(\sigma^{-1})}\cdot\frac{\log^{2}n}{n}.
\]
The last inequality follows since $\sigma^{-w}=cv/\log v$ thus $w=c\log v/\log(\sigma^{-1})$
and $v=n/w$ so $\log v=c\log n$. 

By Lemma \ref{lem:sampler-to-bias-1}, $\left|\E[C_{G}(B_{\sigma})]-bias(u,v,w,\sigma)\right|\leq\sigma^{-2}/n^{1-o(1)}$.
We conclude by Fact \ref{fact:bias_independent}.
\end{proof}

\subsection{Future Directions \label{subsec:future-directions}}

A long-standing open problem is to improve the $cn/\log^{2}n$ resilience
achieved by Ajtai and Linial. In this direction, we pose the following
question: Is there a function $f:\zo^{n}\to\zo$ such that 1) $\P[f=0]=1/n$
and 2) the influence of each bit is $c\log n/n^{2}$? In other words,
are there biased functions matching the KKL theorem? 

It is well-known there are balanced functions which match KKL, namely
TRIBES. However, when the parameters are set so that $\P[\text{TRIBES}=0]=1/n$,
the influence of each bit becomes $c\log^{2}n/n^{2}$. The Ajtai-Linial
construction is an AND over such biased TRIBES. 

Thus any progress on the question above can be viewed as a first step
towards beating $cn/\log^{2}n$ resilience. The works of \cite{EG20,EKLM22}
provide some information on the structure of functions which match
the KKL Theorem that may be helpful.

Another problem is to obtain alternative constructions that improve
on \cite{Ben-OrLi85}. One possible approach follows from Theorem
\ref{thm:resilient_explicit} by setting $w$ constant (and a suitable
$\sigma$) which allows one to reach influence $n^{-0.99}$; alternative
derandomizations of the Ajtai-Linial construction in this setting
could be of interest.

\subsection{Organization}

In Section \ref{sec:non-explicit} we prove Theorem \ref{thm:resilient_balanced-main}.
We prove Lemmas \ref{thm:explicit-sampler-design}, \ref{lem:resilience-1-1},
\ref{lem:sampler-to-bias-1} in Sections \ref{sec:explicit}, \ref{subsec:exist-resilient},
\ref{sec:sampler-to-bias} respectively. In Section \ref{sec:al-trick}
we prove Theorem \ref{thm:match-kkl}.

\section{Nonexplicit tradeoff resilient circuits \label{sec:non-explicit}}

Here we prove the existence of good samplers and designs over the
uniform distribution.
\begin{lem}
\label{lem:exists-shifted-sampler-1} For large enough integers $w\leq v$
there is a $G:[v^{2}2^{2w}\ln v]\to[v]^{w}$ that is a sampler and
$\lfloor w/2\rfloor$-design.
\end{lem}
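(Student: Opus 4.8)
The plan is to prove Lemma~\ref{lem:exists-shifted-sampler-1} by the probabilistic method: choose $G:[u]\to[v]^{w}$ with $u=v^{2}2^{2w}\ln v$ by picking each $G(i)$ independently and uniformly from $[v]^{w}$, and show that with positive probability $G$ is simultaneously a sampler and a $\lfloor w/2\rfloor$-design. The two events are analyzed separately and combined by a union bound, so it suffices to show each fails with probability $<1/2$.

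\emph{The design property.} Fix any two indices $i\neq i'$ and any shifts $j,j'$. Since $G(i),G(i')$ are uniform and independent elements of $[v]^{w}$, the sets $S(G(i),j)$ and $S(G(i'),j')$ agree in a given column with probability exactly $1/v$, independently across the $w$ columns (the shifts $j,j'$ are deterministic and do not affect this). Hence $|S(G(i),j)\cap S(G(i'),j')|$ is $\mathrm{Bin}(w,1/v)$, and the probability it exceeds $w-\lfloor w/2\rfloor \ge w/2$ is at most $\binom{w}{w/2}v^{-w/2}\le (2/\sqrt v)^{w}$, which is tiny. (The case $i=i'$, $j\neq j'$ gives intersection $0$ for free.) A union bound over the at most $u^{2}v^{2}\le v^{6}2^{4w}\ln^{2}v$ pairs $(i,j),(i',j')$ still leaves failure probability $o(1)$, since $(2/\sqrt v)^{w}$ beats any fixed power of $v$ times $2^{4w}$ once $v$ is large enough relative to... wait, this needs $v$ large compared to $2^{8}$, i.e.\ $v\ge$ a constant, which is fine; more carefully $(2/\sqrt v)^w \cdot 2^{4w} = (2^5/\sqrt v)^w$ which is $o(1)$ once $v>2^{10}$, and the extra $v^{6}\ln^2 v$ factor is absorbed. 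So the design property holds with probability $\ge 3/4$ for large $v$.

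\emph{The sampler property.} This is the main obstacle. I need: for \emph{every} choice of $f_{1},\dots,f_{w}:[v]\to\{0,1\}$ with $\mu:=\E_{y}[F(y)]\le 1/(c\ln u)$ (where $F=\sum_k f_k$), the empirical average $\E_{i}[2^{F(G(i))}]\le e^{c\mu}$. By Fact~\ref{thm:sampler-intro-1-1}/Fact~\ref{fact:nonzero-to-zero} the true expectation $\E_{y}[2^{F(y)}]$ is at most $e^{2\mu}$, so I want to say the empirical average over $u$ independent samples concentrates around its mean. The natural tool is a Chernoff/Bernstein bound for the i.i.d.\ nonnegative random variables $Z_i := 2^{F(G(i))}$: each $Z_i\in[1,2^{w}]$ with mean $\le e^{2\mu}$, so by a multiplicative Chernoff bound $\Pr[\,\frac1u\sum Z_i > e^{c\mu}\,]$ is exponentially small in $u/2^{w}$ (the range), which is $\ge v^{2}\ln v$ by our choice of $u$ — comfortably enough to beat the number of relevant functions. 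The subtlety is the union bound: there are $2^{wv}$ tuples $(f_1,\dots,f_w)$, far more than $e^{u/2^w}$ can handle directly. The standard fix (as in the analogous arguments behind Chattopadhyay--Zuckerman / Meka-style derandomizations) is that $2^{F(G(i))}\mathbf 1_{F\neq 0}$ only depends on $F$ through its \emph{support} restricted to the $\le w$ coordinates hit by $G(i)$, and more importantly that the bad event is monotone in the $f_k$'s, so it suffices to union-bound over $f_k$'s that are indicators of sets of a fixed total size $\le v/(c\ln u)$ summed appropriately; equivalently one discretizes $\mu$ to powers of two and for each scale bounds the number of relevant function tuples by roughly $\binom{vw}{\mu v}\le (evw/\mu v)^{\mu v}= v^{O(\mu w)}$, which is $e^{O(\mu w\ln v)}$. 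Since we only care about $\mu\le 1/(c\ln u)$ and $\ln u = O(w+\ln v)$, this is at most $e^{O(w)}$, again dominated by the Chernoff savings $e^{-\Omega(u/2^w)} = e^{-\Omega(v^2\ln v)}$. Taking $c$ in the sampler definition large enough makes the per-tuple bound $e^{-\Omega(\mu u)}$ rather than $e^{-\Omega(u/2^w)}$, which is even more slack. Combining, the sampler property fails with probability $\le 1/4$.

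\emph{Conclusion.} For large enough $w\le v$, with probability $\ge 1/2$ a uniformly random $G:[v^{2}2^{2w}\ln v]\to[v]^{w}$ is both a sampler and a $\lfloor w/2\rfloor$-design; in particular such a $G$ exists. I expect the routine parts (the binomial tail for designs, the Chernoff bound for a \emph{fixed} tuple) to be short, and the one place demanding care is packaging the union bound over all function tuples $(f_1,\dots,f_w)$ — specifically, arguing via discretization of $\mu$ and monotonicity that only $e^{O(w+\mu\ln v)}$ tuples per scale matter, so that the sample size $u=v^{2}2^{2w}\ln v$ (with its $2^{2w}$ rather than merely $2^{w}$ factor) suffices. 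The extra factor of $2^{w}$ in $u$ beyond the naive $2^{w}/\mu$ is presumably exactly what absorbs this union bound.
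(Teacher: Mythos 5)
Your overall plan matches the paper's: pick $G$ uniformly at random, show the sampler property and the $\lfloor w/2\rfloor$-design property each fail with probability $<1/2$, and union bound. The design half is handled essentially identically (binomial intersection tail plus a union bound over the $\le (uv)^2$ pairs), and the sampler half rests on the same idea (concentration of the empirical sum of $2^{F(G(i))}$ terms, then a union bound over the choices of $(f_1,\dots,f_w)$). So this is not a genuinely different route.

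That said, you flagged the union bound over function tuples as ``the one place demanding care'' and you left it as a sketch with some detours that are not quite right. You do not need a powers-of-two discretization or a monotonicity argument: $\mu=\E_y[F(y)]$ is automatically discrete, taking only values $t/v$ for integers $t\ge 1$, so one simply unions over $t$. And the count of tuples with total weight $t$ is exactly
\[
\sum_{\ell_1+\cdots+\ell_w=t}\binom{v}{\ell_1}\cdots\binom{v}{\ell_w}=\binom{vw}{t}\le (vw)^t,
\]
by Vandermonde — your $\binom{vw}{\mu v}$ estimate is the same number, but you did not complete the comparison against the concentration bound. The paper makes the comparison clean by working with $S_F(G):=\sum_{i}2^{F(G(i))}\mathbf 1_{F(G(i))\neq 0}$ (the indicator version from Claim~\ref{claim:equivalent-definition}), which by Fact~\ref{fact:nonzero-to-zero} has $\E_G[S_F(G)]\le 4u\mu$, and then applying Hoeffding on the range $[0,2^w]$: for $\mu=t/v$ this gives tail $\exp(-8ut^2/(v^2 2^{2w}))\le v^{-8t^2}$ once $u\ge v^2 2^{2w}\ln v$. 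Against a count of $(vw)^t\le v^{2t}$ this wins for every $t\ge 1$, and summing $v^{-6t}$ over $t$ gives $<1/2$. In contrast, your version keeps $Z_i=2^{F(G(i))}\in[1,2^w]$ with mean near $1$, for which a multiplicative Chernoff deviation of order $\mu$ is awkward; shifting to $Z_i-1$ essentially forces you back to the additive Hoeffding form the paper uses. So: right strategy, but the decisive step — the exact count-versus-tail comparison — is the piece you acknowledged was missing, and it is where the lemma's specific choice $u=v^2 2^{2w}\ln v$ gets used.
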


Combining this with Lemmas \ref{lem:resilience-1-1} and \ref{lem:sampler-to-bias-1}
we can prove Theorem \ref{thm:resilient_balanced-main}.
\begin{proof}[Proof of Theorem \ref{thm:resilient_balanced-main}]

Fix $w$ and set $v=\lceil2^{w}\ln(u/\ln2)\rceil$ where $u:=(vw)^{4}$
and set $n=vw$. Note $u\geq v^{2}2^{2w}\ln v.$ Take $G$ from Lemma
\ref{lem:exists-shifted-sampler-1}. By Lemma \ref{lem:resilience-1-1},
\[
I_{Q}(C_{G})\leq c|Q|\cdot2^{-w+1}\leq c|Q|\cdot\frac{\log^{2}n}{n}.
\]
The last inequality follows since $v=c2^{w}\log n$ so $n=c2^{w}w\log n$
which implies that $2^{w}=cn/\log^{2}n$. And by Lemma \ref{lem:sampler-to-bias-1}
and Fact \ref{fact:bias_independent}, $|\E[C_{G}]-1/2|\leq n^{-1+o(1)}.$
\end{proof}
Let us add a couple of remarks. First, one can improve the final circuit
size to $cn^{2}$ through a different proof which avoids the sampler
definition and argues directly about each subcircuit. However, we
choose to present the current proof as it is simpler and more consistent
with the explicit construction. Second, one can replace Lemma \ref{lem:sampler-to-bias-1},
the proof of which is somewhat involved, with a simpler argument due
to \cite{Wellens20}, at the cost of the monotonicity of the final
nonexplicit circuit.

\subsection{Proof of Lemma \ref{lem:exists-shifted-sampler-1} }

For any $G:[u]\to[v]^{w}$ and $F=\sum_{k\in[w]}f_{k}$, where $f_{1},\dots,f_{w}:[v]\to\{0,1\}$,
let $S_{F}(G):=\sum_{i\in[u]}2^{F(G(i))}\mathbf{1}_{F(G(i))\neq0}$.
By Fact \ref{fact:nonzero-to-zero}, over a uniformly sampled $G:[u]\to[v]^{w}$
and any fixed $F$ with mean $\E_{y\in[v]^{w}}[F(y)]=\mu<1/2$ we
have
\[
\E_{G}\left[S_{F}(G)\right]\leq u\cdot4\mu.
\]
By the above and Hoeffding's inequality, for any fixed $F$ with mean
$\mu=t/v$ we have

\[
\P_{G}\left[S_{F}(G)-4u\mu>2u\mu\right]<exp\left(-\frac{2(2u\mu)^{2}}{u2^{2w}}\right)<exp\left(-\frac{8ut^{2}}{v^{2}2^{2w}}\right)<v^{-8t^{2}}.
\]
The last $<$ follows for $u\geq v^{2}2^{2w}\ln v$. Now we union
bound the probability of some bad $F$ with mean $t/v$:
\begin{align*}
\P_{G}[\exists F:\E[F]=t/v\wedge S_{F}(G)>6u\mu]<|\{F:\E[F]=t/v\}|\cdot v^{-8t^{2}} & \leq v^{-6t}.
\end{align*}
The last $\leq$ follows by the Vandermonde identity, which says there
are $\sum_{\ell_{1}+\dots+\ell_{w}=t}\binom{v}{\ell_{1}}\dots\binom{v}{\ell_{w}}=\binom{vw}{t}\leq(vw)^{t}$
functions $F$ with mean $t/v$. 

After a union bound over all possible $1\leq t\leq v$, the probability
of some bad $F$ with mean $\leq1$ is at most $\sum_{t=1}^{\infty}v^{-6t}<2v^{-6}<1/2$.

Now we bound the probability that $G$ is a $w/2$ design. For a fixed
$z\in[v]^{w}$ and a uniformly sampled $y\sim[v]^{w}$, the probability
$y$ intersects $z$ in more than $w/2$ elements is $\leq(c/v)^{w/2}$.
Thus by a union bound, 
\[
\P_{G}[\exists i'\neq i\in[u]\wedge j',j\in[v]:|S(G(i'),j')\cap S(G(i),j)|\geq w/2]\leq(uv)^{2}(c/v)^{w/2}<1/2.
\]
The last $<$ holds for $w$ a large enough constant since $u=(vw)^{4},w\leq v$.
Hence there is a $G:[u]\to[v]^{w}$ that is a sampler and $(w/2)$-design
for $u\geq v^{2}2^{2w}\ln v$. $\hfill\qed$

\section{Explicit tradeoff resilient circuits \label{sec:explicit}}

In this section we we prove Lemma \ref{thm:explicit-sampler-design},
restated below, which provides an explicit $\sigma^{-1}$-sampler
and design. The bulk of the output of $G:[u]\to[v]^{w}$ will come
from an expander walk, but a small subset of the output of size roughly
$\log w$ will come from a constant-degree Reed-Solomon code. 

To analyze the sampler property of $G$ (Proposition \ref{prop:G-is-sampler})
we apply Cauchy-Schwarz and use the known sampler properties of expander
walks (Theorem \ref{thm:sampler-walk}) and Reed-Solomon codes (Proposition
\ref{prop:cwise-sampler}). However, there is a $\sigma^{-1}$ factor
loss from applying Cauchy-Schwarz.

$G$ will be approximately a $c\log w$-design (Proposition \ref{prop:G-is-design})
which simply follows from the design properties of the Reed-Solomon
code.
\begin{lem*}
\label{thm:explicit-sampler-design-1} Fix $w\leq v$ where $v$ is
prime and $\sigma\in(0,1/2]$ s.t. $\sigma^{-w}=\frac{Cv}{\log v}$
for a fixed constant $C$. Then there is an explicit $G:[u]\to[v]^{w}$
that is a $(\sigma^{-1},\sigma^{-1})$-sampler and $4\log\log u/\log(\sigma^{-1})$-design,
where $u=\poly(v)$.
\end{lem*}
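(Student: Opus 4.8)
The plan is to realize $G$ as a concatenation of two codes, following \cite{Meka17}: most output coordinates come from an expander walk, which supplies the sampler property via Theorem \ref{thm:sampler-walk}, while the remaining $\Theta(\log\log u/\log(\sigma^{-1}))$ coordinates come from a constant‑degree Reed--Solomon code, which supplies the design property via Fact \ref{fact:rs-design}. Concretely, fix a large universal constant $\ell$, set $u:=v^{\ell}$ and identify $[u]$ with $\mathbb{F}_v^{\ell}$ (this is where primality of $v$ is used), and put $m:=\lceil 4\log\log u/\log(\sigma^{-1})\rceil+\ell$; note $m\le w$ for large $v$ since $\log\log u=\Theta(\log\log v)$ while $w=\Theta(\log v/\log(\sigma^{-1}))$. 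Take an explicit $(v,d,\lambda)$‑expander with $d=\poly(\sigma^{-1})$ and $\lambda<\sigma^{2}/2$, and let $W:[vd^{w-m}]\to[v]^{w-m}$ output its length‑$(w-m)$ walks; since $w-m\le\log v/\log(\sigma^{-1})$ and $\log d=O(\log(\sigma^{-1}))$ one has $vd^{w-m}\le v^{\ell}$ for $\ell$ large enough, so fix any surjection $a\mapsto\widehat a$ from $[u]$ onto $[vd^{w-m}]$. Let $R:\mathbb{F}_v^{\ell}\to[v]^{m}$ be the degree‑$\ell$ Reed--Solomon code of Fact \ref{fact:rs-design}, and define
\[
G(a):=\bigl(W(\widehat a),\,R(a)\bigr)\in[v]^{w-m}\times[v]^{m}=[v]^{w}.
\]
Then $u=\poly(v)$, as required.

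The design statement (Proposition \ref{prop:G-is-design}) is immediate from Fact \ref{fact:rs-design}. Given $(i,j)\neq(i',j')$: if $i=i'$ then $j\neq j'$ and $S(G(i),j)\cap S(G(i'),j')=\emptyset$; if $i\neq i'$ then $R(i)\neq R(i')$, and since $R$ is an $(m-\ell)$‑design the $m$ Reed--Solomon coordinates of the two sets agree in at most $\ell$ places, so the sets agree in at most $(w-m)+\ell$ places. Hence $G$ is an $(m-\ell)$‑design, and $m-\ell\geq 4\log\log u/\log(\sigma^{-1})$.

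For the sampler statement (Proposition \ref{prop:G-is-sampler}), take $f_1,\dots,f_w:[v]\to\{0,1\}$ with $\mu:=\E_y[F(y)]$ small, write $F=F_1+F_2$ where $F_1$ sums the $w-m$ walk coordinates and $F_2$ the $m$ Reed--Solomon coordinates, and set $\a:=\sigma^{-1}$. Cauchy--Schwarz gives
\[
\E_{a\in[u]}\bigl[\a^{F(G(a))}\bigr]\ \le\ \sqrt{\E_{\widehat a}\bigl[(\a^{2})^{F_1(W(\widehat a))}\bigr]}\cdot\sqrt{\E_{a\in[u]}\bigl[(\a^{2})^{F_2(R(a))}\bigr]}.
\]
Since $\lambda<1/(2\a^{2})$, Theorem \ref{thm:sampler-walk} makes $W$ an $\a^{2}$‑sampler, so (for $\ell$ large enough that $\mu_1\le\mu$ is below the required threshold) the first factor is at most $e^{c\a^{2}\mu_1}$. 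For the second factor I will prove a ``$c$‑wise sampler'' lemma (Proposition \ref{prop:cwise-sampler}): any $\ell$‑wise independent map into $[v]^{m}$ with $\ell$ a large enough constant is an $(\a^{2},c)$‑sampler, provided $m$ is not too large relative to $\log u$ and $\log\a$. Here $R$ is $\ell$‑wise independent by Fact \ref{fact:rs-design}, and our choice $m=\lceil 4\log\log u/\log(\sigma^{-1})\rceil+\ell$ makes $(\a^{2})^{m}=\sigma^{-2m}=\sigma^{-2\ell}(\log u)^{c}$, so the $\sigma$‑dependence cancels and a single universal $\ell$ works for all $\sigma$. Via Claim \ref{claim:equivalent-definition}, the bound reduces to estimating $\E[(\a^{2})^{F_2}\mathbf 1_{F_2\neq 0}]=\sum_{j\geq 1}(\a^{2})^{j}\Pr[F_2=j]$: for $j\leq\ell$ one uses the elementary‑symmetric bound $\Pr[F_2\geq j]\leq\mu_2^{j}/j!$ as in Fact \ref{fact:nonzero-to-zero}; for $j>\ell$ one uses $\Pr[F_2\geq j]\leq\binom{j}{\ell}^{-1}\mu_2^{\ell}/\ell!$ together with $F_2\leq m$, whose total contribution is at most $m\,\mu_2^{\ell}(\a^{2})^{m}\leq c\a^{2}\mu_2$ by the cancellation above and $\mu_2\leq\mu$ small. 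Combining, $\E_a[\a^{F(G(a))}]\leq e^{c\a^{2}\mu}=e^{c\cdot\sigma^{-1}\cdot\a\mu}$, i.e.\ $G$ is a $(\sigma^{-1},\sigma^{-1})$‑sampler; the extra factor $\sigma^{-1}$ in $\b$ is exactly the loss from passing from $\a$ to $\a^{2}$ in Cauchy--Schwarz. The lemma then follows by combining Propositions \ref{prop:G-is-sampler} and \ref{prop:G-is-design}.

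I expect the main obstacle to be the second Cauchy--Schwarz factor: a constant‑degree Reed--Solomon code is only $O(1)$‑wise independent, which is generically far too weak for an exponential‑moment (sampler) estimate, so no ``bounded independence fools samplers'' black box applies. What rescues it is that the Reed--Solomon block is extremely short, $m=\Theta(\log\log u/\log(\sigma^{-1}))$, so $F_2$ takes only tiny values and the crude tail term $\sigma^{-2m}\mu_2$ is merely polylogarithmic in $u$ relative to the target $\a^{2}\mu_2$; the delicate part is calibrating $m$, $\ell$ and the expander degree so that one universal constant $\ell$ suffices uniformly in $\sigma$. A lesser technical point is securing an explicit $(v,d,\lambda)$‑expander on exactly $v$ vertices for prime $v$ with $\lambda<\sigma^{2}/2$ and $d=\poly(\sigma^{-1})$ and fitting its walk domain inside $\mathbb{F}_v^{\ell}$; both are handled by standard expander constructions and a sufficiently generous choice of $\ell$.
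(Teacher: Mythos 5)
Your proposal is correct and follows essentially the same route as the paper: concatenate an expander walk (sampler via Theorem \ref{thm:sampler-walk}) with a constant-degree Reed--Solomon code of length $\Theta(\log\log u/\log\sigma^{-1})$ (design via Fact \ref{fact:rs-design}), split $F=F_1+F_2$, apply Cauchy--Schwarz at the cost of a $\sigma^{-1}$ loss, and control the Reed--Solomon factor by combining $\ell$-wise uniformity with the fact that $F_2\le m$ is tiny so the crude $\a^{2m}\mu_2^{\ell}$ tail is dominated. The only cosmetic difference is that you set $u:=v^{\ell}$ and map onto the walk domain by a surjection rather than taking $u$ to be the walk domain itself and injecting into the RS domain; this introduces a minor fiber-imbalance issue in the Cauchy--Schwarz step that needs a one-line fix (e.g., pass through the equivalent form of Claim \ref{claim:equivalent-definition}), but it is of the same order of technicality as the paper's own implicit domain restriction.
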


\subsection{Proof of Lemma \ref{thm:explicit-sampler-design}}

Fix $w\leq v$ where $v$ is prime and $\sigma\in(0,1/2]$ so that
$\sigma^{-w}=v/\ln(u/\ln2)$ for $u$ which we specify below.

It is well known there are explicit $(v,d,\lambda)$ expanders with
$d\leq\lambda^{-c}$ for any explicit $\lambda$. For instance, one
can take powers of constant degree expanders with constant expansion.
Let $W$ output walks of length $w_{1}\leq w$ on such an expander
with $\lambda=\sigma^{2}/4$ for some $w_{1}$ we specify later. By
Theorem \ref{thm:sampler-walk}, $W:[vd^{w_{1}}]\to[v]^{w}$ is a
$\sigma^{-2}$-sampler. Note $vd^{w_{1}}=v^{c}$ since $d=\sigma^{-c}$
and $\sigma^{-w}\leq v$. 

Let $RS:[v]^{c_{1}}\to[v]^{w_{2}}$ denote the code from Fact \ref{fact:rs-design}
where $c_{1}$ is a constant large enough so that $v^{c_{1}}\geq vd^{w_{1}}$
and $w_{2}:=(c_{1}/3)\max(\lfloor\log\ln u/\log(\sigma^{-1})\rfloor,4)$.
Note $w_{2}<w$ since $\sigma^{-w}=v/\ln(u/\ln2)$ and $c_{1}<w_{2}$. 

Finally, we set $w_{1}:=w-w_{2}$ and $u:=vd^{w_{1}}$ which implies
$v/\ln(u/\ln2)=(1+o(1))Cv/\log v$ for some fixed constant $C$. We
define $G:[u]\to[v]^{w}$ as follows:
\[
\begin{cases}
G(i)_{k}=W(i)_{k} & \text{if }1\leq k\leq w_{1}\\
G(i)_{k}=RS(i)_{k-w_{1}} & \text{if }w_{1}<k\leq w.
\end{cases}
\]

\begin{prop}
\label{prop:G-is-sampler}$G$ is a $(\sigma^{-1},\sigma^{-1})$-sampler.
\end{prop}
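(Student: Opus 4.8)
The plan is to prove that the concatenated generator $G$ is a $(\sigma^{-1},\sigma^{-1})$-sampler by reducing to the two known sampler facts about its components: the expander walk $W$ is a $\sigma^{-2}$-sampler (Theorem \ref{thm:sampler-walk}) and the Reed--Solomon code $RS$ is $c_1$-wise independent, hence a suitable sampler on short functions (Proposition \ref{prop:cwise-sampler}, to be invoked as a black box). I will work from the equivalent formulation in Claim \ref{claim:equivalent-definition}: fixing $f_1,\dots,f_w:[v]\to\{0,1\}$ with $F=\sum_k f_k$ and $\mu=\E_{y}[F(y)]\le \frac{1}{\sigma^{-1}\cdot\sigma^{-1}\ln u}=\frac{\sigma^{2}}{\ln u}$, the goal is to bound $\E_{i\in[u]}\bigl[\sigma^{-F(G(i))}\mathbf{1}_{F(G(i))\ne 0}\bigr]\le c\,\sigma^{-2}\mu$.

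The key idea is to split $F$ according to the two blocks of coordinates that $G$ feeds from $W$ and from $RS$. Write $F(G(i)) = F_1(i) + F_2(i)$ where $F_1(i):=\sum_{k\le w_1} f_k(W(i)_k)$ depends only on the walk part and $F_2(i):=\sum_{w_1<k\le w} f_k(RS(i)_{k-w_1})$ depends only on the $w_2$ Reed--Solomon symbols, which are $w_2$-wise (hence $c_1$-wise, and certainly $w_2\ge c_1$) independent and each uniform on $[v]$. Then $\sigma^{-F(G(i))}=\sigma^{-F_1(i)}\sigma^{-F_2(i)}$, and I apply Cauchy--Schwarz over $i\in[u]$:
\[
\E_{i}\bigl[\sigma^{-F_1}\sigma^{-F_2}\mathbf{1}_{F\neq 0}\bigr]\le \Bigl(\E_{i}\bigl[\sigma^{-2F_1}\bigr]\Bigr)^{1/2}\Bigl(\E_{i}\bigl[\sigma^{-2F_2}\mathbf{1}_{F\neq 0}\bigr]\Bigr)^{1/2}.
\]
For the first factor, $\sigma^{-2}=(\sigma^{-1})^2$ and the $f_k$ for $k\le w_1$ define a function $2F_1$ in base $\sigma^{-2}$; since $W$ is a $\sigma^{-2}$-sampler and the relevant mean $\mu_1:=\E_y[\sum_{k\le w_1}f_k(y_k)]\le\mu$ is well within the sampler's hypothesis (using $\mu\le\sigma^2/\ln u\le 1/(\sigma^{-2}\cdot c\ln u)$ for $\sigma\le 1/2$), the sampler bound gives $\E_i[\sigma^{-2F_1}]\le e^{c\sigma^{-2}\mu_1}=O(1)$, because $\sigma^{-2}\mu_1\le\sigma^{-2}\cdot\sigma^2/\ln u\le 1$. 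For the second factor I use that the $RS$ symbols are independent and uniform, so Fact \ref{fact:nonzero-to-zero} (with $\alpha=\sigma^{-2}$, applied to the $w_2$ variables $X_k=f_{k+w_1}(RS(i)_{k-w_1})$, whose sum has mean $\mu_2:=\E_y[\sum_{w_1<k\le w}f_k(y_k)]\le\mu$ and satisfies $\sigma^{-2}\mu_2\le 1$) yields $\E_i[\sigma^{-2F_2}\mathbf{1}_{F_2\ne 0}]\le 2\sigma^{-2}\mu_2$. The only subtlety is that the indicator in the second factor is $\mathbf{1}_{F\ne 0}$, not $\mathbf{1}_{F_2\ne 0}$; I handle the event $F\ne 0$ but $F_2=0$ separately. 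On that event $F_1\ne 0$, and I bound $\E_i[\sigma^{-2F_1}\mathbf{1}_{F_1\neq0}]\le c\sigma^{-2}\mu_1$ directly from the equivalent sampler formulation (Claim \ref{claim:equivalent-definition}) applied to $W$. Combining, all three contributions are $O(\sigma^{-2}\mu)$, and multiplying the two square roots gives the claimed $\E_i[\sigma^{-F(G(i))}\mathbf{1}_{F(G(i))\neq 0}]\le c\,\sigma^{-2}\mu$, which by Claim \ref{claim:equivalent-definition} is exactly the $(\sigma^{-1},\sigma^{-1})$-sampler property.

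The main obstacle I anticipate is bookkeeping the parameters so that the hypotheses of both the expander-walk sampler and Fact \ref{fact:nonzero-to-zero} are genuinely met after the base is squared to $\sigma^{-2}$: one must check that the $1/(\alpha\beta\ln u)$ mean threshold in Definition \ref{def:sampler} for the $\sigma^{-2}$-sampler $W$ (where effectively $\alpha=\sigma^{-2}$, $\beta=c$) is implied by the $\mu\le\sigma^2/\ln u$ hypothesis we are given for the target $(\sigma^{-1},\sigma^{-1})$-sampler, and likewise that $\sigma^{-2}\mu_j\le 1$ so Fact \ref{fact:nonzero-to-zero} applies; these all follow because $\sigma\le 1/2$ and $\mu\le\sigma^2/\ln u$, but they need to be stated cleanly. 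The Cauchy--Schwarz step itself is where the advertised ``$\sigma^{-1}$ factor loss'' appears — it turns a $\sigma^{-2}$-sampler into only a $(\sigma^{-1},\sigma^{-1})$-sampler — and the split-off term $F_2=0, F\neq 0$ is a minor case that should be dispatched in a line.
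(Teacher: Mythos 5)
Your overall strategy — split $F = F_1 + F_2$ according to the walk and Reed--Solomon blocks, apply Cauchy--Schwarz over $i$, and plug in the sampler bounds for the two components — is exactly the paper's approach, and the early remark that Cauchy--Schwarz is where the $\sigma^{-1}$ loss appears is correct. But there is a genuine error in how you handle the Reed--Solomon factor.

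You assert that the $w_2$ Reed--Solomon output symbols are ``$w_2$-wise (hence $c_1$-wise, and certainly $w_2 \ge c_1$) independent and each uniform on $[v]$,'' and then apply Fact~\ref{fact:nonzero-to-zero} to $\E_i[\sigma^{-2F_2}]$. This is backwards and unsound. The code $RS:[v]^{c_1}\to[v]^{w_2}$ is only $c_1$-wise independent (Fact~\ref{fact:rs-design}), and the paper explicitly arranges $c_1 < w_2$, so the $w_2$ symbols are \emph{not} mutually independent — the implication you wrote runs the wrong way, and ``$w_2 \ge c_1$'' is precisely why full independence fails, not why it holds. Consequently Fact~\ref{fact:nonzero-to-zero}, whose hypothesis is genuine mutual independence of $X_1,\dots,X_w$, does not apply to $F_2(RS(i))$. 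The correct tool is Proposition~\ref{prop:cwise-sampler} (which you name in your opening paragraph but then abandon); it only assumes $\ell$-wise uniformity and gives $\E[\alpha^{F(D)}] \le e^{c\alpha\mu} + \alpha^{w}\mu^{\ell}$. The additive error term $\alpha^{w}\mu^{\ell}$ — here $\sigma^{-2w_2}\mu_2^{c_1}$ — is the real technical content of the argument: one must then check that $(\sigma^{-2w_2}\mu_2^{c_1})^{1/2} \le \mu_2$, which the paper does by using $\sigma^{-2w_2} \le (\ln u)^{2c_1/3}$ together with the hypothesis $\mu \le \sigma^{2}/\ln u$. Your proof skips this step entirely, so the extra term is never controlled, and the argument as written does not establish the bound.

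A smaller point: the paper works with the exponential form of Definition~\ref{def:sampler} directly (bounding $\E_i[\sigma^{-F(G(i))}]$ by $e^{c\sigma^{-2}\mu}$), which avoids the case analysis you introduce for $\mathbf{1}_{F\neq 0}$ with $F_2=0$; that detour is not wrong in principle but is extra bookkeeping you would not need if you used the $e^{c\beta\alpha\mu}$ form throughout.
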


\begin{proof}
Fix $f_{1},\dots,f_{w}:[v]\to\{0,1\}$ so that $\mu=\E[F]\leq1/(c\sigma^{-2}\ln u)$
where $F=\sum_{k\in[w]}f_{k}$. Let $F_{1}=f_{1}+\dots+f_{w_{1}}$,
$F_{2}=f_{w_{1}+1}+\dots+f_{w}$, $\mu_{1}=\E[F_{1}],\mu_{2}=\E[F_{2}]$
(so $\mu_{1}+\mu_{2}=\mu$). Then
\begin{align*}
\E_{i\in[u]}[\sigma^{-F(G(i))}] & =\E_{i\in[u]}[\sigma^{-F_{1}(W(i))}\sigma^{-F_{2}(RS(i))}]\\
 & \leq\E_{i\in[u]}[\sigma^{-2F_{1}(W(i))}]^{1/2}\E_{i'\in[u]}[\sigma^{-2F_{2}(RS(i'))}]^{1/2}\\
 & \leq e^{c\sigma^{-2}\mu_{1}}\cdot(e^{c\sigma^{-2}\mu_{2}}+\sigma^{-2w_{2}}\mu_{2}^{c_{1}})^{1/2}.
\end{align*}
The second $\leq$ follows by Cauchy-Schwarz. The last $\leq$ follows
since $W$ is a $\sigma^{-2}$ sampler and by Proposition \ref{prop:cwise-sampler},
stated at the end. To conclude it suffices to show
\[
(\sigma^{-2w_{2}}\mu_{2}^{c_{1}})^{1/2}\leq\mu_{2}.
\]
This follows as $\sigma^{-2w_{2}}\leq(\ln u)^{(2c_{1}/3)}$ and $\mu_{2}^{c_{1}}\leq\mu^{(2c_{1}/3)}\mu_{2}^{(c_{1}/3)}\leq(\ln u)^{-(2c_{1}/3)}\mu_{2}^{(c_{1}/3)}$. 
\end{proof}
\begin{prop}
\label{prop:G-is-design} $G$ is a $4\log\log u/\log(\sigma^{-1})$-design.
\end{prop}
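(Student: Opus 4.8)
The plan is to reduce the design property of $G$ entirely to the design property of the Reed-Solomon code $RS$, which is supplied by Fact \ref{fact:rs-design}. Recall $G(i)$ is the concatenation of an expander walk $W(i)\in[v]^{w_1}$ on the first $w_1$ coordinates and the Reed-Solomon codeword $RS(i)\in[v]^{w_2}$ on the last $w_2$ coordinates. Correspondingly, for each $i\in[u]$ and each shift $j\in[v]$ the set $S(G(i),j)\subseteq[n]$ splits as a disjoint union $S(G(i),j)=A_i^j\sqcup B_i^j$, where $A_i^j$ consists of the $w_1$ elements in the first $w_1$ columns (determined by $W(i)$ and $j$) and $B_i^j$ consists of the $w_2$ elements in the last $w_2$ columns (determined by $RS(i)$ and $j$). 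Since these two blocks of columns are disjoint, for any $i,i',j,j'$ we have
\[
|S(G(i),j)\cap S(G(i'),j')| = |A_i^j\cap A_{i'}^{j'}| + |B_i^j\cap B_{i'}^{j'}| \le w_1 + |B_i^j\cap B_{i'}^{j'}|.
\]

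First I would handle the case $(i,j)\neq(i',j')$ with $i=i'$: then $j\neq j'$, and as remarked after the definition of $d$-design, $S(G(i),j)$ and $S(G(i),j')$ are disjoint (shifting the rows by $j-j'\not\equiv 0 \bmod v$ moves every element), so the intersection is empty and there is nothing to prove. The main case is $i\neq i'$. Here I would invoke Fact \ref{fact:rs-design}: the degree-$\ell$ Reed-Solomon code with $\ell = c_1$ (our chosen degree; here $\ell\le w_2$) is a $(w_2-c_1)$-design as a map $[v]^{c_1}\to[v]^{w_2}$, meaning $|B_i^j\cap B_{i'}^{j'}|\le w_2-(w_2-c_1)=c_1$ for all $(i,j)\neq(i',j')$ — in particular for all $i\neq i'$. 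Actually one must be slightly careful: Fact \ref{fact:rs-design} is stated for the bare code, so I should check that distinct $i\neq i'$ give distinct codewords $RS(i)\neq RS(i')$ — this holds because $RS$ is injective (it is $c_1$-wise independent with $c_1\ge 1$, or directly because two degree-$<c_1$ polynomials agreeing on $w_2\ge c_1$ points are equal) — and that the "shifted intersection" $|B_i^j\cap B_{i'}^{j'}|$ is still controlled by the code's design parameter, which it is, since shifting all row indices by a common amount $j$ (resp. $j'$) is just relabeling the alphabet $[v]$ and the design property of $RS$ is invariant under alphabet permutations applied uniformly. Combining, for $i\neq i'$ we get $|S(G(i),j)\cap S(G(i'),j')|\le w_1+c_1$.

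It then remains to convert $w_1 + c_1$ into the claimed bound $w - 4\log\log u/\log(\sigma^{-1})$, i.e. to show $w - (w_1+c_1) \ge 4\log\log u/\log(\sigma^{-1})$, equivalently $w_2 - c_1 \ge 4\log\log u/\log(\sigma^{-1})$. This is pure bookkeeping with the parameter choices: $w_2 = (c_1/3)\max(\lfloor \log\ln u/\log(\sigma^{-1})\rfloor, 4)$, so $w_2 - c_1 \ge (c_1/3 - 1)\cdot \max(\lfloor\log\ln u/\log(\sigma^{-1})\rfloor,4) \ge (c_1/3 - 1)\cdot \log\ln u/\log(\sigma^{-1})$ up to the floor, and choosing $c_1$ a large enough absolute constant (we already require $c_1$ large for $v^{c_1}\ge vd^{w_1}$) makes the leading factor at least $4$; the floor costs at most a factor absorbed by the $\max(\cdot,4)$ and by enlarging $c_1$, and $\log\ln u$ vs.\ $\log\log u$ differ only by a constant that is likewise absorbed. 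The only mild obstacle is getting these constants to line up cleanly — in particular making sure the $\max(\cdot,4)$ and the floor in the definition of $w_2$ are slack enough that $w_2-c_1\ge 4\log\log u/\log(\sigma^{-1})$ holds for all admissible $v,w,\sigma$ — but no genuine difficulty arises, since every loss is by a constant factor and $c_1$ is ours to choose.
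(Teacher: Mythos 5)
Your proof follows the same route as the paper's one-line argument, which asserts that $G$ is a $(w_2-c_1)$-design (your decomposition of $S(G(i),j)$ into the expander-walk block and the Reed--Solomon block, the trivial bound $w_1$ on the former, and Fact~\ref{fact:rs-design} on the latter) and then does the identical parameter bookkeeping, so the two proofs are essentially the same. The one inexactness is your justification via ``shifting is just relabeling the alphabet uniformly'': when $j\neq j'$ you apply \emph{different} shifts to $B_i^j$ and $B_{i'}^{j'}$, so it is not a single permutation of $[v]$; the correct reason is that $RS(i)_k+j\equiv RS(i')_k+j'\pmod v$ exactly when $k$ is a root of the degree-$<c_1$ polynomial $RS(i)-RS(i')-(j'-j)$, which is what Fact~\ref{fact:rs-design} is implicitly invoking.
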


\begin{proof}
Its clear that $G$ is a $(w_{2}-c_{1})$-design, and note $w_{2}-c_{1}\geq(c_{1}/6)\cdot(\log\ln u/\log(\sigma^{-1}))\geq4\log\log u/\log(\sigma^{-1})$
for $c_{1}$ large enough.
\end{proof}

\subsection{Remaining proof}

The following is a sampler like property of Reed-Solomon codes.
\begin{prop}
\label{prop:cwise-sampler} Let $D$ be a $\ell$-wise uniform distribution
on $[v]^{w}$. Then for any $f_{1},\dots,f_{w}:[v]\to\zo$ s.t. $\mu:=\E_{y\in[v]^{w}}[F(y)]\leq1/2\a$
where $F=\sum_{k\in[w]}f_{k}$,
\[
\E[\a^{F(D)}]\leq e^{c\a\mu}+\a^{w}\mu^{\ell}.
\]
\end{prop}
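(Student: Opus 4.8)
\textbf{Proof plan for Proposition \ref{prop:cwise-sampler}.}

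The plan is to split the expectation according to whether $F(D) = 0$ or $F(D) \neq 0$, so that $\E[\a^{F(D)}] = \P[F(D)=0] + \E[\a^{F(D)}\mathbf{1}_{F(D)\neq 0}] \leq 1 + \E[\a^{F(D)}\mathbf{1}_{F(D)\neq 0}]$, and then compare the second term to what it would be under the fully uniform distribution on $[v]^w$. Under the genuinely uniform distribution, $F$ is a sum of $w$ independent $\{0,1\}$ variables with mean $\mu < 1/2$, and $\a\mu \le 1/2 < 1$, so Fact \ref{fact:nonzero-to-zero} gives $\E[\a^{F(\text{unif})}\mathbf{1}_{F\neq 0}] \leq 2\a\mu \le c\a\mu$, which matches the first term of the claimed bound. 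So the real content is: how much can $\ell$-wise uniformity, as opposed to full uniformity, inflate $\E[\a^{F(D)}\mathbf{1}_{F(D)\neq 0}]$?

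The key step is to expand $\a^{F(D)}\mathbf{1}_{F(D)\neq 0}$ (equivalently $\a^{F(D)} - 1$, but restricted to nonzero $F$) as a polynomial in the indicator functions $f_k$, and observe that its degree is at most $\ell$ in the relevant sense. Concretely, write $\a^{F(y)} = \prod_{k\in[w]}\bigl(1 + (\a-1)f_k(y_k)\bigr)$, since each $f_k\in\{0,1\}$. Expanding the product gives $\a^{F(y)} = \sum_{S\subseteq[w]} (\a-1)^{|S|}\prod_{k\in S}f_k(y_k)$, and subtracting the $S=\emptyset$ term leaves $\a^{F(y)}-1 = \sum_{\emptyset\neq S\subseteq[w]}(\a-1)^{|S|}\prod_{k\in S}f_k(y_k)$; note this equals $\a^{F(y)}\mathbf{1}_{F(y)\neq 0}$ because the left side vanishes exactly when all $f_k(y_k)=0$, i.e. when $F(y)=0$. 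Now split the sum at $|S|=\ell$. For $|S| < \ell$ the monomial $\prod_{k\in S}f_k(y_k)$ involves at most $\ell-1 < \ell$ coordinates, so its expectation under the $\ell$-wise uniform $D$ equals its expectation under the uniform distribution. Hence the ``low-degree part'' of $\E[\a^{F(D)}\mathbf{1}_{F(D)\neq 0}]$ exactly matches the uniform case, and is therefore at most $\E[\a^{F(\text{unif})}\mathbf{1}_{F\neq 0}] + (\text{the high-degree tail under uniform})\leq c\a\mu + (\text{tail})$. The crude bound is then to estimate the high-degree tail $\bigl|\sum_{|S|\ge \ell}(\a-1)^{|S|}\E_D[\prod_{k\in S}f_k(y_k)]\bigr|$ directly, using $|\a-1|^{|S|}\le \a^{|S|}\le \a^w$ and the fact that the sum of $\prod_{k\in S}\E[f_k]$ over all $S$ with $|S|\ge\ell$ is at most $\mu^\ell$ — this last point is because $\sum_{|S|=j}\prod_{k\in S}\E[f_k] = e_j(\E[f_1],\dots,\E[f_w])\le \mu^j/j!$ by AM–GM or by comparison with $(\sum_k\E[f_k])^j/j!$, and $\sum_{j\ge\ell}\mu^j/j!\le \mu^\ell$ since $\mu<1/2$; here one uses that under $D$ each individual $\E_D[f_k(y_k)]$ equals $\E[f_k]$ (marginals of an $\ell$-wise uniform distribution with $\ell\ge 1$ are uniform), and for $|S|\ge \ell$ one bounds $\E_D[\prod_{k\in S}f_k(y_k)]\le \E_D[\prod_{k\in S'}f_k(y_k)] = \prod_{k\in S'}\E[f_k]$ for any $S'\subseteq S$ with $|S'|=\ell$, since the $f_k$ are $\{0,1\}$-valued and $|S'|$-wise marginals are uniform. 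Combining, $\E[\a^{F(D)}\mathbf{1}_{F(D)\neq 0}] \le c\a\mu + \a^w\mu^\ell$, and adding the $1$ from $\P[F(D)=0]$ gives the claim.

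The main obstacle is making the ``high-degree tail is at most $\a^w\mu^\ell$'' bound clean, since the tail has $\sum_{j\ge\ell}\binom{w}{j}$ terms with signs $(\a-1)^{|S|}$ that need not be positive; the fix is to bound everything in absolute value and group by $|S|=j$ so that the combinatorial sum collapses to the elementary symmetric polynomial $e_j$, which is controlled by $\mu^j/j!$, and then the geometric-type series in $j$ is dominated by its first term because $\mu\le 1/2$. A minor secondary point to get right is the reduction from $|S|\ge\ell$ back to size-exactly-$\ell$ subsets for the expectation bound under $D$ (monotonicity of $\prod f_k$ in the index set, valid since the $f_k$ are nonnegative), which is what lets us use only $\ell$-wise uniformity rather than full independence. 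Everything else is the routine product expansion of $\a^F$ and an application of Fact \ref{fact:nonzero-to-zero} for the genuinely uniform case.
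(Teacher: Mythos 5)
Your decomposition is a genuinely different one from the paper's. The paper works directly with the random variable $Y = F(D)$, first proving $\P[Y\geq k]\leq\mu^{k}$ for $k\leq\ell$ via a union bound over size-$k$ subsets plus Maclaurin, and then conditioning $\E[\a^{Y}]$ on $Y=0$, $Y=k$ for $1\leq k<\ell$, and $Y\geq\ell$. You instead expand $\a^{F}=\prod_{k}(1+(\a-1)f_{k})$ as a polynomial in the indicators and split the sum at degree $\ell$. Your treatment of the low-degree part is fine: all terms $(\a-1)^{|S|}\prod_{S}f_{k}$ are nonnegative, so the low-degree piece under $D$ equals the low-degree piece under the uniform distribution, which is at most the full uniform expectation $\E_{\mathrm{unif}}[\a^{F}\mathbf{1}_{F\neq0}]\leq2\a\mu$ by Fact \ref{fact:nonzero-to-zero}.

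However, your bound on the high-degree tail has a real gap. You propose to pull out $|\a-1|^{|S|}\leq\a^{w}$ and then bound $\sum_{|S|\geq\ell}\E_{D}[\prod_{k\in S}f_{k}]\leq\mu^{\ell}$. Under full independence this is essentially correct via $\sum_{|S|=j}\prod_{S}\E[f_{k}]=e_{j}\leq\mu^{j}/j!$. But under mere $\ell$-wise uniformity, $\E_{D}[\prod_{S}f_{k}]$ for $|S|>\ell$ is not determined by the marginals and in particular does not equal $\prod_{S}\E[f_{k}]$. The repair you sketch --- bounding $\E_{D}[\prod_{S}f_{k}]\leq\prod_{S'}\E[f_{k}]$ for some $S'\subseteq S$ of size $\ell$ --- does control each term, but after summing over all $S$ with $|S|\geq\ell$ (of which there are up to $2^{w}$), the best you can extract this way is roughly $2^{w}\mu^{\ell}$, not $\mu^{\ell}$. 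To see that the intermediate inequality $\sum_{|S|\geq\ell}\E_{D}[\prod_{S}f_{k}]\leq\mu^{\ell}$ is genuinely false, note that one can construct an $\ell$-wise uniform $D$ that places mass $q=e_{\ell}(\E[f_{1}],\dots,\E[f_{w}])$ on the all-ones outcome; then every term with $|S|\geq\ell$ is at least $q$, and the sum is at least $q\cdot\sum_{j\geq\ell}\binom{w}{j}$, which exceeds $\mu^{\ell}$ once $w$ is even moderately larger than $\ell$.

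The fix is to not separate the $\a^{w}$ factor from the indicator. For each fixed $y$, letting $m=F(y)$, we have
\[
\sum_{|S|\geq\ell}(\a-1)^{|S|}\prod_{k\in S}f_{k}(y_{k})=\sum_{j\geq\ell}\binom{m}{j}(\a-1)^{j}\leq\a^{m}\mathbf{1}_{m\geq\ell}\leq\a^{w}\mathbf{1}_{F(y)\geq\ell},
\]
so the tail in expectation is at most $\a^{w}\P_{D}[F\geq\ell]$, and then $\P_{D}[F\geq\ell]\leq\sum_{|S|=\ell}\prod_{k\in S}\E[f_{k}]\leq\binom{w}{\ell}(\mu/w)^{\ell}\leq\mu^{\ell}$ by $\ell$-wise uniformity and Maclaurin. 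Note this last estimate is exactly the central inequality of the paper's proof, so closing the gap this way effectively merges your approach back into the paper's.
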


\begin{proof}
Define the random variables $Y_{1}=f_{1}(y_{1}),\dots Y_{w}=f_{w}(y_{w})$
where $y$ is sampled from $D$ and let $Y=\sum_{k\in[w]}Y_{k}$.
Then 
\[
\P[Y\geq\ell]\leq\sum_{S\subseteq[w],|S|=\ell}\P\left[\prod_{i\in S}Y_{i}=1\right]=\sum_{S\subseteq[w],|S|=\ell}\prod_{i\in S}\P[Y_{i}=1]\leq\binom{w}{\ell}\left(\frac{\mu}{w}\right)^{\ell}\leq\mu^{\ell}.
\]
The $=$ follows as $D$ is $\ell$-wise uniform. The next $\leq$
follows by Maclaurin's inequality. Since the above holds for any $k\leq\ell$,
we have
\begin{align*}
\E[\a^{Y}] & \leq\E[\a^{Y}|Y=0]+\sum_{k=1}^{\ell-1}\E[\a^{Y}|Y=k]\P[Y=k]+\E[\a^{Y}|Y\geq\ell]\P[Y\geq\ell]\\
 & \leq1+\sum_{k=1}^{\ell-1}\a^{k}\mu{}^{k}+\a^{w}\mu^{\ell}\\
 & \leq1+2\a\mu+\a^{w}\mu^{\ell}.
\end{align*}
The last $\leq$ follows since $\alpha\mu<1/2$. We can now conclude
since $1+x\leq e^{x}$.
\end{proof}

\section{Sampler to resilience \label{subsec:exist-resilient}}

In this section we prove Lemma \ref{lem:resilience-1-1} which says
that when $G$ is a sampler, then the resulting circuit $C_{G}$ will
be resilient. To do so we bound the probability that each read-once
subcircuit $C_{G(i)}$ will be fixed and then apply a union bound.
Each subcircuit $C_{G(i)}$ is unlikely to be fixed since the coalition
size is less than the number of independent $AND_{w}$ terms in $C_{G(i)}$. 
\begin{lem*}
\label{lem:resilience-1} Fix integers $w\leq v\leq u$ and $\sigma\in(0,1/2]$
so that $\sigma^{-w}=v/\ln(u/\ln2)$ and suppose $G:[u]\to[v]^{w}$
is a $(\sigma^{-1},\b)$-sampler. Then for any $Q\subseteq[n]$ s.t.
$|Q|\leq c\sigma^{-w+1}/\b$,
\[
I_{Q,B_{\sigma}}(C_{G})\leq|Q|\cdot\b\sigma^{w-1}.
\]
\end{lem*}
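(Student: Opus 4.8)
The plan is to bound $I_{Q,B_\sigma}(C_G)$ by a union bound over the read-once subcircuits $C_{G(i)}$. Recall that $C_G(x) = \wedge_{i\in[u]} C_{G(i)}(x)$, and each $C_{G(i)}$ depends only on the $vw = n$ bits (in a read-once fashion within itself: its $v$ AND-terms $\wedge_{k\in S(G(i),j)} x_k$ are over disjoint sets $S(G(i),0),\dots,S(G(i),v-1)$ partitioning $[n]$). The key observation is that $C_G$ is left undetermined after sampling $\overline{Q}\sim B_\sigma$ only if \emph{some} subcircuit $C_{G(i)}$ is left undetermined and moreover no subcircuit has already been forced to $0$; in particular, if $C_{G(i)}$ is not fixed then in particular it has not evaluated to $0$, so the value of $C_{G(i)}$ under the partial assignment is ``possibly $1$''. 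Hence $I_{Q,B_\sigma}(C_G) \le \sum_{i\in[u]} \P[C_{G(i)} \text{ is not fixed by } \overline{Q}]$, which after renormalizing is $u\cdot \E_{i\in[u]}\P[C_{G(i)}\text{ not fixed by }\overline{Q}]$.

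Next I would bound, for a fixed $i$, the probability that $C_{G(i)}$ is not fixed after sampling $\overline{Q}$ according to $B_\sigma$. For $C_{G(i)} = \vee_{j\in[v]} (\wedge_{k\in S(G(i),j)} x_k)$ to remain undetermined, every AND-term must remain undetermined or forced to $0$ — equivalently, no term is forced to $1$ — and at least one term must remain undetermined (i.e., contain a bit of $Q$ that is not already killed by a sampled $0$). For a single term $j$, the probability (over $B_\sigma$ on the non-$Q$ coordinates) that it is \emph{not} forced to $0$ is $\sigma^{|S(G(i),j)\cap \overline Q|} = \sigma^{w}\cdot\sigma^{-|S(G(i),j)\cap Q|}$. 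Summing over all $v$ terms and using that the $S(G(i),j)$ partition $[n]$ so $\sum_j |S(G(i),j)\cap Q| = |Q|$, the expected number of terms not forced to $0$ is $\sigma^w \sum_{j\in[v]} \sigma^{-|S(G(i),j)\cap Q|}$. Since $C_{G(i)}$ can only be unfixed if at least one term is unfixed, and an unfixed term is in particular one not forced to $0$ \emph{and} touching $Q$, this is where the sampler hypothesis enters: define $f_k:[v]\to\{0,1\}$ for $k\in[w]$ to be the indicator that row $f_k(r)=1$ iff the element in column $k$, row $r$ lies in $Q$; then $F(y):=\sum_k f_k(y) = |S(y)\cap Q|$ and more generally $|S(y,j)\cap Q| = F(y')$ for the appropriate shift, and $\mu := \E_{y}[F(y)] = |Q|/v$. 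The condition $|Q|\le c\sigma^{-w+1}/\b$ together with $\sigma^{-w} = v/\ln(u/\ln 2)$ ensures $\mu \le 1/(\a\b\ln u)$ with $\a=\sigma^{-1}$, so the sampler property (in the equivalent form of Claim \ref{claim:equivalent-definition}) applies and gives $\E_{i\in[u]}[\sigma^{-F(G(i))}\mathbf{1}_{F(G(i))\neq 0}] \le c\b\sigma^{-1}\mu$.

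Putting it together: the probability (over the $B_\sigma$-sample) that $C_{G(i)}$ is not fixed is at most $\sigma^{w}\cdot(\text{number of shifted terms touching }Q\text{ with all their }Q\text{-bits potentially }1)$; averaging over $i$ and over the $v$ shifts $j$, the relevant quantity is bounded by $\sigma^w \cdot v \cdot \E_{i}[\sigma^{-F(G(i))}\mathbf 1_{F(G(i))\neq 0}] \le \sigma^w \cdot v \cdot c\b\sigma^{-1}\mu = \sigma^w v\cdot c\b\sigma^{-1}\cdot |Q|/v = c\b\sigma^{w-1}|Q|$. Since $u = v/\sigma^{-w}\cdot\ln(u/\ln 2)$-style bookkeeping shows the $u$ and $v$ factors cancel appropriately with the normalization of $\E_{i\in[u]}$ versus $\sum_{i\in[u]}$, this yields $I_{Q,B_\sigma}(C_G)\le |Q|\cdot c\b\sigma^{w-1}$. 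The main obstacle I anticipate is the careful accounting of the shifts: one must argue that averaging over $j\in[v]$ of $\sigma^{-|S(G(i),j)\cap Q|}\mathbf 1_{(\cdots)\neq 0}$ is exactly captured by $\E_{y}[\sigma^{-F(y)}\mathbf 1_{F(y)\neq 0}]$ for the functions $f_k$ defined from $Q$ — i.e., that shifting rows mod $v$ and uniform choice of $y$ interact correctly — and then verifying the parameter inequality $\mu\le 1/(\a\b\ln u)$ follows from $|Q|\le c\sigma^{-w+1}/\b$. Everything else is a union bound and bookkeeping.
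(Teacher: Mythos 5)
Your decomposition is the right one, and your identification of how the $f_k$ and the sampler hypothesis enter is exactly what the paper does. However, there is a genuine gap: your bound on the probability that $C_{G(i)}$ is not fixed uses only the event you call ``at least one term remains undetermined'' (the paper's event $F_i$: some $Q$-touching term has all its $\overline{Q}$-bits set to $1$). You name the complementary constraint (``no term is forced to $1$,'' i.e.\ the paper's $E_i$: every AND-term disjoint from $Q$ evaluates to $0$), but you never compute its probability or use it. This matters quantitatively. You correctly compute $\E_{i\in[u]}[\P[F_i]]\le c\b\sigma^{w-1}|Q|$, but the union bound is $I_{Q,B_\sigma}(C_G)\le \sum_{i\in[u]}\P[C_{G(i)}\text{ not fixed}] = u\cdot\E_i[\cdots]$, so your argument produces $u\cdot c\b\sigma^{w-1}|Q|$, off by a factor of $u$. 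Your closing sentence asserts the $u$ ``cancels appropriately with the normalization,'' but there is no $1/u$ anywhere in your computation to perform that cancellation.

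The missing step is to observe that $E_i$ and $F_i$ are independent (they concern disjoint sets of coordinates, since the $S(G(i),j)$ partition $[n]$), so $\P[C_{G(i)}\text{ not fixed}]=\P[E_i]\cdot\P[F_i]$, and to bound
\[
\P[E_i]\le(1-\sigma^w)^{v-|Q|}\le p\cdot e^{|Q|\sigma^w}\le c/u,
\]
using the hypotheses $\sigma^{-w}=v/\ln(u/\ln 2)$ (which forces $p=(1-\sigma^w)^v\le \ln 2/u$) and $|Q|\le c\sigma^{-w+1}/\b\le c\sigma^{-w}$ (so $e^{|Q|\sigma^w}\le c$). This $c/u$ factor is precisely what cancels the $u$ from the union bound, after which your sampler computation finishes the proof. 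Note also that this is the only place the hypothesis $\sigma^{-w}=v/\ln(u/\ln 2)$ is used, yet it appears nowhere in your argument, which is a signal that a step has been skipped.
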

\begin{proof}
Fix a coalition $Q\subseteq[n]$ of size $q$. After sampling from
$B_{\sigma}$ the bits indexed by $\overline{Q}$, $C_{G(i)}$ is
not fixed iff $E_{i}\wedge F_{i}=1$ where

\[
\begin{cases}
E_{i}=1 & \text{if }\forall j:S(G(i),j)\cap Q=\emptyset,A_{i,j}(x)=0,\\
F_{i}=1 & \text{if }\exists j:S(G(i),j)\cap Q\neq\emptyset,A_{i,j}(x)=\,?
\end{cases}
\]
where $A_{i,j}(x):=\wedge_{k\in S(G(i),j)}x_{k}$. By $A_{i,j}(x)=\,?$
we denote that the bits indexed by $S(G(i),j)\cap\overline{Q}$ are
set to 1. 

First we bound $\P[E_{i}]$. Since there are $v$ AND$_{w}$ terms
in $C_{G(i)}$ and $Q$ can intersect with $\leq q$ of them we have
\begin{align*}
\P[E_{i}] & \leq(1-\sigma^{w})^{v-q}\le p\cdot e^{q\sigma^{w}}\leq c/u.
\end{align*}
since $v=c\sigma^{-w}\ln u$ and $q\leq c\sigma^{-w}$.

Next we bound $\P[F_{i}]$ by union bounding over all intersecting
$j$:
\begin{align*}
\P[F_{i}] & \leq\sum_{j\in[v]:S(G(i),j)\cap Q\neq0}\sigma^{(w-|S(G(i),j)\cap Q|)}=\sigma^{w}\sum_{j\in[v]}\sigma^{-|S(G(i),j)\cap Q|}\mathbf{1}_{|S(G(i),j)\cap Q|\neq0}.
\end{align*}
Combining the bounds above we have
\begin{align*}
\sum_{i\in[u]}I_{Q}(C_{P_{i}}) & \leq\sum_{i\in[u]}\P[E_{i}]\P[F_{i}]\\
 & \leq\frac{c}{u}\sigma^{w}\sum_{i\in[u]}\sum_{j\in[v]}\sigma^{-|S(G(i),j)\cap Q|}\mathbf{1}_{|S(G(i),j)\cap Q|\neq0}\\
 & =c\sigma^{w}\sum_{j\in[v]}\E_{i\in[u]}\sigma^{-|S(G(i),j)\cap Q|}\mathbf{1}_{|S(G(i),j)\cap Q|\neq0}.
\end{align*}
To conclude we claim that for any fixed $j\in[v]$,
\[
\E_{i\in[u]}\sigma^{-|S(G(i),j)\cap Q|}\mathbf{1}_{|S(G(i),j)\cap Q|\neq0}\leq\b\sigma^{-1}q/v.
\]
To see this, for each $k\in[w]$ define $f_{k}:[v]\to\{0,1\}$ as
\[
\begin{cases}
f_{k}(y)=1 & \text{if }\{((k-1)v+y+j)\bmod v\}\in Q;\\
f_{k}(y)=0 & \text{if }\{((k-1)v+y+j)\bmod v\}\notin Q.
\end{cases}
\]
Note $|S(G(i),j)\cap Q|=F(G(i))$ and $\E_{y\in[v]^{w}}[F(y))]=q/v\leq c\sigma^{-w+1}/(v\b)\leq1/(\sigma^{-1}\b\ln u)$
where $F=\sum_{k\in[w]}f_{k}$. The claim now follows since $G$ is
a $(\sigma^{-1},\b)$ sampler.
\end{proof}

\section{Sampler and design to small bias \label{sec:sampler-to-bias}}

Here we prove Lemma \ref{lem:sampler-to-bias-1}, restated below.
\begin{lem*}
Suppose $G:[u]\to[v]^{w}$ is a $(\sigma^{-1},\b)$-sampler and a
$d$-design. Then 
\[
\left|\E[C_{G}(B_{\sigma})]-bias(u,v,w,\sigma)\right|\leq e^{\sigma^{d}\cdot\log^{4}u}\frac{\b\sigma^{-1}}{n^{1-o(1)}}.
\]
\end{lem*}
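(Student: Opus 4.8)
The plan is to bound $\E[C_G(B_\sigma)]$ from above and below, comparing it to the quantity $bias(u,v,w,\sigma) = (1-p)^u$, which is what we would get if the $u$ subcircuits $C_{G(i)}$ were mutually independent. Write $\bar C_{G(i)} := 1 - C_{G(i)}(B_\sigma)$, so $\P[\bar C_{G(i)} = 1] = p$ for each $i$. Since $C_G = \wedge_i C_{G(i)}$, we have $\E[C_G(B_\sigma)] = \P[\forall i : \bar C_{G(i)} = 0] = \P[\cap_i \overline{A_i}]$ where $A_i$ is the event $\bar C_{G(i)} = 1$. The upper bound $\E[C_G] \le (1-p)^u = bias$ would follow if the $A_i$ were negatively correlated; but monotonicity/FKG gives the wrong direction here (the $A_i = \{C_{G(i)} = 0\}$ are monotone decreasing events, which are \emph{positively} correlated), so instead the plan is to use Janson's inequality in both directions. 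The lower bound $\P[\cap_i \overline{A_i}] \ge \prod_i (1-\P[A_i]) \cdot e^{-\Delta}$ is the standard Janson bound, where $\Delta := \sum_{(i,j): i \ne j,\ A_i \sim A_j} \P[A_i \cap A_j]$ sums over \emph{dependent} pairs; two subcircuits $A_i, A_j$ are dependent exactly when some $AND_w$ term of $C_{G(i)}$ shares a variable with some $AND_w$ term of $C_{G(j)}$. The upper bound needs the \emph{slightly tighter} version of Janson advertised in the introduction (Proposition \ref{lem:janson}), which should give $\P[\cap_i \overline{A_i}] \le \prod_i(1-\P[A_i]) \cdot e^{c\Delta}$ or something comparable — the key point being that the same $\Delta$ controls both sides, so it suffices to bound $\Delta$.

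The heart of the proof is thus to show $\Delta \le e^{\sigma^d \log^4 u} \cdot \frac{\b\sigma^{-1}}{n^{1-o(1)}}$, or more precisely that $e^{c\Delta} - 1$ and $1 - e^{-\Delta}$ are both at most this. So I would expand $\Delta = \sum_{i \ne i'} \sum_{j, j'} \P[A_{i,j} \wedge A_{i',j'}]$ where $A_{i,j}$ is the event that the $AND_w$ term indexed by $S(G(i),j)$ is satisfied; under $B_\sigma$ this probability is $\sigma^{2w - |S(G(i),j) \cap S(G(i'),j')|}$, and we only get a nonzero contribution when the intersection is nonempty. Here the design property enters: since $G$ is a $d$-design, $|S(G(i),j) \cap S(G(i'),j')| \le w - d$ whenever $(i,j) \ne (i',j')$, so each term is at most $\sigma^{2w - (w-d)} = \sigma^{w+d}$ — but that crude bound alone loses too much, so I would instead keep the intersection size as a variable $r$ and write each pair's contribution as $\sigma^{2w} \sigma^{-r}$ with $1 \le r \le w-d$. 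Then, exactly as in the resilience proof (Lemma \ref{lem:resilience-1-1}), for fixed $i, j, j'$ I would package $\sum_{i'} \sigma^{-|S(G(i),j) \cap S(G(i'),j')|} \mathbf{1}_{\ne 0}$ as $u \cdot \E_{i'}[\sigma^{-F(G(i'))}\mathbf{1}_{F \ne 0}]$ for suitable $f_1,\dots,f_w : [v] \to \{0,1\}$ derived from the set $S(G(i),j)$; the mean of $F$ is at most $w/v$, which (using $\sigma^{-w} = v/\ln(u/\ln 2)$) is small enough to invoke the sampler property via Claim \ref{claim:equivalent-definition}, giving a bound of $c\b\sigma^{-1} \cdot (w/v)$ per fixed $(i,j,j')$. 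This is the step where the sampler property is crucial and where the improvement over \cite{Meka17} comes from — without the tighter Janson inequality one cannot afford to apply the sampler here.

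Finally I would assemble the pieces: summing $c\b\sigma^{-1}(w/v)$ over the $u$ choices of $i$, the $v$ choices of $j$, and the $v$ choices of $j'$, and dividing appropriately (the $\E_{i'}$ already has a $1/u$), gives $\Delta \le c\b\sigma^{-1} \cdot u v w / v = c\b\sigma^{-1} u w$ — which is far too big, so the naive counting is wrong and the design bound $r \le w-d$ must be used to win back a factor: keeping the $\sigma^{-r}$ with $r \le w-d$ rather than letting the sampler absorb it freely means the effective $\E[\sigma^{-F}\mathbf{1}_{\ne 0}]$ is bounded using only intersections of size $\le w-d$, contributing an extra $\sigma^{d}$-type savings, i.e. the sampler is applied to the truncated sum and then one multiplies by at most $\sigma^{-(w-d)} \cdot \sigma^{w}$-type factors; combined with $u = \poly(v)$, $v = n/w$, $w = \polylog n$, and the $\log^4 u$ slack (coming from bounding various $\poly(\log u)$ factors crudely), this collapses to $e^{\sigma^d \log^4 u} \cdot \b\sigma^{-1}/n^{1-o(1)}$. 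Then $|e^{c\Delta}-1|, |1-e^{-\Delta}| \le c\Delta$ when $\Delta$ is small (and if $\Delta$ is not small the bound is vacuous since the right-hand side exceeds $1$), giving the claimed inequality. The main obstacle I anticipate is getting the bookkeeping of the design exponent to interact correctly with the sampler's mean-smallness hypothesis — one must verify the relevant $F$ still has mean below the sampler threshold $1/(\a\b\ln u)$ even after restricting to the pairs that the design allows, and track exactly where the $\sigma^d$ versus $\sigma^{d}\log^4 u$ discrepancy is absorbed.
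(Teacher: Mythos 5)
There is a genuine gap: you apply Janson directly to $C_G=\wedge_{i\in[u]}C_{G(i)}$ with the $u$ compound events $A_i=\{C_{G(i)}=0\}$ as the ``bad events,'' but this framing does not work. Each subcircuit $C_{G(i)}$ reads \emph{all} $n$ variables (the $v$ ANDs of $C_{G(i)}$ partition $[n]$), so no pair is on disjoint variables: the dependency graph over the $A_i$ is complete. The Janson $\Delta$ in that framing is $\sum_{i\ne i'}\P[A_i\cap A_{i'}]\approx u^2p^2 = \Theta(1)$, which only gives constant bias. When you then write ``$\Delta=\sum_{i\ne i'}\sum_{j,j'}\P[A_{i,j}\wedge A_{i',j'}]$'' you have silently switched to a \emph{different} $\Delta$ --- the one over pairs of bottom-level AND terms --- and this new quantity does not control the first. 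Moreover one cannot simply apply Janson to all $uv$ AND terms at once either, since $\P[\bigwedge_{i,j}\neg A_{i,j}]$ is not $\P[C_G=0]$ (for $C_G=0$ you need \emph{some} $i$ to have all its ANDs fail, not all of them). Incidentally, you also have the directions backwards: the lower bound in Proposition~\ref{lem:janson} is just $\prod_i\P[C_i=0]$ (no $e^{-\Delta}$ factor), with the $\Delta$ correction appearing only in the upper bound.

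The missing idea is the Bonferroni expansion, which is what makes the AND-term-level Janson applicable. The paper first writes $\E[\neg C_G]$ and $1-bias(u,v,w,\sigma)$ as truncated inclusion--exclusion sums $\sum_{k<K}(-1)^{k-1}\E[S_k(Z_1,\dots,Z_u)]$ (with $K\approx\log v/\log\log v$), where $Z_i=\neg C_{G(i)}$ and $\E[S_k]=\sum_{T:|T|=k}\P[C_{G(T)}=0]$ with $C_{G(T)}:=\vee_{i\in T}C_{G(i)}$. Each $C_{G(T)}$ is an OR of only $kv$ AND terms, and \emph{here} Janson is applied at the AND-term level, where the dependency structure really is sparse. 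The remaining work is to bound $\E_{T:|T|=k}[\Delta(C_{G(T)})^\ell]$ for all $\ell$, which the paper does via $\E_T[\Delta^\ell]\le(\max_T\Delta)^{\ell-1}\cdot\E_T[\Delta]$: the design property controls $\max_T\Delta\le\sigma^d\log^4 u$ (Lemma~\ref{cor:max-power}), and the sampler property controls $\E_T[\Delta]\le\b\sigma^{-1}/v^{1-o(1)}$ (Lemma~\ref{lem:post-janson-1}). You correctly identify that the sampler is invoked via the same packaging trick as in the resilience proof, and that the tighter Janson (keeping the sum $\sum_\ell 2^\ell\Delta^\ell/\ell!$ rather than $e^{2\Delta}$) is what lets the expectation bound be used; but without Bonferroni there is no $T$ to average over, and your $\Delta$ does not connect to $|\E[C_G]-bias|$.
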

For this we will need two different approximations of the OR of boolean
circuits. The first is the Bonferroni inequality.
\begin{prop}
\label{fact:bonferroni}For any boolean random variables $Z_{1},\dots,Z_{n}$
and odd $K$, letting $Z:=\vee_{i\in[n]}Z_{i}$ we have
\[
0\leq\P[Z]-\sum_{k\in[K-1]}(-1)^{k-1}S_{k}(Z_{1},\dots,Z_{n})\leq S_{K}(Z_{1},\dots,Z_{n})
\]
where $S_{k}(Z_{1},\dots,Z_{n}):=\sum_{S\subseteq[n]:|S|=k}\P[\wedge_{i\in S}Z_{i}]$.
\end{prop}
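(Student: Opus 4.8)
The plan is to reduce the probabilistic statement to a purely combinatorial inequality about binomial coefficients by conditioning on the number of events that occur. Fix an outcome and let $m$ denote the number of indices $i\in[n]$ with $Z_{i}=1$. Then $Z=\vee_{i}Z_{i}$ equals $\mathbf{1}_{m\geq1}$, and for every $k$ the quantity $\sum_{S\subseteq[n]:|S|=k}\prod_{i\in S}Z_{i}$ counts the $k$-subsets of the set of events that occur, hence equals $\binom{m}{k}$ pointwise. Taking expectations gives $\E[\mathbf{1}_{m\geq1}]=\P[Z]$ and $\E\binom{m}{k}=S_{k}(Z_{1},\dots,Z_{n})$ (and likewise $\E\binom{m}{K}=S_{K}$). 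Therefore it suffices to prove the deterministic bound
\[
0\ \leq\ \mathbf{1}_{m\geq1}-\sum_{k=1}^{K-1}(-1)^{k-1}\binom{m}{k}\ \leq\ \binom{m}{K}
\]
for every integer $m\geq0$ and every odd $K$, and then integrate over the outcome.

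For $m=0$ all three expressions vanish, so assume $m\geq1$. Rewriting the middle expression, $\mathbf{1}_{m\geq1}-\sum_{k=1}^{K-1}(-1)^{k-1}\binom{m}{k}=\sum_{k=0}^{K-1}(-1)^{k}\binom{m}{k}$. The key step is the partial-sum identity
\[
\sum_{k=0}^{j}(-1)^{k}\binom{m}{k}=(-1)^{j}\binom{m-1}{j},
\]
which I would prove by induction on $j$: the base case $j=0$ is immediate, and the inductive step is exactly Pascal's rule $\binom{m}{j+1}=\binom{m-1}{j+1}+\binom{m-1}{j}$ after collecting signs. Applying this with $j=K-1$, and using that $K-1$ is even since $K$ is odd (so $(-1)^{K-1}=+1$), shows the middle expression equals $\binom{m-1}{K-1}\geq0$, which is the left inequality. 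For the right inequality, one more application of Pascal's rule gives $\binom{m-1}{K-1}\leq\binom{m-1}{K}+\binom{m-1}{K-1}=\binom{m}{K}$. Taking expectations of the pointwise chain and substituting $\E\binom{m}{k}=S_{k}$, $\E\binom{m}{K}=S_{K}$, $\E[\mathbf{1}_{m\geq1}]=\P[Z]$ yields the proposition.

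There is no serious obstacle here: this is the classical derivation of the Bonferroni inequalities from inclusion–exclusion. The only points demanding care are the degenerate case $m=0$ and the sign bookkeeping in the partial-sum identity — in particular, the hypothesis ``$K$ odd'' is used precisely so that truncating inclusion–exclusion after the even number $K-1$ of terms produces a lower bound on $\P[Z]$, with the error controlled by the single next term $S_{K}$.
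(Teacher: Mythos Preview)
Your argument is correct and is the standard derivation of the Bonferroni inequalities. The paper does not actually supply a proof of this proposition: it is stated as a classical fact and then used as a black box in the proof of Lemma~\ref{lem:sampler-to-bias-1}, so there is no ``paper's proof'' to compare against. Your reduction to the pointwise binomial identity $\sum_{k=0}^{j}(-1)^{k}\binom{m}{k}=(-1)^{j}\binom{m-1}{j}$ is exactly the textbook route, and you have handled the sign bookkeeping and the boundary case $m=0$ correctly.
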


We also need the following slight tightening of Janson's inequality
\cite{Jan90}. The below version is implicit in the proof presented
by \cite{AlonSpEr92} (c.f. \cite{LSSUV21}). We provide a justification
in the appendix.
\begin{prop}
\label{lem:janson} Let $C_{1},\dots,C_{n}$ be arbitrary monotone
boolean circuits such that $\P[C_{i}=0]=p\ge1/2$ $\forall i$ over
some product distribution $D$. Then letting $C:=\vee_{i\in[n]}C_{i}$
we have
\[
\prod_{i\in[n]}\P[C_{i}=0]\leq\P[C=0]\leq\prod_{i\in[n]}\P[C_{i}=0]\cdot\left(1+\sum_{\ell=1}^{n}\frac{2^{\ell}}{\ell!}\Delta(C)^{\ell}\right)
\]
where the probabilities are over $D$ and 
\begin{align*}
\Delta(C):=\sum_{i\in[n]}L(C,i),\quad L(C,i):=\sum_{\substack{i'\in[n]:\\
i'<i,\\
i'\sim i
}
}\P[C_{i'}\wedge C_{i}]
\end{align*}
where $i'\sim i$ denotes that $C_{i'}$ and $C_{i}$ are not on disjoint
variables.
\end{prop}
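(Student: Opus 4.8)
The plan is to prove the two bounds separately. The lower bound is immediate from the Harris (FKG) inequality: each $C_i$ is monotone, so each event $\{C_i=0\}$ is a down-set in the input coordinates and $\{C=0\}=\bigcap_i\{C_i=0\}$; since down-sets are positively correlated under a product distribution, $\P[C=0]\ge\prod_i\P[C_i=0]$.

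For the upper bound I would run the conditional-probability argument underlying Janson's inequality (as in \cite{AlonSpEr92}), but stop before exponentiating. Fix an arbitrary ordering, write $B_i:=\{C_i=1\}$, and expand $\P[C=0]=\prod_{i=1}^n\P[\overline{B_i}\mid \overline{B_1}\cap\cdots\cap\overline{B_{i-1}}]$, where we may assume all the conditionings are well defined since otherwise $\P[C=0]=0$ and the claim is trivial. The heart of the matter is the per-factor estimate $\P[\overline{B_i}\mid \overline{B_1}\cap\cdots\cap\overline{B_{i-1}}]\le p+L(C,i)$. To prove it, split $\{1,\dots,i-1\}$ into the indices $j\sim i$ and the remaining ``far'' indices, put $A:=\bigcap_{j<i,\,j\sim i}\overline{B_j}$ and $A':=\bigcap_{j<i,\,j\not\sim i}\overline{B_j}$, and lower bound $\P[B_i\mid A\cap A']$. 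From $\P[A\cap A']\le\P[A']$ we get $\P[B_i\mid A\cap A']\ge\P[B_i\cap A\mid A']=\P[B_i\mid A']-\P[B_i\cap\overline A\mid A']$. Since $C_i$ is on variables disjoint from all far circuits, $B_i$ is independent of $A'$, so $\P[B_i\mid A']=\P[B_i]=1-p$ with no loss. A union bound over $\overline A=\bigcup_{j<i,\,j\sim i}B_j$, followed by Harris applied to the up-set $B_i\cap B_j$ conditioned on the down-set $A'$, gives $\P[B_i\cap\overline A\mid A']\le\sum_{j<i,\,j\sim i}\P[B_i\cap B_j]=L(C,i)$. Hence $\P[\overline{B_i}\mid\cdots]\le 1-(1-p)+L(C,i)=p+L(C,i)$.

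Multiplying over $i$ and using $p\ge1/2$ to bound $1+L(C,i)/p$ by $1+2L(C,i)$ yields $\P[C=0]\le\bigl(\prod_i\P[C_i=0]\bigr)\prod_i\bigl(1+2L(C,i)\bigr)$. Finally, expanding the last product as $\sum_{\ell=0}^n 2^\ell e_\ell$, where $e_\ell$ is the $\ell$-th elementary symmetric polynomial in the nonnegative reals $L(C,1),\dots,L(C,n)$, and using $\ell!\,e_\ell\le\bigl(\sum_j L(C,j)\bigr)^\ell=\Delta(C)^\ell$ (i.e.\ Maclaurin's inequality, or just restricting the multinomial expansion of $\Delta(C)^\ell$ to tuples with distinct coordinates), gives $\prod_i(1+2L(C,i))\le 1+\sum_{\ell=1}^n\frac{2^\ell}{\ell!}\Delta(C)^\ell$, which is exactly the asserted factor.

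The one genuinely delicate step is the per-factor estimate, and within it two points: normalizing by $\P[A']$ rather than by $1$, so that the ``far'' conditioning costs nothing and contributes exactly $\P[B_i]$; and invoking Harris in the correct direction — an up-set ($B_i\cap B_j$) conditioned on a down-set ($A'$) — to replace $\P[B_i\cap B_j\mid A']$ by $\P[C_i\wedge C_j]$. This is where the monotonicity and product-distribution hypotheses are used (together with the lower-bound step); everything after the per-factor estimate is routine algebra with symmetric functions.
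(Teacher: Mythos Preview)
Your proposal is correct and follows essentially the same route as the paper: the paper likewise cites the Alon--Spencer/FKG argument to reach $\P[C=0]\le\prod_i\P[C_i=0]\,(1+2L(C,i))$ and then avoids exponentiating. The only cosmetic difference is the last step: the paper bounds $\prod_i(1+2L(C,i))$ via AM--GM by $(1+2\Delta(C)/n)^n$ and then expands binomially using $\binom{n}{\ell}/n^\ell\le1/\ell!$, whereas you expand the product directly into elementary symmetric polynomials and use $\ell!\,e_\ell\le\Delta(C)^\ell$; both yield exactly $1+\sum_{\ell=1}^n\frac{2^\ell}{\ell!}\Delta(C)^\ell$.
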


The upper bound is usually stated as $\prod_{i\in[n]}\P[C_{i}=0]\cdot e^{2\Delta(C)}$.
If one used this bound, then in the course of proving Lemma \ref{lem:sampler-to-bias-1}
one needs to bound the quantity $\E_{\substack{T\subseteq[u]:|T|=k}
}e^{\Delta(C_{G(T)})}-1$ for $k$ not too large, where $C_{G(T)}:=\vee_{i\in T}C_{G(i)}$.
\cite{ChattopadhyayZ16,Meka17} do so by requiring design properties
of $G$. If $G$ is a $d$-design, then $\E_{\substack{T\subseteq[u]:|T|=k}
}e^{\Delta(C_{G(T)})}\leq e^{\log^{c}u/2^{d}}$ (see Lemma \ref{cor:max-power}), resulting in a final bias of approximately
$2^{-d}$. To achieve polynomial bias, one would need a Reed-Solomon
code of length $c\log n$, which would ruin the sampler property.

Instead if one uses the above version of Janson's inequality, one
needs to bound $\E_{\substack{T\subseteq[u]:|T|=k}
}\Delta(C_{G(T)})^{\ell}$ for $1\leq\ell\leq n$. Through the sampler property of $G$, we
prove $\E_{\substack{T\subseteq[u]:|T|=k}
}\Delta(C_{G(T)})=n^{-1+o(1)}$ (Lemma \ref{lem:post-janson-1}). We then use Lemma \ref{cor:max-power}
to bound the remaining terms $\E\sum\Delta(C_{G(T)})^{\ell}$. This
analysis results in a final bias of $e^{\log^{c}u/2^{d}}n^{-1+o(1)}=n^{-1+o(1)}$
for $d$ roughly $c\log\log n$.
\begin{lem}
\label{lem:post-janson-1}Suppose $G:[u]\to[v]^{w}$ is a $(\sigma^{-1},\b)$-sampler.
For any $1\leq k\leq\log u$,

\[
\E_{\substack{T\subseteq[u]:\\
|T|=k
}
}\Delta(C_{G(T)})\leq\frac{\b\sigma^{-1}}{v^{1-o(1)}}.
\]
\end{lem}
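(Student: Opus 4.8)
The plan is to unfold $\Delta$ explicitly, push the expectation over $T$ inside by linearity, reduce to a pair average over $i,i'\in[u]$, and then recognize the inner quantity as exactly what the sampler hypothesis controls. Throughout, $\Delta(C_{G(T)})$ denotes the dependency sum of Proposition \ref{lem:janson} for the decomposition of $C_{G(T)}=\vee_{i\in T}C_{G(i)}$ into the read-once clauses $A_{i,j}(x):=\wedge_{k\in S(G(i),j)}x_{k}$ over $i\in T,j\in[v]$ (each of which satisfies $\P[A_{i,j}=0]=1-\sigma^{w}\ge1/2$, as that proposition requires); we also use the normalization $\sigma^{-w}=v/\ln(u/\ln2)$ that is in force wherever this lemma is applied (it is what makes $C_{G}$ nearly balanced, cf. Lemma \ref{lem:resilience-1-1}).

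First I would write $\Delta$ out concretely. Since for a fixed $i$ the clauses $A_{i,0},\dots,A_{i,v-1}$ are on pairwise disjoint variable sets (they partition $[n]$), only pairs with $i\ne i'$ contribute, and using $\P[A_{i,j}\wedge A_{i',j'}=1]=\sigma^{|S(G(i),j)\cup S(G(i'),j')|}=\sigma^{2w-|S(G(i),j)\cap S(G(i'),j')|}$ over $B_{\sigma}$ we get
\[
\Delta(C_{G(T)})=\sum_{\{i,i'\}\subseteq T}g(i,i'),\qquad g(i,i'):=\sigma^{2w}\sum_{j,j'\in[v]}\sigma^{-|S(G(i),j)\cap S(G(i'),j')|}\mathbf{1}_{S(G(i),j)\cap S(G(i'),j')\neq\emptyset}.
\]
Averaging over a uniform $k$-subset $T$, with $\P_{T}[\{i,i'\}\subseteq T]=\binom{u-2}{k-2}/\binom{u}{k}=k(k-1)/(u(u-1))$ and bounding the sum over distinct pairs by the full average, linearity gives $\E_{T}\Delta(C_{G(T)})\le k^{2}\E_{i,i'\in[u]}[g(i,i')]$.

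The crucial step is a reindexing that decouples the $v$ shifts from the content of the two walks: substituting $j'\mapsto(j+j'')\bmod v$, one checks $|S(G(i),j)\cap S(G(i'),j+j'')|$ is independent of $j$ and equals $F_{i,i'}(j''):=\#\{k\in[w]:G(i)_{k}-G(i')_{k}\equiv j''\pmod v\}$, so that
\[
g(i,i')=\sigma^{2w}\cdot v\cdot\sum_{j''\in[v]}\sigma^{-F_{i,i'}(j'')}\mathbf{1}_{F_{i,i'}(j'')\neq0}.
\]
Now fix $i'\in[u]$ and $j''\in[v]$: then $F_{i,i'}(j'')=F(G(i))$ where $F=\sum_{k\in[w]}f_{k}$ and each $f_{k}:[v]\to\zo$ is the indicator of the single residue $(G(i')_{k}+j'')\bmod v$, so $\mu:=\E_{y\in[v]^{w}}[F(y)]=w/v$, which by the normalization above lies in the admissible range of the $(\sigma^{-1},\b)$-sampler (Definition \ref{def:sampler}). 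Claim \ref{claim:equivalent-definition} then yields $\E_{i\in[u]}[\sigma^{-F_{i,i'}(j'')}\mathbf{1}_{F_{i,i'}(j'')\neq0}]\le c\b\sigma^{-1}\mu=c\b\sigma^{-1}w/v$. Summing over the $v$ values of $j''$ and over $i'$ (the bound being uniform in $i'$),
\[
\E_{i,i'\in[u]}[g(i,i')]\le\sigma^{2w}\cdot v\cdot v\cdot\frac{c\b\sigma^{-1}w}{v}=c\b\sigma^{-1}\,wv\sigma^{2w}.
\]
Since $\sigma^{-w}=v/\ln(u/\ln2)$ gives $v\sigma^{2w}=\ln^{2}(u/\ln2)/v$, and since $\sigma\le1/2$ forces $2^{w}\le v$ while $k\le\log u$ and $u=\poly(v)$, the factor $k^{2}w\ln^{2}(u/\ln2)$ is $v^{o(1)}$, so that
\[
\E_{T}\Delta(C_{G(T)})\le k^{2}\E_{i,i'\in[u]}[g(i,i')]\le k^{2}c\b\sigma^{-1}\cdot\frac{w\ln^{2}(u/\ln2)}{v}=\frac{\b\sigma^{-1}}{v^{1-o(1)}}.
\]

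The main obstacle is the step where the sampler is invoked: one has to see that after the $j'\mapsto j+j''$ reindexing the inner sum is literally an instance of Definition \ref{def:sampler} — a fixed family $f_{1},\dots,f_{w}$ of $\{0,1\}$-valued functions on $[v]$ together with the average $\E_{i\in[u]}[\sigma^{-F(G(i))}\mathbf{1}_{F(G(i))\neq0}]$ — and that its mean $w/v$ is small enough to meet the sampler's hypothesis $\mu\le1/(\sigma^{-1}\b\ln u)$. Before the reindexing, $g(i,i')$ superficially depends on $G(i)$ through all $v$ of the sets $S(G(i),j)$ at once and is not visibly controlled by any single sampler instance; the bookkeeping that peels off the free indices $i'$ and $j''$ and lands the remaining sum on $\E_{i\in[u]}$ is the whole point, while the rest is routine.
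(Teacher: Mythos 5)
Your proof is correct and follows essentially the same route as the paper's: expand $\Delta$ into a sum of pairwise terms, push $\E_T$ inside by linearity to reduce to a uniform pair average over $[u]$, and then recognize each inner sum (for fixed $i'$ and shift) as an instance of the sampler hypothesis with $\mu=w/v$, exactly as the paper does via Proposition \ref{prop:powers-to-sampler} and the two propositions that support it. The only cosmetic differences are that you treat the pairs symmetrically and reindex $j'\mapsto j+j''$ to exhibit the shift-invariance explicitly before invoking Claim \ref{claim:equivalent-definition}, whereas the paper keeps $j,j'$ as separate free indices and absorbs the same $v^2$ factor directly.
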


\begin{lem}
\label{cor:max-power}Suppose $G:[u]\to[v]^{w}$ is a $d$-design.
For any $1\leq k\leq\log u$,

\[
\max_{\substack{T\subseteq[u]:\\
|T|=k
}
}\Delta(C_{G(T)})\leq\log^{4}u\cdot\sigma^{d}.
\]
\end{lem}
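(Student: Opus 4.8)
The plan is to expand $\Delta(C_{G(T)})$ over the individual AND-terms making up $C_{G(T)}$, discard the contributions that vanish by construction, bound each surviving term using the $d$-design property, count the surviving terms using the partition structure of the sets $S(\cdot,\cdot)$, and finally convert the resulting estimate into $\log^{4}u\cdot\sigma^{d}$ via the parameter relation in force (the standing $\sigma^{-w}=v/\ln(u/\ln 2)$, as in Lemma~\ref{lem:resilience-1-1}).

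Concretely, write $A_{i,j}:=\wedge_{k\in S(G(i),j)}x_{k}$, so that $C_{G(i)}=\vee_{j\in[v]}A_{i,j}$ and $C_{G(T)}=\vee_{i\in T,\,j\in[v]}A_{i,j}$ is an OR of $kv$ monotone read-once terms, which I index by pairs $(i,j)$. Then
\[
\Delta(C_{G(T)})=\sum_{(i,j)}\ \sum_{\substack{(i',j')<(i,j)\\(i',j')\sim(i,j)}}\P_{B_{\sigma}}\!\left[A_{i',j'}\wedge A_{i,j}\right],
\]
where $(i',j')\sim(i,j)$ means $S(G(i'),j')\cap S(G(i),j)\neq\emptyset$. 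First I would note that a pair with $i'=i$ never satisfies $\sim$, since $S(G(i),j)\cap S(G(i),j')=\emptyset$ for $j\neq j'$ by construction; hence every surviving pair has $i'\neq i$, and for such a pair the $d$-design property gives $|S(G(i'),j')\cap S(G(i),j)|\leq w-d$. Because $B_{\sigma}$ is a product distribution, $\P[A_{i',j'}\wedge A_{i,j}]=\sigma^{|S(G(i'),j')\cup S(G(i),j)|}=\sigma^{2w-|S(G(i'),j')\cap S(G(i),j)|}\leq\sigma^{w+d}$.

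The crux is the counting step. Fix $(i,j)$ and a single $i'\neq i$. Since $S(G(i'),0),\dots,S(G(i'),v-1)$ partition $[n]$, the sets $S(G(i'),j')\cap S(G(i),j)$ for $j'\in[v]$ partition $S(G(i),j)$, which has $w$ elements, so at most $w$ of them are nonempty. Thus at most $w$ values of $j'$ contribute for each of the at most $k$ choices of $i'$, so each of the $kv$ choices of $(i,j)$ sees at most $kw$ contributing pairs $(i',j')$, each of weight at most $\sigma^{w+d}$. Hence
\[
\Delta(C_{G(T)})\ \leq\ (kv)\cdot(kw)\cdot\sigma^{w+d}\ =\ k^{2}\,n\,\sigma^{w+d}\ =\ k^{2}\,w\,(v\sigma^{w})\,\sigma^{d}.
\]

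Finally I would insert the standing relations: $k\leq\log u$ by hypothesis, and $\sigma^{-w}=v/\ln(u/\ln 2)$ gives $v\sigma^{w}=\ln(u/\ln 2)\leq\log u$ and, together with $\sigma\leq 1/2$ and $v\leq u$, also $w\leq\log v\leq\log u$; hence $k^{2}\,w\,(v\sigma^{w})\,\sigma^{d}\leq\log^{4}u\cdot\sigma^{d}$, as claimed. I expect no serious obstacle: the one genuine idea is the partition-based count of overlapping AND-terms, and the rest is bookkeeping. The only thing to be careful about is not losing track of the factor $v$ coming from the $v$ AND-terms in each subcircuit — it is recovered by the shrinking factor $\sigma^{w}$, the product $v\sigma^{w}=\ln(u/\ln2)$ being small. (One could avoid the extra factor $w$ by observing that $\sum_{j'}\sigma^{-|S(G(i'),j')\cap S(G(i),j)|}$ is maximized when the intersection mass is concentrated, but this refinement is unnecessary.)
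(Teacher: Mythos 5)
Your proof is correct and follows essentially the same route as the paper's: bound each surviving $\P[A_{i',j'}\wedge A_{i,j}]$ by $\sigma^{w+d}$ via the design property, use the partition $\{S(G(i'),j')\}_{j'\in[v]}$ of $[n]$ to argue that at most $w$ values of $j'$ give a nonempty intersection with $S(G(i),j)$, and then collect $k^2 v w\,\sigma^{w+d}=k^2 w(v\sigma^w)\sigma^d\le\log^4 u\cdot\sigma^d$ using $v\sigma^w=c\ln u$, $w\le\log v$, and $k\le\log u$. The paper phrases the per-$(i,i',j)$ count as $\sum_{j'}Y(i',j',i,j)\le w\sigma^{-(w-d)}$ and multiplies by $\sigma^{2w}k^2v$ afterward, but this is the identical estimate.
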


Assuming Lemmas \ref{lem:post-janson-1} and \ref{cor:max-power}
we can prove Lemma \ref{lem:sampler-to-bias-1}.
\begin{proof}[Proof of Lemma \ref{lem:sampler-to-bias-1} ]

We can write $\neg C_{G}=\vee_{i\in[u]}Z_{i}$ where $Z_{i}:=\neg C_{G(i)}$.
We can also write $1-bias(u,v,w,\sigma)=1-(1-p)^{u}$ where recall
$p=(1-\sigma^{w})^{v}$. By Bonferroni's inequality, for any odd $K$
which we set later we have
\begin{align*}
\left|\E[\neg C_{G}]-\sum_{k\in[K-1]}(-1)^{k-1}\E[S_{k}(Z_{1},\dots,Z_{u})]\right| & \leq\left|\E[S_{K}(Z_{1},\dots,Z_{u})\right|,\\
\left|(1-bias(u,v,w,\sigma))-\sum_{k\in[K-1]}(-1)^{k-1}\binom{u}{k}p^{k}\right| & \leq\binom{u}{K}p^{K}
\end{align*}
where the expectations are over $B_{\sigma}$. Note that $\E[S_{k}(Z_{1},\dots,Z_{u})]=\sum_{T\subseteq[u]:|T|=k}\P[C_{G(T)}=0]$,
where $C_{G(T)}=\vee_{i\in T}C_{G(i)}=\vee_{i\in T}\vee_{j\in[v]}A_{i,j}$,
and $A_{i,j}=\wedge_{k\in S(G(i),j)}x_{k}$. Thus we view $C_{G(T)}$
as an OR$_{kv}$-AND$_{w}$ circuit. First note that by Lemmas \ref{lem:post-janson-1},
\ref{cor:max-power} we have
\[
\E_{\substack{T\subseteq[u]:\\
|T|=k
}
}\Delta(C_{G(T)})^{\ell}\leq\max_{\substack{T\subseteq[u]:\\
|T|=k
}
}\Delta(C_{G(T)})^{\ell-1}\E_{\substack{T\subseteq[u]:\\
|T|=k
}
}\Delta(C_{G(T)})\leq(\log^{4}u\cdot\sigma^{d})^{\ell-1}\binom{u}{k}\frac{\b\sigma^{-1}}{v^{1-o(1)}}.
\]
Combining this with Proposition \ref{lem:janson} we have
\begin{align*}
\binom{u}{k}p^{k}\leq\E[S_{k}(Z_{1},\dots,Z_{u})] & \leq p^{k}\sum_{T\subseteq[u]:|T|=k}(1+\sum_{\ell=1}^{kv}\frac{2^{\ell}}{\ell!}\Delta(C_{G(T)})^{\ell})\\
 & \leq\binom{u}{k}p^{k}(1+\frac{\b\sigma^{-1}}{v^{1-o(1)}}\sum_{\ell=1}^{kv}\frac{2^{\ell}}{\ell!}(\log^{4}u\cdot\sigma^{d})^{\ell-1})\\
 & \leq\binom{u}{k}p^{k}(1+\frac{\b\sigma^{-1}}{v^{1-o(1)}}e^{\log^{4}u\cdot\sigma^{d}}).
\end{align*}

The last inequality follows since $\sum_{\ell=0}^{\infty}x^{\ell}/\ell!=e^{x}$.
If the $Z_{1},\dots,Z_{u}$ were independent then of course $\E[S_{k}(Z_{1},\dots,Z_{u})]=\binom{u}{k}p^{k}.$
So the above is saying that when $G$ is a sampler, the resulting
$Z_{i}$ behave as if they were independent up to a small multiplicative
error.

After repeated applications of the triangle inequality, letting $\d=\frac{\b\sigma^{-1}}{v^{1-o(1)}}e^{\log^{4}u\cdot\sigma^{d}}$,
we have 
\begin{align*}
\left|\E[C_{G}]-bias(u,v,w,\sigma)\right| & \leq\d\sum_{k=1}^{K-1}\binom{u}{k}p^{k}+(2+\d)\binom{u}{K}p^{K}\\
 & \leq\d(1+p)^{u}+(c/K)^{K}\\
 & \leq c\d.
\end{align*}
The last inequality follows since $p=c/u$ and by setting $K=c\log v/\log\log v$.
\end{proof}

\subsection{Proof of Lemmas \ref{lem:post-janson-1}, \ref{cor:max-power}}

Fix any $T\subseteq[u]:|T|=k$ and define $Y(i',j',i,j):=\sigma^{-|S(G(i'),j')\cap S(G(i),j)|}\mathbf{1}_{|S(G(i'),j')\cap S(G(i),j)|\neq0}$.
By definition we have
\begin{align*}
\Delta(C_{G(T)}) & =\sum_{(i,j)\in T\times[v]}L(C_{G(T)},(i,j))\\
 & =\sum_{(i,j)\in T\times[v]}\sum_{\substack{(i',j')\in T\times[v]:\\
(i',j')<(i,j),\\
(i',j')\sim(i,j)
}
}\P[A_{i',j'}\wedge A_{i,j}]\\
 & =\sigma^{2w}\sum_{(i,j)\in T\times[v]}\sum_{\substack{(i',j')\in T\times[v]:\\
(i',j')<(i,j)
}
}Y(i',j',i,j)
\end{align*}

The last equality follows as $\P[A_{i',j'}\wedge A_{i,j}]=\sigma^{2w-|S(G(i'),j')\cap S(G(i),j)|}$
and $\mathbf{1}_{(i',j')\sim(i,j)}=\mathbf{1}_{|S(G(i'),j')\cap S(G(i),j)|\neq0}$.
One can think of $(i,j)$ as a number, with $i$ as the most significant
bit. 

To prove Lemma \ref{cor:max-power}, since $G$ is a $d$-design,
for any fixed $i\neq i'\in T$ and $j\in[v]$ we have
\[
\sum_{\substack{j'\in[v]:\\
(i',j')<(i,j)
}
}Y(i',j',i,j)\leq w\sigma^{-(w-d)}.
\]
Thus $\Delta(C_{G(T)})\leq\sigma^{2w}\cdot k^{2}v\cdot w\sigma^{-(w-d)}\leq\sigma^{d}\cdot k^{2}\ln u\cdot w\leq\sigma^{d}\ln^{4}u$.
The second inequality follows since $v=c\sigma^{-w}\ln u$.

Now we prove Lemma \ref{lem:post-janson-1}. We need the following
result.
\begin{prop}
\label{prop:powers-to-sampler} Suppose $G:[u]\to[v]^{w}$ is a $(\sigma^{-1},\b)$-sampler.
For any $1\leq k\leq u$ and $j',j\in[v]$,
\begin{align*}
\E_{\substack{T\subseteq[u]:\\
|T|=k
}
}\sum_{\substack{i',i\in T:\\
i'<i
}
}Y(i',j',i,j) & \leq\binom{u}{k}k^{2}\frac{\b\sigma^{-1}}{v^{1-o(1)}}\cdot
\end{align*}
\end{prop}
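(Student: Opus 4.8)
The plan is to reduce Proposition~\ref{prop:powers-to-sampler} to the sampler property of $G$ by recognizing the inner double sum as (essentially) a quantity that the sampler property bounds, once we fix the ``reference'' block $(i,j)$ and set up the right functions $f_k$. First I would swap the order of summation/expectation: write
\[
\E_{\substack{T\subseteq[u]:|T|=k}}\sum_{\substack{i',i\in T:\ i'<i}}Y(i',j',i,j)
=\sum_{\substack{i',i\in[u]:\ i'<i}}\P_{T}[i',i\in T]\cdot Y(i',j',i,j)
=\frac{\binom{u-2}{k-2}}{\binom{u}{k}}\binom{u}{k}\cdot\!\!\sum_{\substack{i',i\in[u]:\ i'<i}}\!\!Y(i',j',i,j),
\]
so up to the combinatorial factor $\binom{u-2}{k-2}/\binom{u}{k}\le k^2/u^2$ (times $\binom uk$) it remains to bound $\frac{1}{u^2}\sum_{i'<i}Y(i',j',i,j)$, equivalently $\E_{i\in[u]}\E_{i'\in[u]}\,Y(i',j',i,j)$, by $\b\sigma^{-1}/v^{1-o(1)}$.

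Next I would fix $i\in[u]$ (hence a fixed set $S(G(i),j)$ of size $w$) and, for this fixed target set, define $f_1,\dots,f_w:[v]\to\{0,1\}$ exactly as in the proof of Lemma~\ref{lem:resilience-1-1}: $f_k(y)=1$ iff the $k$th element of $S(G(i'),j')$ — which ranges over row $y$ shifted by $j'$ in column $k$ — lies in $S(G(i),j)$. Then $|S(G(i'),j')\cap S(G(i),j)| = F(G(i'))$ where $F=\sum_k f_k$, and a uniformly random $y\in[v]^w$ makes $F(y)$ hit $S(G(i),j)$ in expectation $\mu:=\E_y[F(y)] = w/v$, since $S(G(i),j)$ has one element per column and each column contributes probability $1/v$. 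Crucially $\mu=w/v\le (\log^2 n)/n \le 1/(\sigma^{-1}\b\ln u)$ (using $v=c\sigma^{-w}\ln u$ and the smallness of $w$), so the sampler hypothesis applies in its Claim~\ref{claim:equivalent-definition} form: $\E_{i'\in[u]}\big[\sigma^{-F(G(i'))}\mathbf 1_{F(G(i'))\ne 0}\big]\le c\b\sigma^{-1}\mu = c\b\sigma^{-1}w/v$. But the left side is precisely $\E_{i'\in[u]}Y(i',j',i,j)$ for our fixed $i$. Averaging over $i\in[u]$ (the bound is uniform in $i$ and $j$) gives $\E_{i,i'}Y(i',j',i,j)\le c\b\sigma^{-1}w/v = \b\sigma^{-1}/v^{1-o(1)}$, since $w=n^{o(1)}$. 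Multiplying back by $\binom uk\cdot k^2$ yields the claimed bound.

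Then deducing Lemma~\ref{lem:post-janson-1} is immediate: from the displayed identity for $\Delta(C_{G(T)})$ we have $\E_T\Delta(C_{G(T)}) = \sigma^{2w}\sum_{j,j'\in[v]}\E_T\sum_{i'<i}Y(i',j',i,j)$, and Proposition~\ref{prop:powers-to-sampler} bounds each of the $v^2$ inner terms by $\binom uk k^2 \b\sigma^{-1}/v^{1-o(1)}$; since $k\le\log u$ and $\sigma^{2w}v^2 = \sigma^{2w}(c\sigma^{-w}\ln u)^2 = c\ln^2 u$, the $\sigma^{2w}$ cancels one factor $\sigma^{-w}$ inside $v$ twice and we're left with $\E_T\Delta(C_{G(T)})\le \b\sigma^{-1}/v^{1-o(1)}$ after absorbing $\log$'s and $k^2$ into the $v^{o(1)}$. (One has to be a little careful that the division by $\binom uk$ built into $\E_T$ is what makes the $\binom uk$ disappear; I would keep $\binom uk$ explicit until the end.)

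The main obstacle I expect is bookkeeping the combinatorial weight correctly — getting the exact constant $\binom{u-2}{k-2}/\binom uk$ and confirming it is $\le k^2/u^2$ so that after multiplying by $\binom uk$ one recovers the stated $\binom uk k^2/v^{1-o(1)}$ form — together with verifying the mean condition $\mu=w/v$ genuinely satisfies the $1/(\a\b\ln u)$ threshold required to invoke the sampler definition. Both are routine but need the parameter relation $v=c\sigma^{-w}\ln u$ and $w=n^{o(1)}$; there is no conceptual difficulty, just the need to track the $v^{o(1)}$ slack through the $\sigma^{-w}=v/\ln(u/\ln2)$ substitutions.
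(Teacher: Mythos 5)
Your argument is essentially the paper's: it reduces $\E_T\sum_{i'<i\in T}$ to $\binom{u}{k}^{-1}\binom{u-2}{k-2}\sum_{i'<i\in[u]}\le ck^2\E_{i,i'}$ (this is Proposition~\ref{prop:small-to-all-new-2} in the paper) and then bounds $\E_{i'\in[u]}Y(i',j',i,j)$ for fixed $i,j,j'$ by rewriting $Y=\sigma^{-F(G(i'))}\mathbf{1}_{F(G(i'))\neq0}$ with $\mu=\E_y F(y)=w/v$ and invoking the $(\sigma^{-1},\b)$-sampler property via Claim~\ref{claim:equivalent-definition}, exactly as the paper's unnumbered proposition does. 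You also explicitly verify the hypothesis $\mu\le 1/(\a\b\ln u)$, a check the paper leaves implicit.
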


Assuming Proposition \ref{prop:powers-to-sampler}, we can prove Lemma
\ref{lem:post-janson-1}.
\begin{align*}
\E_{\substack{T\subseteq[u]:\\
|T|=k
}
}\Delta(C_{G(T)}) & \leq\sigma^{2w}\E_{\substack{T\subseteq[u]:\\
|T|=k
}
}\sum_{\substack{(i,j)\in T\times[v]}
}\sum_{\substack{(i',j')\in T\times[v]:\\
(i',j')<(i,j)
}
}Y(i',j',i,j)\\
 & =(\frac{c\ln u}{v})^{2}\sum_{j',j\in[v]}\E_{\substack{T\subseteq[u]:\\
|T|=k
}
}\sum_{\substack{i',i\in T:\\
i'<i
}
}Y(i',j',i,j)\\
 & \leq(\frac{c\ln u}{v})^{2}\cdot v^{2}k^{2}\frac{\b\sigma^{-1}}{v^{1-o(1)}}\\
 & \leq\frac{\b\sigma^{-1}}{v^{1-o(1)}}.
\end{align*}

\subsection{Proof of Proposition \ref{prop:powers-to-sampler}}

Proposition \ref{prop:powers-to-sampler} directly follows from the
next two results.
\begin{prop}
For any $j',j\in[v]$, $i\in[u]$. $\E_{i'\in[u]}Y(i',j',i,j)\leq c\b\sigma^{-1}\frac{w}{v}$.
\end{prop}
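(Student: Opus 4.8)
The statement to prove is: for any $j',j\in[v]$ and $i\in[u]$, we have $\E_{i'\in[u]}Y(i',j',i,j)\leq c\b\sigma^{-1}\frac{w}{v}$, where $Y(i',j',i,j):=\sigma^{-|S(G(i'),j')\cap S(G(i),j)|}\mathbf{1}_{|S(G(i'),j')\cap S(G(i),j)|\neq0}$. The key observation is that for fixed $i,j,j'$, the quantity $|S(G(i'),j')\cap S(G(i),j)|$ is a sum over the $w$ columns of an indicator that column $k$ of the walk $G(i')$ lands in the unique row that would put that coordinate into $S(G(i),j)$. So this is exactly the setup where the sampler property of $G$ applies, via the reformulation in Claim~\ref{claim:equivalent-definition}.

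**The main steps.** First I would fix $i,j,j'$ and define, for each $k\in[w]$, a function $f_k:[v]\to\{0,1\}$ that is the indicator of the single row value $y$ such that the $k$th element of $S(y,j')$ — i.e.\ the coordinate in column $k$, row $(y+\text{shift})\bmod v$ — coincides with the unique element of $S(G(i),j)$ in column $k$. Concretely, $S(G(i),j)$ has exactly one coordinate per column, so there is exactly one target row per column, and $f_k(y)=1$ iff $G(i')_k$ hitting row $y$ (after the $j'$-shift) produces that target coordinate. Then $F:=\sum_{k\in[w]}f_k$ satisfies $F(G(i'))=|S(G(i'),j')\cap S(G(i),j)|$. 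Second, I would compute $\mu:=\E_{y\in[v]^w}[F(y)]$: each $f_k$ is $1$ on exactly one of the $v$ row-values, so $\E[f_k]=1/v$ and hence $\mu=w/v$. Third, I would check the hypothesis $\mu\leq 1/(\sigma^{-1}\b\ln u)$ needed to invoke the sampler: since throughout this section $v=c\sigma^{-w}\ln u$ and $w\le v$, we get $w/v = w/(c\sigma^{-w}\ln u)$, which for the relevant range of parameters is at most $1/(\sigma^{-1}\b\ln u)$ (this is the same inequality $q/v\le 1/(\sigma^{-1}\b\ln u)$ that appeared in the proof of Lemma~\ref{lem:resilience-1-1}, with $q$ replaced by $w$). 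Fourth, apply the equivalent sampler formulation (Claim~\ref{claim:equivalent-definition}) with $\a=\sigma^{-1}$: this gives exactly
\[
\E_{i'\in[u]}\big[\sigma^{-F(G(i'))}\mathbf{1}_{F(G(i'))\ne0}\big]\le c\b\sigma^{-1}\mu = c\b\sigma^{-1}\frac{w}{v},
\]
which is the claimed bound since the left-hand side is $\E_{i'}Y(i',j',i,j)$.

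**Where the work is.** The only genuinely fiddly part is bookkeeping the definition of $f_k$ so that $F(G(i'))$ really equals the intersection size. One has to be careful that $S(G(i),j)$ and $S(G(i'),j')$ each contain exactly one coordinate per column (true by construction — each $S(y,\cdot)$ has one element per column), so the intersection is just the number of columns in which these two single coordinates agree; an element of $S(G(i'),j')$ in column $k$ is determined by $G(i')_k$ together with the shift $j'$, and it lands on the column-$k$ element of $S(G(i),j)$ precisely for one value of $G(i')_k\in[v]$ — call it the target — whence $f_k$ is the indicator of that target. This is the direct analogue of the indicator functions $f_k$ constructed in the proof of Lemma~\ref{lem:resilience-1-1}, and the remaining two ingredients ($\mu=w/v$, and the parameter check $\mu\le 1/(\sigma^{-1}\b\ln u)$) are immediate from $v=c\sigma^{-w}\ln u$. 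I do not anticipate any real obstacle beyond this translation; once $F$ and $\mu$ are in place the sampler property (Claim~\ref{claim:equivalent-definition}) closes the argument in one line.
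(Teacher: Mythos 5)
Your proposal is correct and follows essentially the same argument as the paper: define for each column $k$ the indicator function $f_k:[v]\to\{0,1\}$ that flags when the two shifted sets collide in column $k$, observe $\E[f_k]=1/v$ so $\mu=w/v$, and then invoke the $(\sigma^{-1},\b)$-sampler property via Claim~\ref{claim:equivalent-definition}. The only superficial difference is that the paper writes $Y(i',j',i,j)=\sigma^{-F(G(i))}\mathbf{1}_{F(G(i))\neq 0}$ with $f_k$ depending on $G(i')$ (i.e.\ it implicitly uses the symmetry $|S(G(i'),j')\cap S(G(i),j)|=|S(G(i),j)\cap S(G(i'),j')|$ to swap the roles of $i$ and $i'$), whereas you keep $i$ fixed and average over $i'$ exactly as stated; and you explicitly flag the parameter check $\mu\le 1/(\sigma^{-1}\b\ln u)$, which the paper leaves implicit.
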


\begin{proof}
We can write
\[
Y(i',j',i,j)=\sigma^{-F(G(i))}\mathbf{1}_{F(G(i))\neq0}
\]
where $F=\sum_{k\in[w]}f_{k}$ and for each $k\in[w]$, $f_{k}:[v]\to\{0,1\}$
is defined as follows:
\[
\begin{cases}
f_{k}(y)=1 & \text{if }y+j=G(i')_{k}+j'\bmod v;\\
f_{k}(y)=0 & \text{otherwise.}
\end{cases}
\]
 Note $\E[F]=w/v$. We can conclude since $G$ is a $(\sigma^{-1},\b)$
sampler. 
\end{proof}
\begin{prop}
\label{prop:small-to-all-new-2} Fix $1\leq k\leq u$ and non-negative
$X:[u]\times[u]\to\mathbb{R}$ s.t. for any $i\in[u]$, $\E_{i'\in[u]}X(i',i)\leq\mu$.
Then
\begin{align*}
\E_{\substack{T\subseteq[u]:\\
|T|=k
}
}\sum_{\substack{i',i\in T:\\
i'<i
}
}X(i',i) & \leq ck^{2}\mu\cdot
\end{align*}
\end{prop}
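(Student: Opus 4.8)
The plan is to expand the expectation over the uniformly random $k$-subset $T$ by linearity and reduce everything to a single hypergeometric inclusion probability. Concretely, I would first write
\[
\E_{\substack{T\subseteq[u]:\\|T|=k}}\ \sum_{\substack{i',i\in T:\\i'<i}}X(i',i)=\sum_{\substack{i',i\in[u]:\\i'<i}}X(i',i)\cdot\P\bigl[\{i',i\}\subseteq T\bigr].
\]
If $k\le 1$ (or $u\le 1$) the inner sum is empty and the bound is trivial, so I may assume $k\ge 2$ and $u\ge 2$. For a fixed pair $i'\neq i$ and $T$ a uniformly random $k$-subset of $[u]$, the inclusion probability is the standard ratio $\binom{u-2}{k-2}/\binom{u}{k}=k(k-1)/(u(u-1))$.

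Next I would use non-negativity of $X$ to pass from the sum over ordered pairs $i'<i$ to the full double sum, and then invoke the hypothesis. Since $\E_{i'\in[u]}X(i',i)\le\mu$ means $\sum_{i'\in[u]}X(i',i)\le u\mu$ for every $i$, summing over $i$ gives $\sum_{i,i'\in[u]}X(i',i)\le u^2\mu$, hence $\sum_{i'<i}X(i',i)\le u^2\mu$. Plugging in,
\[
\E_{\substack{T\subseteq[u]:\\|T|=k}}\ \sum_{\substack{i',i\in T:\\i'<i}}X(i',i)\le\frac{k(k-1)}{u(u-1)}\cdot u^2\mu=\frac{u}{u-1}\,k(k-1)\,\mu\le 2k^2\mu,
\]
using $u/(u-1)\le 2$ for $u\ge 2$. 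This establishes the claim with $c=2$.

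I do not expect a real obstacle: the argument is a one-line application of linearity of expectation followed by the crude bound $\sum_{i'<i}\le\sum_{i,i'}$, which is legitimate precisely because $X\ge 0$, and then the given per-column average bound. The only points deserving a moment of care are the degenerate cases $k\le 1$ and $u=1$, where the index set of the sum is empty, and the exact value of the pairwise inclusion probability, which is the elementary identity $\binom{u-2}{k-2}/\binom{u}{k}=k(k-1)/(u(u-1))$.
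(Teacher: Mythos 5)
Your proof is correct and follows essentially the same route as the paper: linearity of expectation reduces to the pairwise inclusion probability $\binom{u-2}{k-2}/\binom{u}{k}$, the hypothesis gives $\sum_{i'<i}X(i',i)\le u^2\mu$, and combining yields the $ck^2\mu$ bound. You just spell out the binomial ratio as $k(k-1)/(u(u-1))$ and handle the degenerate $k\le 1$ case explicitly, which the paper leaves implicit.
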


\begin{proof}
We have
\begin{align*}
\E_{\substack{T\subset[u]:\\
|T|=k
}
}\sum_{\substack{i',i\in T:\\
i'<i
}
}X(i',i) & =\binom{u}{k}^{-1}\binom{u-2}{k-2}\sum_{\substack{i',i\in[u]:\\
i'<i
}
}X(i',i)\leq\binom{u}{k}^{-1}\binom{u-2}{k-2}u^{2}\mu\leq ck^{2}\mu.
\end{align*}
\end{proof}
\begin{remark}
The condition $i'<i$ is necessary. Suppose $X(i',i)=u\mu$ for $i'=i$
and 0 otherwise. Then
\[
\E_{\substack{T\subseteq[u]:\\
|T|=k
}
}\sum_{\substack{i',i\in T}
}X(i',i)=ku\mu\cdot
\]
\end{remark}

\section{Balanced circuits matching KKL \label{sec:al-trick}}

In this section we prove Theorem \ref{thm:match-kkl}. We need the
following result due to Ajtai and Linial \cite{AL93}, the proof of
which we give at the end.
\begin{lem}
\label{prop:al-trick} Let $C_{1},C_{2}:\zo^{n}\to\zo$ be depth $d$
circuits such that $|\E[C_{1}]-1/2|<\e_{1}$, $\E[C_{2}]<(1-2\e_{1})$.
Then there is a depth $d+2$ circuit $C':\zo^{2n}\to\zo$ such that
$\E[C']=1/2$ and $I_{Q}(C')\leq I_{Q}(C_{1})+I_{Q}(C_{2})+\P[C_{2}=0]$
for any $Q\subseteq[2n].$ Moreover, if $C_{1},C_{2}$ are explicit
then $C'$ is explicit.
\end{lem}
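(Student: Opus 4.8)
The plan is to build $C'$ on $2n$ bits, thought of as two blocks $x,y\in\zo^n$, by combining $C_1$ and $C_2$ in a way that ``corrects'' the bias of $C_1$ using the smaller circuit $C_2$. Write $\E[C_1]=1/2+\eta$ with $|\eta|<\eps_1$. The idea: apply $C_2$ to the second block $y$ to produce an event of probability roughly $2|\eta|$, and use it to flip the output of $C_1(x)$ exactly on that event, so that the two deviations cancel. Concretely, if $\eta\ge 0$ set $C' := C_1(x)\wedge\bigl(\neg(C_2'(y))\bigr)$ where $C_2'$ is $C_2$ hardwired (by a further OR/AND at the top, possibly using a few fixed input bits or a constant fan-in gadget) to have acceptance probability exactly $2\eta$; then $\E[C']=(1/2+\eta)(1-2\eta)=1/2-2\eta^2<1/2$, which is not exactly $1/2$. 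So instead one should XOR rather than AND: take $C'(x,y):=C_1(x)\oplus D(y)$ where $D$ is a circuit with $\E[D]=2\eta$ built from $C_2$; then $\E[C']=\E[C_1](1-2\eta)+(1-\E[C_1])\cdot 2\eta = 1/2+\eta-2\eta(1/2+\eta)+2\eta(1/2-\eta)=1/2-2\eta^2$ — still off. The clean fix is to make $D$ an event that \emph{shifts} the output: define $C'=C_1(x)$ when $D(y)=0$ and $C'=\overline{\text{(a constant } 0/1)}$ — i.e.\ on the rare event $D(y)=1$ we output a fixed bit chosen to move the mean the right direction. If $\eta>0$ we want to lose mass $\eta$, so output $0$ whenever $D(y)=1$, giving $C'=C_1(x)\wedge\neg D(y)$ with $\E[D]$ chosen so $(1/2+\eta)(1-\E[D])=1/2$, i.e.\ $\E[D]=2\eta/(1+2\eta)$; this is $<2\eta<2\eps_1$, so it is realizable as an AND of $C_2$ with a padding gadget since $\E[C_2]<1-2\eps_1$ (after possibly replacing $C_2$ by $\neg C_2$ and AND-ing with fresh fixed/independent bits to hit the exact target probability, costing one extra layer). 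Symmetrically if $\eta<0$ we OR with a rare event built from $C_2$. Either way $C'$ has depth $d+2$ and $\E[C']=1/2$ exactly.

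The key steps, in order, are: (i) reduce to the case $\eta\ge 0$ by possibly negating; (ii) describe the gadget turning $C_2$ into a circuit $D$ of \emph{exactly} the prescribed acceptance probability $q:=2\eta/(1+2\eta)\in[0,2\eps_1)$ — since $0\le q<1-\E[C_2]$ is false in general we instead want $q < 1$ and use that $\E[C_2]<1-2\eps_1$ to realize a probability-$q$ event as $C_2\wedge(\text{threshold/AND of fixed bits})$ or as a disjunction, adding at most two layers; (iii) set $C':=C_1(x)\wedge\neg D(y)$ and verify $\E[C']=1/2$; (iv) bound influences: for $Q\subseteq[2n]$ split $Q=Q_x\cup Q_y$; conditioned on the non-$Q$ bits, $C'$ fails to be fixed only if $C_1$ restricted to $Q_x$ is unfixed and $\neg D(y)$ is not forced to $0$, OR $D$ restricted to $Q_y$ is unfixed and $C_1$ is not forced to $0$, OR $D(y)=1$ on the sampled part forcing $C'=0$ only if\dots — the upshot being $I_Q(C')\le I_{Q_x}(C_1)+I_{Q_y}(D)+\P[D=1]\le I_Q(C_1)+I_Q(C_2)+\P[C_2=0]$ after checking the gadget does not inflate influence or acceptance probability beyond $C_2$'s; (v) note explicitness is preserved since the gadget is explicit.

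The main obstacle I expect is step (ii)/(iv): realizing an event of an \emph{exactly} prescribed small probability $q$ as a low-depth circuit derived from $C_2$ without blowing up the influence, and then tracking precisely how a coalition $Q$ can unfix $C'$ so that the three-term bound comes out clean. One has to be careful that the padding bits used to tune the probability are counted correctly (they should be outside any coalition, or contribute negligibly), and that negating $C_2$ (in the $\eta<0$ case) is compatible with the hypothesis $\E[C_2]<1-2\eps_1$, which is exactly the slack that makes the construction possible. The bias and depth bookkeeping is routine once the gadget is pinned down.
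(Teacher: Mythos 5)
Your proposal differs substantially from the paper's construction and has a genuine gap. The paper does not build a ``rare event $D$'' out of $C_2$ at all. Instead it uses $C_2$ as a \emph{selector} in a multiplexer:
\[
C'(x,y):=(C_{1}(x)\wedge C_{2}(y))\vee(D(x)\wedge\neg C_{2}(y)),
\]
where $D$ is a fresh explicit DNF on the $x$-block, unrelated to $C_2$, chosen with $\E[D]=\tfrac{1}{2}+\delta-\tfrac{\delta}{\mu}$ for $\delta:=\E[C_1]-1/2$ and $\mu:=1-\E[C_2]$. The hypothesis $\mu>2|\delta|$ guarantees this value lies in $(0,1)$ and handles both signs of $\delta$ in one formula, so there is no case split. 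The mean computation is linear, $\E[C']=(1-\mu)\E[C_1]+\mu\,\E[D]=1/2$, and the influence bound is a clean conditioning argument: after fixing $\overline{Q}$, either $C_2$ is unfixed (probability $\le I_{Q_2}(C_2)$), or $C_2$ is fixed to $1$ and $C'$ reduces to $C_1$ (contributing $\le I_{Q_1}(C_1)$), or $C_2$ is fixed to $0$, which happens with probability $\le \P[C_2=0]$, and then we simply charge $1$ for $D$ being unfixed. This is exactly where the $\P[C_2=0]$ term comes from: $D$ is allowed to have terrible resilience because it is only ``consulted'' when $C_2$ outputs $0$, which is rare.

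Two concrete problems with your route. First, the target acceptance probability $q=2\eta/(1+2\eta)$ is a ratio of dyadic rationals and is in general \emph{not} dyadic, so no boolean circuit on a finite domain can have $\E[D]=q$ exactly, no matter how you pad; the ``AND with fresh fixed/independent bits'' gadget can only scale $\P[C_2=0]$ by powers of $2$, which misses $q$. The paper avoids this by decoupling: $D$ is a free-standing DNF whose bias is tuned to a dyadic value (and the paper gives the explicit DNF achieving any $\E[D]=k/2^n$), while $C_2$ contributes only through $\mu=1-\E[C_2]$, a quantity already determined by the given circuit. Second, your gadget needs fresh bits, but the lemma fixes $C'$ on $2n$ bits with $C_1,C_2$ each on $n$ bits, and the bound must hold for \emph{all} $Q\subseteq[2n]$; your remark that the padding bits ``should be outside any coalition'' contradicts that quantifier. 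The paper needs no fresh bits because $D$ reuses the $x$-block and its unfixing is charged to $\P[C_2=0]$ rather than to $D$'s own (unconstrained) resilience. In short: you identified the right goal (use $C_2$'s slack $\mu>2\epsilon_1$ to exactly rebalance $C_1$) but missed the multiplexer structure that makes both the exactness and the influence accounting go through.
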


\begin{proof}[Proof of Theorem \ref{thm:match-kkl}]
 Let $C_{1},C_{2}$ be read-once $OR_{v}$-$AND_{w},$ $OR_{v'}$-$AND_{w'}$
circuits where $vw,v'w'=cn$. First we set $w,v$ so that $v=\lceil2^{w}\ln2\rceil$.
Thus by the inequality $e^{-x/(1-x)}\leq1-x\leq e^{-x}$, 

\[
\left|\E[C_{1}]-1/2\right|\leq\frac{c_{1}\log n}{n}
\]
 for some fixed $c_{1}$. Similarly, we set $w',v'$ so that $v'=\lceil2^{w'}\ln\left(\frac{n}{3c_{1}\log n}\right)\rceil,w'$
which implies

\[
\left|\E[C_{2}]-\left(1-\frac{3c_{1}\log n}{n}\right)\right|\leq\frac{c\log^{3}}{n^{2}}.
\]
 Now for any $Q_{1},Q_{2}\subseteq[n]:|Q_{1}|=|Q_{2}|=1$ we have
\[
I_{Q_{1}}(C_{1})\leq\frac{c\log n}{n},
\]
\[
I_{Q_{2}}(C_{2})=(1-2^{-w'})^{v'-1}2^{-(w'-1)}\leq\frac{c\log n}{n}\cdot\frac{\log^{2}n}{n}=\frac{c\log^{3}n}{n^{2}}.
\]
We conclude by Lemma \ref{prop:al-trick}.
\end{proof}

\subsection{Proof of Lemma \ref{prop:al-trick}}

Let $\d=\E[C_{1}]-1/2$ and define $\mu$ so that $\E[C_{2}]=1-\mu$.
We define
\[
C'(x,y):=(C_{1}(x)\wedge C_{2}(y))\vee(D(x)\wedge\neg C_{2}(y))
\]
where $D(x):\zo^{n}\to\zo$ is an explicit DNF with $\E[D]=\frac{1}{2}+\d-\frac{\d}{\mu}\in[0,1]$.
The $\in$ follows since $\mu>2|\d|$ by hypothesis. We justify such
a $D$ in the end. Now,
\begin{align*}
\E[C'(x,y)] & =(1/2+\d)(1-\mu)+\E[D(x)]\cdot\mu=1/2.
\end{align*}
For any $Q\subseteq[2n]$, letting $Q_{1},Q_{2}$ denote the sets
in $x,y$ respectively, by a union bound we have $I_{Q}(C')\leq I_{Q_{1}}(C_{1})+I_{Q_{2}}(C_{2})+\P[C_{2}=0].$

To conclude, it remains to construct an explicit DNF $D$ such that
$\E[D]=2^{-n}\cdot k$ for any $k\in[2^{n}]$. Let us write the binary
representation of $k$ as $\alpha_{1}2^{n-1}+\dots+\alpha_{n}2^{0}$
where $\a_{i}\in\zo$. We construct $f$ by adding the term $\neg x_{1}\dots\neg x_{i-1}x_{i}$
if $\a_{i}=1$. The term $\neg x_{1}\dots\neg x_{i-1}x_{i}$ has $2^{n-i}$
inputs in its support. Furthermore, each term will be disjoint. Hence
$f$ attains the desired bias. $\hfill\qed$

\paragraph{Acknowledgments}

We are grateful to Raghu Meka for helpful discussions, providing clarifications
on \cite{Meka17}, and suggesting the use of tail bounds for expander
walks.

\bibliographystyle{alpha}
\bibliography{OmniBib}

\section{Deferred proofs}

\subsection{Proof of Fact \ref{fact:bias_independent}}

We use the inequalities $e^{-x(1+cx)}\leq e^{-\frac{x}{1-x}}\leq1-x\leq e^{-x}$
for $x\in(0,1)$, and $e^{x}\leq1+2x$ for $x\leq1$. First we lower
bound $bias(u,v,w,\sigma)$. We have
\[
(1-\sigma^{w})^{v}\leq e^{-v\sigma^{w}}\leq\ln2/u.
\]
Thus 
\[
bias(u,v,w,\sigma)\geq(1-\ln2/u)^{u}\geq e^{-\ln2(1+c/u)}\geq1/2-c/u.
\]
To upper bound $bias(u,v,w,\sigma)$, note

\[
(1-\sigma^{w})^{v}\geq e^{-\sigma^{w}v(1+c\sigma^{w})}\geq e^{-(\ln(u/\ln2)+\sigma^{w})(1+c\sigma^{w})}\geq(\ln2/u)(1-c\sigma^{w}).
\]
Thus 
\[
bias(u,v,w,\sigma)\le(1-(\ln2/u)(1-c\sigma^{w}))^{u}\leq e^{-\ln2(1-c\sigma^{w})}\leq1/2+c\sigma^{w}.
\]
From here we can conclude since $u^{-1}\leq v^{-1}\leq\sigma^{w}.$

\subsection{Proof of Claim \ref{claim:equivalent-definition}}

First suppose $\E_{i\in[u]}[\a^{F(G(i))}]\leq e^{c\b\a\mu}\leq1+c\b\a\mu.$
The last $\leq$ follows since $e^{x}\leq1+2x$ for $x\in[0,1]$.
Next, note that
\[
\E_{i\in[u]}[\a^{F(G(i))}\mathbf{1}_{F(G(i))\neq0}]=\E_{i\in[u]}[\a^{F(G(i))}]-\P_{i\in[u]}[F(G(i))=0].
\]
By Markov's inequality, $\P[F(G(i))=0]=1-\P[F(G(i))\geq1]\geq1-\E[F(G(i))]\geq1-c\b\a\mu$.
The last inequality follows by Jensen's inequality.

Now suppose $\E_{i\in[u]}[\a^{F(G(i))}\mathbf{1}_{F(G(i))\neq0}]\leq c\b\alpha\mu$.
Then since $1+x\leq e^{x}$, $\E_{i\in[u]}[\a^{F(G(i))}]\leq1+c\b\a\mu\leq e^{c\b\a\mu}$. 

\subsection{Proof of Fact \ref{fact:nonzero-to-zero}}

For the first inequality, let $\mu_{k}:=\E[f_{k}]$ for each $k\in[w]$.
Since $1+x\leq e^{x}$,
\[
\E[\a^{X}]=\prod_{k\in[w]}\E[\a^{X_{k}}]=\prod_{k\in[w]}(1+(\a-1)\mu_{k})\leq\prod_{k\in[w]}e^{(\a-1)\mu_{k}}=e^{(\a-1)\mu}.
\]
The second inequality follows by similar reasoning as in the proof
of Claim \ref{claim:equivalent-definition}.

\subsection{Proof of Proposition \ref{lem:janson}}

The statement follows by repeating the proof in \cite{AlonSpEr92}
(c.f. \cite{LSSUV21}) and avoiding the inequality $1+x\leq e^{x}$
in the end. We follow the presentation of \cite{LSSUV21} up until
equation (A.4), stated next. The probabilities below are over $D$,
and recall $L(C,i)=\sum_{\substack{i'\in[n]:i'<i,i'\sim i}
}\P[C_{i'}\wedge C_{i}],\Delta(C)=\sum_{i\in[n]}L(C,i)$. 
\begin{align*}
\P[C=0] & \leq\prod_{i\in[n]}\bigg(\P[C_{i}=0](1+2L(C,i))\bigg).
\end{align*}
Then by the AM-GM inequality,
\[
\prod_{i\in[n]}(1+2L(C,i))\leq\left(1+\frac{2\Delta(C)}{n}\right)^{n}\leq1+\sum_{\ell=1}^{n}\frac{2^{\ell}}{\ell!}\Delta(C)^{\ell}.
\]

\end{document}